\documentclass[11pt,a4paper,twoside]{article}
\usepackage{a4wide, amssymb, amsmath, amsthm, graphics, comment, xspace, enumerate}
\usepackage{graphicx,multirow}
\usepackage[a4paper,colorlinks=true,citecolor=blue,urlcolor=blue,linkcolor=blue
                   , bookmarksopen=true,unicode=true,pdffitwindow=true]{hyperref}
\usepackage[section]{algorithm}
\usepackage[noend]{algpseudocode}
\usepackage{caption}
\usepackage{tabularx}
\usepackage{enumitem, multirow}
\usepackage[usenames,dvipsnames]{color}
\usepackage{alltt}
\usepackage{color}
\usepackage{listings}
\usepackage{cite}

\usepackage{array, longtable}

\definecolor{string}{rgb}{0.7,0.0,0.0}
\definecolor{comment}{rgb}{0.13,0.54,0.13}
\definecolor{keyword}{rgb}{0.0,0.0,1.0}
\usepackage{amsmath,amsfonts,amssymb,subfigure}
\usepackage{tikz}
\usetikzlibrary{arrows}
\usetikzlibrary{decorations}
\usetikzlibrary{patterns}

\subfiguretopcaptrue
\tikzstyle{vtx}=[circle, inner sep= 0pt, minimum size= 1.2mm, fill]

\captionsetup{margin=35pt,font=small,format=hang,labelfont=bf}

\hypersetup{pdftitle={Efficient computation of trees with minimal atom-bond connectivity index}}

\pagestyle{myheadings}

\newtheorem{te}{Theorem}[section]
\newtheorem{pro}[te]{Proposition}
\newtheorem{de}{Definition}[section]

\newtheorem{co}[te]{Corollary}
\newtheorem{lemma}[te]{Lemma}

\newtheorem{conjecture}{Conjecture}[section]

\newcommand{\beq}{\begin{eqnarray}}
\newcommand{\eeq}{\end{eqnarray}}

\newcommand{\beqs}{\begin{eqnarray*}}
\newcommand{\eeqs}{\end{eqnarray*}}

\newcommand{\ABC}{{\rm ABC}}

\newcommand{\ds}{\displaystyle}
\allowdisplaybreaks

\begin{document}
\date{December 22, 2014}
\title{ {On structural properties of trees \\with minimal atom-bond connectivity index  II}}
\maketitle
\begin{center}
{\large \bf  Darko Dimitrov}
\end{center}
\baselineskip=0.20in
\begin{center}
{\it Hochschule f\"ur Technik und Wirtschaft Berlin,
\\ Wilhelminenhofstra{\ss}e 75A, D--12459 Berlin, Germany} 
\\E-mail: {\tt darko.dimitrov11@gmail.com}
\end{center}

\baselineskip=0.20in
\vspace{6mm}
\begin{abstract}
The {\em atom-bond connectivity  (ABC) index}  is a degree-based graph topological index
that found chemical applications.
The problem of complete characterization of trees with minimal $ABC$ index
is still an open problem.
In~\cite{d-sptmabci-2014}, it was shown that trees with minimal ABC index do not
contain so-called {\em $B_k$-branches}, with $k \geq 5$,
and that they do not have more than four $B_4$-branches.
Our main results here reveal that the number of $B_1$ and $B_2$-branches
are also bounded from above by small fixed constants.
Namely, we show that trees with minimal ABC index 
do not contain more than four $B_1$-branches and more than eleven $B_2$-branches.
\end{abstract}
%
%
%
\medskip
%
%
%
%
\section[Introduction]{Introduction}
 Let $G=(V, E)$ be a simple undirected graph of order $n=|V|$ and size $m=|E|$.
For $v \in V(G)$, the degree of $v$, denoted by $d(v)$, is the number of edges incident
to $v$.
For an edge $uv$ in $G$, let
\beq \label{eqn:000}
f(d(u), d(v))=\sqrt{\frac{d(u) +d(v)-2}{d(u)d(v)}}.
\eeq
Then, the {\em atom-bond connectivity (ABC) index} of $G$ is defined as
\beq \label{eqn:001}
\ABC(G)=\sum_{uv\in E(G)}f(d(u), d(v)), \nonumber
\eeq
 The ABC index was introduced  in 1998 by Estrada, Torres, Rodr{\' i}guez and Gutman \cite{etrg-abc-98}, who
showed that it can be a valuable predictive tool in the study of the heat of formation in alikeness.
Ten years later Estrada~\cite{e-abceba-08} elaborated a novel quantum-theory-like justification for this topological index.
After that revelation, the interest of ABC-index has grown rapidly.
Additionaly, the physico-chemical applicability of the ABC index was confirmed and extended in several studies
\cite{as-abciic-10,cll-abcbsp-13, dt-cbfgaiabci-10, gg-nwabci-10, gtrm-abcica-12, k-abcibsfc-12, yxc-abcbsp-11}.

As a new and well motivated graph invariant, the ABC index has attracted a lot of interest in the last several years both in 
mathematical and chemical research communities and numerous results and structural properties of ABC index  
were established~\cite{cg-eabcig-11, cg-abccbg-12, clg-subabcig-12, d-abcig-10, d-sptmabci-2014, dgf-abci-11, dgf-abci-12, ftvzag-siabcigo-2011,
fgv-abcit-09, ghl-srabcig-11, gly-abctgds-12, gf-tsabci-12, gfi-ntmabci-12, gzx-rabchi-2014, llgw-pcgctmabci-13, p-rubabci-14, vh-mabcict-2012, xz-etfdsabci-2012, 
xzd-abcicg-2011, xzd-frabcit-2010}.

The fact that adding an edge in a graph strictly increases its ABC index~\cite{dgf-abci-11} 
(or equivalently that deleting an edge in a graph strictly decreases its ABC index~\cite{cg-eabcig-11})  
has  the following two immediate consequences.

\begin{co}
Among all connected  graphs with $n$ vertices, the complete graph $K_n$ has maximal value of ABC index.
\end{co}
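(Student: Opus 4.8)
The plan is to derive this immediately from the quoted monotonicity principle: adding an edge to a graph strictly increases its ABC index~\cite{dgf-abci-11}. First I would fix an arbitrary connected graph $G=(V,E)$ on $n$ vertices with $G \neq K_n$, and let $E(K_n)\setminus E$ be the set of its $r\ge 1$ non-edges. Enumerate these missing edges as $e_1,\dots,e_r$ and form the chain of graphs $G=G_0 \subset G_1 \subset \cdots \subset G_r = K_n$, where $G_i = G_{i-1}+e_i$. Every $G_i$ is a simple graph (we only ever add a pair of vertices that is not already an edge) and is connected (being a supergraph of the connected graph $G$), so the hypothesis applies at each step and yields $\ABC(G_{i-1}) < \ABC(G_i)$.

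Chaining these strict inequalities gives $\ABC(G) = \ABC(G_0) < \ABC(G_r) = \ABC(K_n)$; formally this is a short induction on $r$, the number of non-edges. Since $K_n$ itself is connected and the displayed inequality holds for every connected $G\neq K_n$ on $n$ vertices, $K_n$ is in fact the \emph{unique} maximiser of the ABC index over this class, which is (a slight strengthening of) the assertion. For the degenerate orders $n=1,2$ the statement is vacuous or trivial, so one may assume $n\ge 3$.

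There is essentially no obstacle here: the corollary is a formal consequence of the cited lemma, and equivalently of the dual statement that deleting an edge strictly decreases the ABC index~\cite{cg-eabcig-11}. The only point deserving a word of care is that the intermediate graphs $G_i$ stay inside the class under consideration, i.e.\ that they remain simple and connected; both are immediate because edges are only added, never removed, and no multi-edges are created. In particular no estimate on the function $f$ from~(\ref{eqn:000}) is needed, since all the analytic content is already absorbed into the quoted monotonicity result.
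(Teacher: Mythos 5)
Your argument is correct and is exactly the route the paper intends: the corollary is stated as an immediate consequence of the edge-addition monotonicity result of~\cite{dgf-abci-11}, and your chain $G=G_0\subset G_1\subset\cdots\subset G_r=K_n$ of simple connected supergraphs just makes that implicit induction explicit. Nothing is missing; the observation that the intermediate graphs remain simple and connected is the only point of care, and you have addressed it.
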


\begin{co}
Among all connected  graphs with $n$ vertices, the graph with minimal ABC index is a tree.
\end{co}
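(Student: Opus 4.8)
The plan is to argue by contradiction, using only the monotonicity fact quoted just above: deleting an edge from a graph strictly decreases its ABC index \cite{cg-eabcig-11}. Since there are only finitely many connected graphs on a fixed vertex set of size $n$, the minimum of the ABC index over this family is attained; let $G$ be such a minimizer.

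First I would suppose, for contradiction, that $G$ is not a tree. A connected graph on $n$ vertices that is not a tree has at least $n$ edges, and hence contains a cycle $C$. Choosing any edge $e$ lying on $C$, the standard observation that an edge on a cycle is not a bridge shows that $G-e$ is again connected; it obviously still has $n$ vertices, so it belongs to the same family over which $G$ was assumed to be optimal.

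Now applying the quoted monotonicity result to the pair $G$ and $G-e$ gives $\ABC(G-e) < \ABC(G)$, contradicting the minimality of $G$ among connected $n$-vertex graphs. Hence the minimizer $G$ must be acyclic, i.e., a tree, which is the assertion. Equivalently, one can phrase the argument constructively: starting from any connected $G$ that is not a tree, repeatedly delete an edge lying on some cycle; each deletion keeps the graph connected and strictly decreases the ABC index, and the process terminates at a spanning tree of $G$ whose ABC index is strictly smaller than that of $G$.

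There is essentially no obstacle here — the statement is an immediate consequence of the cited edge-deletion inequality. The only point deserving an explicit word is the elementary graph-theoretic fact that removing a cycle-edge preserves connectivity; everything else is a direct invocation of the quoted result, exactly parallel to the derivation of the preceding corollary about $K_n$ from the edge-addition inequality.
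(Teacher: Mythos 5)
Your argument is correct and is exactly the reasoning the paper intends when it calls this an ``immediate consequence'' of the edge-deletion inequality: remove a cycle edge to stay connected while strictly decreasing the ABC index. The paper gives no further detail, so your write-up simply makes the implicit argument explicit.
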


Although it is fairly easy to show that the star graph $S_n$
is a tree with maximal ABC index~\cite{fgv-abcit-09}, despite many attempts in the last years, it is still an open problem
the characterization of trees with minimal ABC index (also refereed as  minimal-ABC trees). 
The aim of this research is to make a step forward towards the full characterizations of minimal-ABC trees.

In the sequel, we present an additional notation that will be used in the rest of the paper.
A tree is called a {\em rooted tree} if one vertex has been designated the {\em root}.
In a rooted tree, the  {\em parent} of a vertex is the vertex connected to it on the path to the root; every vertex except the root has a unique parent.
A vertex is a parent of a subtree, if the subtree is attached to the vertex.
A  {\em child} of a vertex $v$ is a vertex of which $v$ is the parent.
A vertex of degree one is a {\it pendant vertex}.

For the next two definitions, we adopt the notation from \cite{gfahsz-abcic-2013}.
Let $S_k=v_0 \, v_1 \dots v_k, v_{k+1}$,  $k \leq n-3$, be a sequence of vertices of a graph $G$
with  $d(v_0)>2$ and $d(v_i)=2$,  $i=1,\dots k-1$.
If $d(v_k)=1$, then $S_k$ is a {\it pendant path} of length $k+1$.
If $d(v_k) > 2$, then $S_k$ is an {\it internal path} of length $k$.




In Section~\ref{sec:known} we give an overview of already known structural properties of the minimal-ABC trees.
In Sections~\ref{sec:B_1} and \ref{sec:B_2} 
we present some results  and bounds on the number of $B_1$ and $B_2$-branches, respectively,
that may occur in minimal-ABC trees.
Conclusion and open problems are presented in Section~\ref{sec:Conclusion}.

%
\section[Preliminaries and known structural properties of the minimal-ABC trees]
{Preliminaries and known structural properties of \\ the minimal-ABC trees}\label{sec:known}



A thorough overview of the known structural properties of the minimal-ABC trees was given in~\cite{gfahsz-abcic-2013}.
In addition to the results mentioned  there,
we present here also the recently obtained related results that we are aware of.

To determine the minimal-ABC tress of order less than $10$ is a trivial task, and those
trees are depicted in Figure~\ref{fig-all-till-9}. To simplify the exposition in the rest of the paper, we assume
that the trees of interest are of order at least $10$.

\begin{figure}[h]
\begin{center}
\includegraphics[scale=0.75]{./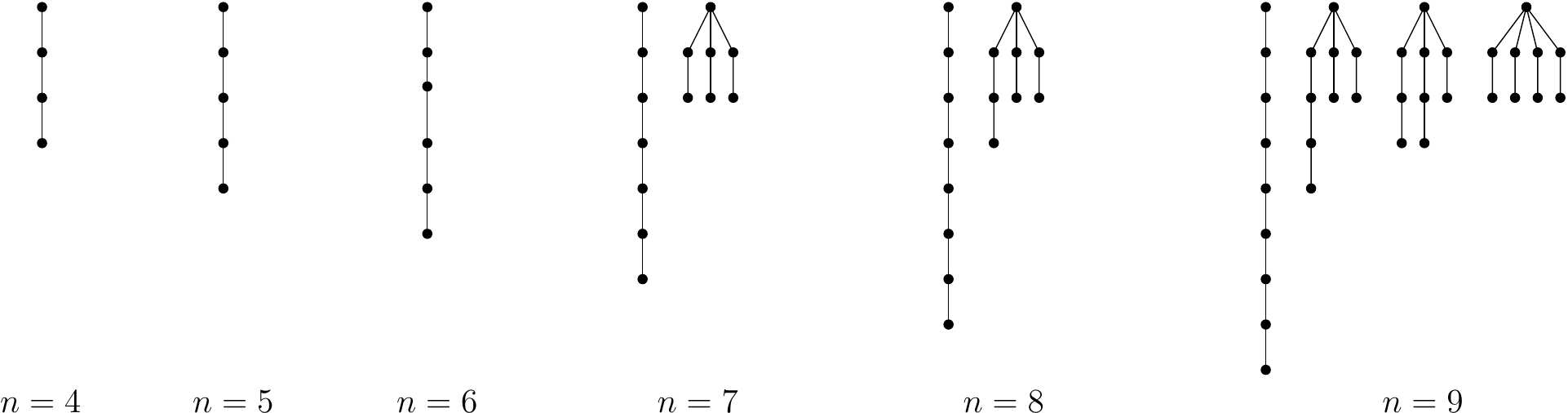}
\caption{Minimal-ABC trees of order $n$, $4 \leq n  \leq 9$.}
\label{fig-all-till-9}
\end{center}
\end{figure}

\smallskip
\noindent
In \cite{gfi-ntmabci-12}, Gutman, Furtula and Ivanovi{\' c}  obtained the following
results.

\begin{te}\label{thm-GFI-10}
The $n$-vertex tree with minimal ABC-index does not
contain internal paths of any length $k \geq 1$.
\end{te}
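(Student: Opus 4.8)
The plan is a proof by contradiction via local surgery. Suppose $T$ is an $n$-vertex tree of minimum ABC index that nonetheless contains an internal path $P\colon v_0v_1\cdots v_k$, so that $d(v_0)=a\ge 3$, $d(v_k)=b\ge 3$ and $d(v_i)=2$ for $0<i<k$; the aim is to produce another $n$-vertex tree $T'$ with $\ABC(T')<\ABC(T)$.

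All computations hinge on three elementary facts about the weight $f$ of~\eqref{eqn:000}: (i)~$f(x,2)=1/\sqrt2$ for every $x$; (ii)~for fixed $y\ge3$ the map $x\mapsto f(x,y)$ is strictly decreasing, while $x\mapsto f(x,1)=\sqrt{(x-1)/x}$ is strictly increasing; and (iii)~$f(x,y)<1/\sqrt2$ exactly when $(x-2)(y-2)>0$, so an edge joining two vertices of degree $\ge3$ is strictly \emph{cheaper} than any edge meeting a degree-$2$ vertex. Fact~(i) makes $P$ weight-transparent when $k\ge2$: every edge of $P$ meets a degree-$2$ vertex — including the end edges, since $f(a,2)=f(2,b)=1/\sqrt2$ — so any surgery that keeps the path-neighbours of the interior vertices at degree $2$ leaves $P$'s contribution unchanged, and $\ABC(T')-\ABC(T)$ reduces to the handful of edges incident with $v_0$ and $v_k$ off $P$.

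I would then apply three moves, chosen according to the local picture at the endpoints. \emph{Collapse}: writing $x_1,\dots,x_{a-1}$ and $w_1,\dots,w_{b-1}$ for the neighbours of $v_0$ and of $v_k$ off $P$, delete all edges $v_0x_j$ and add all edges $v_kx_j$; for $k\ge2$ this turns $P$ into a pendant path at $v_k$ and, by transparency,
\[
\ABC(T')-\ABC(T)=\sum_{j}\bigl(f(a+b-1,d(x_j))-f(a,d(x_j))\bigr)+\sum_{i}\bigl(f(a+b-1,d(w_i))-f(b,d(w_i))\bigr),
\]
each summand negative for a neighbour of degree $\ge3$, zero for degree $2$, and positive only for a pendant neighbour — so \emph{Collapse} wins whenever an endpoint has a neighbour of degree $\ge3$ and no pendant neighbour. \emph{Join}: delete $v_0v_1$ and add $v_0v_k$; for $k\ge2$ and when $v_k$ has no pendant neighbour this replaces the $1/\sqrt2$-edge $v_0v_1$ by the cheaper edge of weight $f(a,b+1)<1/\sqrt2$ and nothing else changes, disposing of the case in which all off-path neighbours of $v_0$ and $v_k$ have degree $2$. \emph{Relocate}: if, say, $v_0$ has a pendant neighbour $p$, delete $v_0p$ and re-attach $p$ as the new end of a pendant path on the $v_k$-side of $P$; transparency again collapses the change to a short expression whose negativity follows from~(ii) and a finite check, independently of $k$. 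These three moves cover every configuration for $k\ge2$, and one iterates each until $P$ is destroyed.

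The work, I expect, concentrates in two places. First, the finite inequality verifications underlying \emph{Relocate}: one must establish a strict ABC decrease uniformly over $a,b\ge3$ and over the possible degrees of the receiving pendant vertex's neighbour. Second — and this is the harder point — the case $k=1$, where $P$ is the single edge $v_0v_k$: its weight $f(a,b)$ is not transparent (indeed $f(a,b)<1/\sqrt2$, so the edge is already cheap), \emph{Join} is meaningless, and \emph{Collapse} acquires an extra positive term $f(1,a+b-1)-f(a,b)$. When an endpoint has a pendant neighbour, \emph{Relocate} still applies; but the degenerate configurations in which every off-path neighbour of both $v_0$ and $v_k$ has degree exactly $2$ appear to resist the purely local moves above (for $a=b$ every such move is weight-neutral or increasing), and eliminating them seems to require a genuinely non-local comparison or an auxiliary structural bound (e.g.\ on pendant-path lengths). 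Pinning that down cleanly is the step I expect to be the main obstacle.
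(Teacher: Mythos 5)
First, a point of context: the present paper does not prove this theorem at all --- it is quoted from the literature (Gutman--Furtula--Ivanovi\'c) --- so your attempt has to be judged against the known argument rather than anything in this text.

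There are two genuine gaps. The first is a misreading of ``length''. An internal path of length $k\ge 1$ here has $k$ vertices of degree $2$ lying strictly between two vertices of degree $\ge 3$; the theorem asserts that no degree-$2$ vertex separates two branching vertices. The configuration you isolate as the ``hard case $k=1$'' --- a bare edge $v_0v_k$ with $d(v_0),d(v_k)\ge 3$ and no interior vertex --- is not an internal path in this sense and cannot be excluded: by Corollary~\ref{co-GFI-10} the vertices of degree $>2$ in a minimal-ABC tree induce a subtree, so adjacent branching vertices must occur. The reason that case ``resists the purely local moves'' is that the claim is false for it; it is simply not part of what has to be proved.

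The second gap concerns the cases that do have to be proved: your \emph{Join} move is miscounted in the most important instance, $P=v_0v_1v_2$ with a single interior degree-$2$ vertex. Deleting $v_0v_1$ and adding $v_0v_2$ makes $v_1$ a pendant neighbour of $v_2$, so besides trading the weight $1/\sqrt2$ of $v_0v_1$ for $f(a,b+1)$ it also changes the weight of $v_1v_2$ from $f(2,b)=1/\sqrt2$ to $f(1,b+1)=\sqrt{b/(b+1)}>1/\sqrt2$; it is not true that ``nothing else changes''. For $a=b=3$ the net change is $\sqrt{5/12}+\sqrt{3/4}-\sqrt2\approx +0.097>0$, so the move \emph{increases} the index. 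Since \emph{Collapse} is weight-neutral exactly when every off-path neighbour of $v_0$ and $v_k$ has degree $2$, that configuration is left uncovered. The standard argument handles it differently and non-locally: excise all interior degree-$2$ vertices, joining $v_0$ to $v_{k+1}$ directly by an edge of weight $f(a,b)<1/\sqrt2$ (here the hypothesis $a,b\ge 3$ is what makes the new edge cheap), and re-insert the removed vertices by subdividing a pendant path of length $\ge 2$, where each costs exactly $f(2,2)=1/\sqrt2$; the total change is $f(a,b)-1/\sqrt2<0$, with a small side argument guaranteeing a suitable pendant path exists. You would need this relocation idea (or an equivalent) to close the all-degree-$2$-neighbours case.
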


\begin{te}\label{thm-GFI-20}
The $n$-vertex tree with minimal ABC-index does not
contain pendant paths of length $k \geq 4$.
\end{te}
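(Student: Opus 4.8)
The plan is to argue by contradiction. Suppose $T$ is a tree of order $n\ge 10$ with minimal ABC index containing a pendant path $P=u\,v_1v_2\cdots v_k$ of length $k\ge 4$, where $d(u)\ge 3$, $d(v_i)=2$ for $1\le i\le k-1$, and $v_k$ is pendant. Everything rests on one elementary remark: $f(a,2)=1/\sqrt2$ for every $a\ge 1$ and $f(2,1)=1/\sqrt2$, so \emph{every} edge of $T$ incident with a vertex of degree $2$ contributes exactly $1/\sqrt2$ to $\ABC(T)$. Thus in each of the local surgeries below I only need to track the handful of edges whose value truly changes, which in practice is only the edge $uv_1$ (or an edge $uz$ or $ux$) whose value moves because $d(u)$ or $d(v_1)$ changes. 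I will also use Theorem~\ref{thm-GFI-10}: $T$ has no internal paths, and hence every degree-$2$ vertex of $T$ lies on a pendant path.

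First I would dispose of $k\ge 5$. Form $T'$ by deleting the edge $v_3v_4$ and adding the edge $v_1v_4$, i.e.\ relocate the pendant subpath $v_4\cdots v_k$ so that it hangs from $v_1$ instead of from $v_3$; this is again a tree on $V(T)$. Only $d(v_1)$ (from $2$ to $3$) and $d(v_3)$ (from $2$ to $1$) change, and by the remark a short computation gives $\ABC(T')-\ABC(T)=f(d(u),3)-1/\sqrt2$. Since $2(d(u)+1)<3\,d(u)$ for $d(u)\ge 3$, one has $f(d(u),3)<1/\sqrt2$, so $\ABC(T')<\ABC(T)$ — a contradiction. Hence from now on $k=4$.

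For $k=4$ I would split on the neighbourhood of $u$; write $d:=d(u)\ge 3$, so $u$ has $d-1\ge 2$ neighbours besides $v_1$.
\begin{itemize}
\item[(i)] If $u$ has a pendant neighbour $z$: let $T'$ be obtained by deleting $v_3v_4$ and adding $zv_4$ (so $v_4$ now hangs from $z$). A short computation gives $\ABC(T')-\ABC(T)=1/\sqrt2-f(d,1)$, which is negative because $f(d,1)=\sqrt{(d-1)/d}>1/\sqrt2$ for $d\ge 3$. Contradiction.
\item[(ii)] If $u$ has no pendant neighbour but some neighbour $x^{\ast}\neq v_1$ with $d(x^{\ast})\ge 3$: let $T'$ be obtained by deleting $v_2v_3$ and adding $uv_3$ (splitting $P$ into two pendant paths of length $2$ at $u$). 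Then $d(u)$ grows to $d+1$ and $\ABC(T')-\ABC(T)=\sum\big(f(d+1,d(x))-f(d,d(x))\big)$, the sum over the neighbours $x\neq v_1$ of $u$. Every summand is $\le 0$ ($=0$ if $d(x)=2$ since $f(\,\cdot\,,2)\equiv 1/\sqrt2$, and $<0$ if $d(x)\ge 3$ since $f(\,\cdot\,,c)$ is strictly decreasing for $c\ge 3$), and the term for $x^{\ast}$ is $<0$; hence $\ABC(T')<\ABC(T)$. Contradiction.
\item[(iii)] Otherwise every neighbour of $u$ has degree exactly $2$ and thus, by Theorem~\ref{thm-GFI-10}, starts a pendant path; so $T$ is a spider with centre $u$, all legs of length between $2$ and $4$ (length $\ge 2$ because $u$ has no pendant neighbour, length $\le 4$ by the first step). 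Then every edge of $T$ meets a degree-$2$ vertex, whence $\ABC(T)=(n-1)/\sqrt2$. But the tree $T''$ consisting of two adjacent degree-$3$ vertices, each carrying two pendant paths of length $\ge 2$ — which exists because $n-2\ge 8$ — has $\ABC(T'')=f(3,3)+(n-2)/\sqrt2=2/3+(n-2)/\sqrt2<(n-1)/\sqrt2$ since $2/3<1/\sqrt2$. So $T$ is not minimal — contradiction.
\end{itemize}

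All the sign computations are routine, reducing to the two inequalities $f(d(u),3)<1/\sqrt2$ and $f(d(u),1)>1/\sqrt2$, valid for $d(u)\ge 3$. The real obstacle is the boundary value $k=4$: the move that works for $k\ge 5$ fails here (relocating the single pendant $v_4$ to $v_1$ changes $\ABC$ by $f(d(u),3)+\sqrt{2/3}-\sqrt2$, which is \emph{positive} for small $d(u)$), so $k=4$ has to be treated through the fine structure around $u$, and the spider sub-case further requires the non-local observation that a spider with short legs is beaten by a genuinely different tree — precisely where $n\ge 10$ is needed.
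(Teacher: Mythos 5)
The paper does not prove this statement: Theorem~\ref{thm-GFI-20} is imported verbatim from Gutman, Furtula and Ivanovi\'c \cite{gfi-ntmabci-12}, so there is no in-paper proof to compare yours against. Judged on its own, your argument is correct and self-contained modulo Theorem~\ref{thm-GFI-10}, which you invoke only in case (iii) and which is a separately stated result, so there is no circularity. I checked the bookkeeping: the observation that every edge meeting a degree-$2$ vertex contributes exactly $1/\sqrt2$ does reduce each surgery to the claimed one or two changing terms; the $k\ge 5$ relocation gives exactly $f(d(u),3)-1/\sqrt2<0$; case (i) gives $1/\sqrt2-f(d,1)<0$; in case (ii) the hypothesis that $u$ has no pendant neighbour is genuinely needed (a pendant neighbour would contribute $f(d+1,1)-f(d,1)>0$ to the sum, since $f(\cdot,1)$ is increasing), and you have arranged the cases so that this situation is already disposed of in (i); and the spider case (iii) correctly uses the standing assumption $n\ge 10$ so that the comparison tree $T''$ with $\ABC(T'')=2/3+(n-2)/\sqrt2<(n-1)/\sqrt2$ exists. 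Your closing remark that the naive relocation fails at $k=4$ for small $d(u)$ is also accurate ($f(3,3)+\sqrt{2/3}-\sqrt2>0$), which justifies the finer case split. The only caveat worth recording is that the theorem as stated in the literature has no order restriction, whereas your case (iii) needs $n\ge 10$; within this paper that is harmless because of the explicit standing assumption and Figure~\ref{fig-all-till-9}, but a fully general proof would have to treat the small spiders separately.
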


\noindent
An immediate, but important, consequence of Theorem~\ref{thm-GFI-10} is the next corollary.

\begin{co}\label{co-GFI-10}
 Let $T$ be a tree with minimal ABC index. Then the subgraph induced by the vertices of $T$ whose
degrees are greater than two is also a tree.
\end{co}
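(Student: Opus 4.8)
The plan is to split the statement into the trivial part and the substantive part. Write $W=\{v\in V(T):d(v)>2\}$ for the set of branching vertices and $H=T[W]$ for the induced subgraph. Since $H$ is a subgraph of the tree $T$, and a tree has no cycle, $H$ has no cycle either; hence $H$ is a forest, with no extremal hypothesis needed. If $|W|\le 1$ there is nothing to prove, so assume $|W|\ge 2$. It then suffices to establish that $H$ is \emph{connected}, for a connected acyclic graph is a tree. This is the only place where Theorem~\ref{thm-GFI-10} will enter.

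For connectivity I would argue by contradiction, using the uniqueness of paths in a tree. Suppose $H$ is disconnected, and among all pairs of branching vertices lying in distinct components of $H$ choose a pair $u,w$ whose connecting path $P$ in $T$ is \emph{shortest}. First, $u$ and $w$ cannot be adjacent in $T$: the edge $uw$ would itself belong to $H$ and put $u,w$ in the same component. Hence $P$ has at least one interior vertex. Next I would show that every interior vertex of $P$ has degree exactly two. An interior vertex of a path already has two neighbours on $P$, so its degree in $T$ is at least two; suppose some interior vertex $x$ had $d(x)>2$. Then $x\in W$, and the two subpaths of $P$ joining $u$ to $x$ and $x$ to $w$ are strictly shorter than $P$, so by minimality both $\{u,x\}$ and $\{x,w\}$ lie in a single component of $H$, forcing $u$ and $w$ into one component, a contradiction. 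Therefore all interior vertices of $P$ have degree two.

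Consequently $P=v_0v_1\cdots v_k$ with $d(v_0)>2$, $d(v_k)>2$, $d(v_i)=2$ for $0<i<k$, and $k\ge 2$; that is, $P$ is precisely an internal path of positive length inside a minimal-$\ABC$ tree. This contradicts Theorem~\ref{thm-GFI-10}. Hence no such pair $u,w$ exists, $H$ is connected, and, being also acyclic, $H$ is a tree, which proves the corollary.

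The routine parts are the acyclicity observation and the degree bookkeeping for interior path vertices. The step I expect to be the real (though mild) obstacle is arranging the contradiction so that the offending path is genuinely a \emph{prohibited} internal path rather than a bare edge between two branching vertices: the extremal choice of $P$ together with the non-adjacency observation is exactly what guarantees at least one interior degree-two vertex, i.e.\ $k\ge 2$, so that Theorem~\ref{thm-GFI-10} applies. Once that is secured, everything else is standard reasoning about paths in trees.
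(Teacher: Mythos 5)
Your proof is correct and is precisely the argument the paper leaves implicit: the corollary is stated there as an immediate consequence of Theorem~\ref{thm-GFI-10} with no written proof, and your write-up supplies exactly the intended reasoning (acyclicity of $H=T[W]$ is free since $H$ is a subgraph of a tree, and connectivity follows because a shortest $T$-path between branching vertices in distinct components of $H$ would have all interior vertices of degree two, i.e.\ would be a forbidden internal path of length $\geq 1$). The extremal choice of the pair $u,w$ is the right way to make the ``immediate'' step rigorous, so there is nothing to correct.
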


\noindent
An improvement of Theorem~\ref{thm-GFI-20} is the following result by Lin, Lin, Gao and Wu~\cite{llgw-pcgctmabci-13}.

\begin{te}\label{thm-LG-10}
Each pendant vertex of an $n$-vertex tree with minimal
ABC index belongs to a pendant path of length $k $, $2 \leq k \leq 3$.
\end{te}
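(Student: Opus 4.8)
The bound $k\le 3$ is already Theorem~\ref{thm-GFI-20}, so two things remain: that every pendant vertex of a minimal-ABC tree $T$ lies on a pendant path, and that this path has length at least $2$. The first is immediate: starting from a pendant vertex $u$ and walking into $T$ along vertices of degree $2$, Theorem~\ref{thm-GFI-10} forbids stopping at a degree-$2$ vertex, and one cannot reach a second pendant vertex unless $T=P_n$, which for $n\ge 10$ is not minimal (one checks that $\ABC(P_n)$ exceeds the value of a suitable caterpillar whose internal vertices all have degree $3$). Hence the walk ends at a vertex of degree $\ge 3$ and $u$ lies on a pendant path, of length $1$, $2$, or $3$. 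So the whole statement reduces to: \emph{a minimal-ABC tree contains no pendant path of length $1$}.

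Assume the contrary --- in a minimal-ABC tree $T$, $n\ge 10$, some pendant vertex $u$ is adjacent to a vertex $v$ of degree $d\ge 3$ --- and write $y_1,\dots,y_{d-1}$ for the other neighbours of $v$. The plan is to exhibit an $n$-vertex tree $T'$ with $\ABC(T')<\ABC(T)$, contradicting minimality, by \emph{relocating} $u$: deleting the edge $uv$ removes the term $f(d,1)=\sqrt{(d-1)/d}$ and lowers the $v$-endpoint degree of every edge $vy_i$ from $d$ to $d-1$, while reattaching $u$ elsewhere costs a controlled amount locally. All estimates rest on three elementary facts: $f(d,2)=\tfrac1{\sqrt2}$ for every $d$; $f(\cdot,1)$ is increasing, with $f(3,1)=\sqrt{2/3}>\tfrac1{\sqrt2}$; and $t\mapsto f(d,t)$ is decreasing and convex on $[2,\infty)$, whence $f(d-1,t)-f(d,t)\le \tfrac1{\sqrt{d-1}}-\tfrac1{\sqrt d}$ for all $t\ge 1$ (of order $d^{-3/2}$, and $=0$ at $t=2$).

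The reattachment site depends on the local structure of $v$, so the argument splits into cases. If $v$ has a second pendant neighbour, merge $u$ onto it, turning the two pendant edges into a pendant path of length $2$: then, with the sum taken over the $d-2$ remaining $y_i$,
\[
\ABC(T')-\ABC(T)=-2f(d,1)+\sqrt2+\sum\nolimits_i\bigl(f(d-1,d(y_i))-f(d,d(y_i))\bigr),
\]
which the facts above make negative for every $d\ge 3$. If $v$ instead hangs a pendant path of length $2$ or $3$, $u$ is appended to (resp.\ inserted into) that path; and if \emph{every} pendant path of $T$ has length $1$, then Theorem~\ref{thm-GFI-10} leaves $T$ with no degree-$2$ vertex at all, so by Corollary~\ref{co-GFI-10} the degree-$\ge 3$ vertices induce a tree, a leaf of which is a branching vertex with at least two pendant neighbours --- back to the first case. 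In each case the inequality $\ABC(T')<\ABC(T)$ comes down to finitely many comparisons among values of $f$.

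The real obstacle is that the bare relocation need not lower $\ABC$: when $d$ and several of the $d(y_i)$ are large, the saving $\sum_i\bigl(f(d-1,d(y_i))-f(d,d(y_i))\bigr)$ from unloading $v$ can outweigh the local cost at the reattachment site, so a move that helps for small $d$ may hurt for large $d$. I expect to overcome this by first reducing to an extremal instance (for example one with $d(v)$ minimal among all length-$1$ pendant paths), and, in the surviving high-degree configurations, by using a transformation that simultaneously modifies a pendant path incident with $v$ --- of length $\le 3$ by Theorem~\ref{thm-GFI-20} --- set up so that the $d$-dependent contributions cancel or appear with a favourable sign. Carrying these estimates through the few remaining subcases, with nothing past the monotonicity and convexity of $f$ above, is the technical heart of the proof.
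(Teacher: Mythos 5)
This statement is quoted in the paper as a known result of Lin, Lin, Gao and Wu~\cite{llgw-pcgctmabci-13}; the paper itself gives no proof, so there is nothing internal to compare your route against. Judged on its own terms, your write-up is a plan rather than a proof, and the plan has a genuine hole exactly where the theorem has content. The easy parts are fine: $k\le 3$ is Theorem~\ref{thm-GFI-20}, the existence of a pendant path through every leaf follows from Theorem~\ref{thm-GFI-10}, and your first reattachment (two pendant neighbours of $v$ merged into a length-$2$ path) does give a strictly negative change for every $d\ge 3$, since $-2\sqrt{(d-1)/d}+\sqrt2\le -0.219$ while $(d-2)\bigl(1/\sqrt{d-1}-1/\sqrt d\bigr)\le 0.159$.

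The gap is twofold. First, your case split is not exhaustive: $v$ may have exactly one pendant neighbour and all other neighbours of degree $\ge 3$, while the tree still contains degree-$2$ vertices elsewhere; this configuration is covered neither by case (a), nor (b), nor by your case (c), which assumes \emph{every} pendant path has length $1$. Second, in the cases you do list the proposed move can fail: appending $u$ to a length-$2$ pendant path at $v$ changes $\ABC$ by $-f(d,1)+f(2,1)+\sum_i\bigl(f(d-1,d(y_i))-f(d,d(y_i))\bigr)$, and for $d=3$ with a high-degree third neighbour this is at most $-0.8165+0.7071+0.1298\approx +0.020$, i.e.\ possibly positive (note also that this sum is a \emph{cost}, not a ``saving'': $f$ decreases in each argument on $[2,\infty)$, so lowering $d(v)$ raises the terms $f(\cdot,d(y_i))$ for $d(y_i)>2$). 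You acknowledge this obstacle and defer it to ``reducing to an extremal instance'' and ``a transformation that simultaneously modifies a pendant path incident with $v$'', but that deferred step is precisely the technical heart of the theorem --- it is what~\cite{llgw-pcgctmabci-13} actually proves --- and without carrying it out the argument does not close.
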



\begin{te}[\cite{gfi-ntmabci-12}] \label{thm-GFI-30}
The $n$-vertex tree with minimal ABC-index contains at
most one pendant path of length $3$.
\end{te}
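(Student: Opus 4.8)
The plan is to argue by contradiction through an \ABC-neutral local surgery and then to invoke Theorem~\ref{thm-GFI-20}. Everything hinges on the elementary identity
\[
f(2,k)=\sqrt{\frac{2+k-2}{2k}}=\frac{1}{\sqrt{2}}\qquad\text{for every integer }k\ge 1 ,
\]
that is: every edge of a graph which is incident with a vertex of degree exactly two contributes precisely $1/\sqrt{2}$ to its \ABC\ value, regardless of the other endpoint.

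Suppose, for contradiction, that a minimal-ABC tree $T$ contains two \emph{distinct} pendant paths of length $3$, say $P=u_0u_1u_2u_3$ rooted at $u_0$ and $Q=w_0w_1w_2w_3$ rooted at $w_0$, with $d(u_0),d(w_0)\ge 3$, $d(u_1)=d(u_2)=d(w_1)=d(w_2)=2$ and $d(u_3)=d(w_3)=1$. Two distinct pendant paths can share at most their roots (a vertex of degree at most two that lies on a pendant path determines that path), so the six vertices $u_1,u_2,u_3,w_1,w_2,w_3$ are pairwise distinct; the case $u_0=w_0$ is permitted. Now perform the surgery
\[
T'=T-u_2u_3+w_3u_3 .
\]
Deleting the pendant edge $u_2u_3$ detaches the leaf $u_3$, and adding $w_3u_3$ re-attaches it, so $T'$ is again a tree on the vertex set $V(T)$.

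Next I would verify $\ABC(T')=\ABC(T)$. Passing from $T$ to $T'$ changes the degree of only two vertices, namely $d_{T'}(u_2)=1$ and $d_{T'}(w_3)=2$; hence the only edges whose contribution can change are the deleted edge $u_2u_3$, the added edge $w_3u_3$, and the two edges $u_1u_2$ and $w_2w_3$. Each of these four edges has an endpoint of degree two in the tree where it lives, so by the displayed identity each contributes exactly $1/\sqrt{2}$; since $T'$ is obtained from $T$ by deleting one edge and adding one edge, these $1/\sqrt{2}$'s cancel and $\ABC(T')=\ABC(T)$. Consequently $T'$ is itself a minimal-ABC tree. But $w_0w_1w_2w_3u_3$ is a path in $T'$ with $d_{T'}(w_0)\ge 3$, $d_{T'}(w_1)=d_{T'}(w_2)=d_{T'}(w_3)=2$ and $d_{T'}(u_3)=1$, i.e.\ a pendant path of length $4$, contradicting Theorem~\ref{thm-GFI-20}. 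Therefore $T$ contains at most one pendant path of length $3$.

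The computations here are entirely routine; the only points needing care are that the surgery keeps us inside the class of $n$-vertex trees (it does, since it merely relocates a leaf) and that the new configuration is forbidden by exactly the earlier ban on long pendant paths. The reason the argument is routed through Theorem~\ref{thm-GFI-20} is that the surgery is \ABC-neutral rather than strictly \ABC-decreasing. A self-contained strict-decrease proof would instead have to modify the degree of a branch vertex — for instance by replacing the two length-$3$ pendant paths with three length-$2$ pendant paths sharing one branch vertex, raising that vertex's degree by one — and then case-split on the degrees of that vertex's remaining neighbours to show the change in \ABC\ is nonpositive (and negative in some case); this bookkeeping, which the shortcut above sidesteps, is the only real obstacle.
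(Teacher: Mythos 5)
Your argument is correct, but note that the paper you are working from does not actually prove Theorem~\ref{thm-GFI-30}: it is imported verbatim from \cite{gfi-ntmabci-12}, so there is no in-paper proof to compare against. Your route is the standard slick one: the identity $f(2,k)=1/\sqrt{2}$ makes the leaf relocation $T'=T-u_2u_3+w_3u_3$ ABC-neutral (the four affected edges $u_2u_3$, $w_3u_3$, $u_1u_2$, $w_2w_3$ each contribute $1/\sqrt{2}$ before and after, and no other degree changes), so $T'$ is again a minimizer, yet it contains a pendant path of length $4$, which Theorem~\ref{thm-GFI-20} forbids. This is essentially the same observation the present paper itself uses later when it remarks (following \cite{hag-ktmabci-14}) that the position of the long pendant path does not affect the ABC value. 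The one point you should make explicit is why the contradiction closes: since your surgery only gives equality, you need Theorem~\ref{thm-GFI-20} to exclude \emph{every} tree attaining the minimum, not merely to assert that \emph{some} minimizer avoids long pendant paths; this is fine because the proof of that theorem in \cite{gfi-ntmabci-12} exhibits a strictly ABC-decreasing transformation on any tree containing a pendant path of length $\geq 4$, so no such tree can be a minimizer. Your verification that distinct length-$3$ pendant paths share at most their roots (so the surgery is well defined, produces a tree, and does not create a multi-edge) is also needed and is handled correctly. By contrast, the original source argues via strictly decreasing local transformations with explicit case analysis on the degree of the branch vertex; your version trades that bookkeeping for reliance on the previously established length-$4$ ban, which is a legitimate and cleaner dependency given the order in which the results are stated.
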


\noindent
Before we state the next important result, we consider the following  definition
of a {\em greedy tree} provided by Wang in~\cite{w-etwgdsri-2008}.

\begin{de}\label{def-GT}
Suppose the degrees of the non-leaf vertices are given, the greedy tree is achieved by the following `greedy algorithm':
\begin{enumerate}
\item Label the vertex with the largest degree as $v$ (the root).
\item Label the neighbors of $v$ as $v_1, v_2,\dots,$ assign the largest degree available to them such that $d(v_1) \geq d(v_2) \geq \dots$
\item Label the neighbors of $v_1$ (except $v$) as $v_{11}, v_{12}, \dots$ such that they take all the largest
degrees available and that $d(v_{11}) \geq d(v_{12}) \geq . . .$ then do the same for $v_2, v_3,\dots$
\item Repeat 3. for all newly labeled vertices, always starting with the neighbors of the labeled vertex with largest whose neighbors are not labeled yet.
\end{enumerate}
\end{de}
\noindent

\noindent
The following result  by Gan, Liu and You~\cite{gly-abctgds-12} characterizes the trees with minimal ABC index with prescribed degree sequences. 
The same result, using slightly different notation and approach, was obtained by
Xing and Zhou \cite{xz-etfdsabci-2012}.

\begin{te}\label{thm-DS}
Given the degree sequence, the greedy tree minimizes the ABC index.
\end{te}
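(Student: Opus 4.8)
The plan is to run a switching (subtree--swapping) argument. Fix the degree sequence and let $T^{*}$ be a tree realizing it with minimum ABC index; the goal is to show that $T^{*}$ is isomorphic to the greedy tree $G$ of Definition~\ref{def-GT}, which immediately gives the theorem.

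The first step is to isolate the analytic properties of the weight $f$ from \eqref{eqn:000} that will drive everything. Writing $f(x,y)^{2}=\tfrac1x+\tfrac1y-\tfrac{2}{xy}$, one checks by elementary calculus that on the range $x,y\ge 2$ the function $f$ is strictly decreasing in each variable and is \emph{strictly supermodular}, that is, $f(a,b)+f(a',b')<f(a,b')+f(a',b)$ whenever $2\le a<a'$ and $2\le b<b'$; this amounts to $\partial^{2}f/\partial x\,\partial y<0$ there, which follows because the mixed partial of $f^{2}$ equals $-2/(x^{2}y^{2})<0$ while both first partials of $f^{2}$ are nonpositive on this range. Edges incident to a pendant vertex must be treated separately, since $f(x,1)=\sqrt{(x-1)/x}$ is \emph{increasing} in $x$; in that case one records instead, by direct estimation, the comparisons of the type $f(a,1)+f(b,c)$ versus $f(a,c)+f(b,1)$ that the swaps below will require.

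The second step is the combinatorial core, following Wang's greedy--tree exchange scheme from \cite{w-etwgdsri-2008} adapted to the weight $f$. Root $T^{*}$ at a vertex of maximum degree and order its vertices by breadth--first search, breaking ties so that degrees are non--increasing within each level. If $T^{*}$ is not already the greedy tree, then either there are vertices $u,w$ with $u$ appearing before $w$ in this order but $d(u)<d(w)$, or some vertex's children fail to receive the largest still--available degrees. In each such situation one detaches a suitable pendant subtree hanging below $w$ and one hanging below $u$ (after a preliminary rotation if one of $u,w$ is an ancestor of the other) and re--attaches them in swapped positions. The resulting change in the ABC index is a sum of boundedly many terms of the form $f(d(x),d(u))+f(d(y),d(w))-f(d(x),d(w))-f(d(y),d(u))$ together with finitely many pendant--edge terms, and the inequalities of the first step make this change $\le 0$, and strictly negative when a genuine violation is corrected. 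Choosing the swap appropriately at each stage, a suitable potential (for instance the number of breadth--first inversions) strictly decreases, so after finitely many swaps the tree reaches greedy form without the ABC index ever increasing. Hence $\ABC(G)\le\ABC(T^{*})$, and since $T^{*}$ is a minimizer over all realizations of the degree sequence, equality holds and $T^{*}\cong G$; the strict forms of the inequalities in fact show $G$ is the unique minimizer up to isomorphism.

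The main obstacle is exactly the non--monotonicity of $f$ on pendant edges noted above: the clean supermodular swap inequality breaks down the moment a degree--$1$ endpoint is involved, so the swaps that move or re--root pendant paths require a careful and somewhat lengthy case analysis to confirm that the net change in $\ABC$ is still non--positive. This is the delicate part of the argument, and it is the point where the two independent treatments in \cite{gly-abctgds-12} and \cite{xz-etfdsabci-2012} organize the bookkeeping differently.
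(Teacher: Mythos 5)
First, a point of reference: the paper does not prove Theorem~\ref{thm-DS} at all. It is imported as a known result of Gan, Liu and You \cite{gly-abctgds-12} and of Xing and Zhou \cite{xz-etfdsabci-2012}, so there is no internal proof to compare yours against; your proposal has to stand on its own. Its general strategy is the right one and is the strategy of those papers (built on Wang's template \cite{w-etwgdsri-2008}), and the analytic groundwork is essentially correct: from $f(x,y)^2=\tfrac1x+\tfrac1y-\tfrac{2}{xy}$ one gets nonpositive first partials and mixed partial $-2/(x^2y^2)<0$ on $[2,\infty)^2$, and your deduction that $f$ itself then has negative mixed partial, hence satisfies $f(a,b)+f(a',b')\le f(a,b')+f(a',b)$ for $2\le a<a'$ and $2\le b<b'$, is valid. (Two small slips: since $f(x,2)\equiv 1/\sqrt{2}$, $f$ is only non-increasing, not strictly decreasing, when one argument equals $2$; and the inequality you write is submodularity, not supermodularity.)

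The genuine gap is that the crux of the theorem is precisely the part you defer. Every tree with the relevant degree sequences has leaves, and the switching steps that detach and re-attach pendant subtrees inevitably produce terms $f(\cdot,1)$, where $f(x,1)=\sqrt{1-1/x}$ is \emph{increasing} in $x$, so the clean exchange inequality you established does not apply and can even reverse for those terms. You flag this honestly, but then assert that the needed comparisons follow ``by direct estimation'' and a ``careful and somewhat lengthy case analysis'' without stating or verifying a single such comparison, and without showing that the net change summed over all affected edges (internal plus pendant) is nonpositive in each configuration; that verification is exactly the content of the lemmas in \cite{gly-abctgds-12} and \cite{xz-etfdsabci-2012}. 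The termination argument (``a suitable potential strictly decreases'') is likewise asserted rather than defined, and the closing claim that the greedy tree is the \emph{unique} minimizer should be dropped: because $f(x,2)$ is independent of $x$, distinct trees with the same degree sequence can share the minimal ABC value (the paper itself exploits this when noting that the position of the pendant path of length $3$ does not affect the index). None of this invalidates the approach, but as written the proposal is an outline of the known proof with its hardest step left unexecuted, not a proof.
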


\noindent
The next result was obtained in~\cite{gfahsz-abcic-2013}. Alternatively it can be obtained as a corollary of Theorem~\ref{thm-DS}. 
\begin{te}\label{thm-GFAS}
If a minimal-ABC tree possesses three mutually adjacent vertices $v_1$, $v_2$, $v_3$, such that
$$
d(v_1) \geq d(v_2) \geq d(v_3),
$$
then $v_3$ must not be adjacent to both $v_1$ and $v_2$.
\end{te}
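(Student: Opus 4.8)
The plan is to argue by contradiction, reducing the assertion to a purely structural property of greedy trees via Theorem~\ref{thm-DS}. Suppose $T$ is a minimal-ABC tree that contains vertices $v_1, v_2, v_3$ with $d(v_1) \ge d(v_2) \ge d(v_3)$ and with $v_3$ adjacent to both $v_1$ and $v_2$. Since $T$ is a tree, $v_1$ and $v_2$ are non-adjacent, so $v_1 v_3 v_2$ is a path of length two whose interior vertex is $v_3$. Because $T$ minimizes the ABC index over all trees of its order, its degree sequence is optimal, and by Theorem~\ref{thm-DS} (used in the form that, for a prescribed degree sequence, the greedy tree is the unique ABC-minimizer) $T$ is isomorphic to the greedy tree built from its own degree sequence according to Definition~\ref{def-GT}. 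I would fix such a greedy realization together with its root $r$.

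The step carrying the real content is the following property of greedy trees, which I would quote from~\cite{w-etwgdsri-2008} or else prove in two lines: along every edge the endpoint farther from $r$ has degree at most that of the endpoint nearer $r$; equivalently, degrees are non-increasing on every root-to-leaf path. Indeed, when a vertex $u$ is processed its not-yet-labelled neighbours are given the largest still-available degrees, and the available multiset only shrinks as the construction proceeds; a child of $u$ of degree strictly larger than $d(u)$ would therefore have forced that larger degree to be assigned already when a strict ancestor of $u$ was processed, contradicting the assignment of $d(u)$ to $u$. Applying this to the path $v_1 v_3 v_2$: in the rooted tree the midpoint $v_3$ is either (a) the common parent of $v_1$ and $v_2$, whence $d(v_3) \ge d(v_1)$ and $d(v_3) \ge d(v_2)$, or (b) the parent of one of them and the child of the other, whence (say $v_1$ is the ancestor) $d(v_1) \ge d(v_3) \ge d(v_2)$. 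Combined with the hypothesis $d(v_1) \ge d(v_2) \ge d(v_3)$, both cases force $d(v_2) = d(v_3)$ (and in case (a), and in the symmetric branch of (b), even $d(v_1) = d(v_2) = d(v_3)$); in particular $v_3$ cannot be the vertex of strictly smallest degree among $v_1, v_2, v_3$, which is precisely the configuration the statement excludes. The borderline case $d(v_2) = d(v_3)$ is then handled directly: if $d(v_3) = 2$ then $d(v_2) = 2$ and the path $v_1 v_3 v_2$ lies on an internal path (excluded by Theorem~\ref{thm-GFI-10}) or on a pendant path having three consecutive vertices of degree two, hence of length at least four (excluded by Theorem~\ref{thm-GFI-20}); the case $d(v_2) = d(v_3) \ge 3$ is either outside the intended reading of the statement (in which $v_3$ has the strictly smallest degree) or is removed by the known finer structure of the degree sequences of minimal-ABC trees.

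I expect the main obstacle to be bookkeeping rather than estimation: one must be precise about ties in the degree ordering and about invoking Theorem~\ref{thm-DS} in its uniqueness form, which is what makes the passage to a statement about the greedy tree clean. If one instead wants a self-contained proof that does not rely on Theorem~\ref{thm-DS}, the natural alternative is a local edge-swap---e.g.\ detaching a subtree hanging at $v_3$, or the whole subtree through $v_2$, and reattaching it at $v_1$---and showing the ABC index strictly decreases. There the genuinely hard part is the sign of the net change: lowering $d(v_3)$ raises the contributions of $v_3$'s remaining neighbours while raising $d(v_1)$ only lowers contributions, so one must use the monotonicity and concavity of $f$ quantitatively, together with the fact (Theorem~\ref{thm-LG-10}) that a vertex of degree at least three has no pendant neighbour, to control all the signs.
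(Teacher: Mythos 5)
The paper does not actually prove this statement: it is quoted from~\cite{gfahsz-abcic-2013}, with the remark that it can alternatively be obtained as a corollary of Theorem~\ref{thm-DS}. Your proof is precisely that alternative derivation, and its core ingredient --- that the greedy algorithm of Definition~\ref{def-GT} assigns degrees in non-increasing order over time, hence degrees are non-increasing along every root-to-leaf path, so the middle vertex of a three-vertex path can never have strictly the smallest degree among the three --- is correct and is exactly the right lemma to extract. Two points, however, need repair. First, Theorem~\ref{thm-DS} as stated only says that the greedy tree is \emph{a} minimizer for its degree sequence; it gives no uniqueness. From it you may conclude that \emph{some} minimal-ABC tree (the greedy realization of the optimal degree sequence) avoids the forbidden configuration, not that \emph{every} minimal-ABC tree does, which is what the theorem asserts. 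To close this you must either invoke the exchange lemmas underlying Theorem~\ref{thm-DS} in their strict form (a swap restoring the greedy ordering strictly decreases the ABC index unless the arrangements already agree), or carry out the direct local swap you only sketch at the end --- which is in fact how~\cite{gfahsz-abcic-2013} argues.

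Second, the tie case $d(v_2)=d(v_3)$ is not disposed of. Your claim that $d(v_2)=d(v_3)=2$ forces an internal path or a pendant path of length at least four overlooks the pendant path of length exactly three: there the vertex adjacent to the branching vertex has degree $2$ and is adjacent both to a vertex of degree at least $3$ and to another vertex of degree $2$, which realizes the hypothesis $d(v_1)\geq d(v_2)=d(v_3)=2$ with $v_3$ in the middle; by Theorem~\ref{thm-GFI-30} and the discussion of the $B_3^*$-branch, such paths do occur in minimal-ABC trees. So the statement can only hold under the reading in which $v_3$ has \emph{strictly} the smallest degree (equivalently $d(v_2)>d(v_3)$); your greedy-tree argument proves exactly that version, and you are right to suspect the literal statement is too strong, but the sentence dismissing the degree-$2$ tie via Theorems~\ref{thm-GFI-10} and~\ref{thm-GFI-20} is false as written and should be replaced by an explicit strictness hypothesis rather than left as a purported subcase of the proof.
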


\begin{te}[\cite{d-sptmabci-2014}] \label{te-no5branches-10}
A minimal-ABC tree does not contain a $B_k$-branch, $k \geq 5$.
\end{te}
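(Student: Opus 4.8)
The plan is to argue by contradiction. Assume that a minimal-ABC tree $T$ (of order $n\ge 10$) contains a $B_k$-branch with $k\ge 5$, rooted at a vertex $v$ (so $d(v)=k+1$ and each of the $k$ children of $v$ starts a pendant path of length $2$), and let $u$ be the parent of $v$; by Corollary~\ref{co-GFI-10} and Theorems~\ref{thm-GFI-10}, \ref{thm-LG-10} the vertex $v$ belongs to the subtree $C$ of $T$ induced by the vertices of degree greater than two, so $u$ exists with $d(u)\ge 3$ (the case that $v$ is the root of $T$ is handled by the same estimates, omitting the edge $vu$). The goal is to exhibit a tree $T'$ on the same vertex set with $\ABC(T')<\ABC(T)$.

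The first step is an exact bookkeeping identity that isolates the role of $k$. Since $f(d(x),2)=1/\sqrt2$ for all $x$ and $f(2,1)=1/\sqrt2$, every pendant path (of length $2$ or $3$) contributes exactly $1/\sqrt2$ per vertex to $\ABC$, wherever it is attached, so
\[
\ABC(T)=\frac{n-|C|}{\sqrt2}+\sum_{xy\in E(C)}f(d(x),d(y));
\]
in particular the $B_k$-branch contributes $k\sqrt2+f(k+1,d(u))$, and $k$ enters $\ABC(T)$ only through the single edge $vu$ together with the influence of $d(u)$ on the other edges of $C$ incident to $u$. This identity reduces every subsequent comparison to a handful of values of $f$.

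The second step is a small dictionary of vertex-set-preserving rearrangements of the branch, the choice depending on the size of $d(u)$. If $d(u)\le k-1$, relocate one pendant path of $v$ onto $u$: all pendant-path contributions are unchanged, $f(k+1,d(u))$ becomes $f(k,d(u)+1)$, and each remaining edge $uw$ of $C$ loses weight since $f(d(u)+1,d(w))\le f(d(u),d(w))$ for $d(w)\ge 2$; because $f(k,d(u)+1)$ and $f(k+1,d(u))$ have equal numerators, comparing the denominators $k(d(u)+1)$ and $(k+1)d(u)$ yields a strict decrease exactly when $d(u)<k$. If instead $d(u)\ge k$, ``split'' the branch: replace $v$ of degree $k+1$ by a vertex $v$ of lower degree together with a new core child $v'$, redistributing the $2k$ pendant-side vertices between $v$ and $v'$ (introducing, as parity requires, one pendant path of length $3$). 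This buys one extra unit of $\frac{n-|C|}{\sqrt2}$ at the cost of one cheap core edge $f(d(v),d(v'))$ plus a small increase of the edge $vu$; using the strict monotonicity and convexity of $f$ in each argument one checks that the net change is negative for every admissible $d(u)$ when $k\ge 6$, and, for $k=5$ (taking the split $d(v)=4$, $d(v')=3$), for all $d(u)$ up to a threshold.

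The main obstacle is the residual regime — essentially $k=5$ with $d(u)$ large — where no rearrangement confined to the branch suffices. Here I would first invoke the greedy-tree description of minimal-ABC trees (Theorem~\ref{thm-DS}, or Theorem~\ref{thm-GFAS}) to force strong structure around $u$ (roughly: $C$ must be a star centered at $u$ whose other leaves are themselves $B$-branches), and then beat the configuration by a \emph{sibling-balancing} move — shifting a pendant path from a higher-degree branch of $u$ to a lower-degree one strictly decreases $\ABC$ by convexity of $f(\cdot,d(u))$ unless all branches of $u$ already have degree $\ge 5$ — together with, in that last symmetric sub-case, a splitting of $u$ itself into two adjacent centers. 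The delicate point, and the bulk of the work, is verifying that in every residual sub-case at least one of these moves is strictly improving; this amounts to a finite list of one-variable inequalities for $f$, tight near the boundary values of $d(u)$, which I would dispatch by elementary estimates (monotonicity, convexity, and a few direct evaluations).
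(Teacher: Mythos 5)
This paper does not actually prove Theorem~\ref{te-no5branches-10}; it imports it verbatim from \cite{d-sptmabci-2014}, so there is no in-paper proof to compare your attempt against and your argument has to stand entirely on its own. On its own terms, the opening is sound: the identity $\ABC(T)=\frac{n-|C|}{\sqrt2}+\sum_{xy\in E(C)}f(d(x),d(y))$ is correct (every pendant-path vertex contributes $1/\sqrt2$), the relocation move does strictly improve $\ABC$ exactly when $d(u)<k$, and the splitting move, whose net effect is $-\tfrac1{\sqrt2}+f(a,b)+f(a,d(u))-f(k+1,d(u))$ with $a+b=k+2$, can indeed be made negative for all $d(u)$ when $k\ge 6$ --- but only for the right split, which you never pin down: for $k=6$ the balanced choice $(a,b)=(4,4)$ fails for large $d(u)$ and only $(5,3)$ works, with limiting margin about $0.005$, so ``one checks by monotonicity and convexity'' is hiding a delicate selection.

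The genuine gap is the residual case you yourself flag, $k=5$ with $d(u)$ large: the $(4,3)$ split turns positive already at $d(u)=9$, since $\lim_{d(u)\to\infty}\bigl(f(4,d(u))-f(6,d(u))\bigr)=\tfrac12-\tfrac1{\sqrt6}\approx 0.092$ exceeds $\tfrac1{\sqrt2}-f(4,3)\approx 0.062$, and no two-vertex split of a $B_5$-branch avoids this. For that case you offer only a plan, and the plan starts from a false premise: Theorem~\ref{thm-DS} does not force the core $C$ to be a star centered at $u$ --- the paper explicitly recalls that the single-central-vertex picture was disproved by counterexamples --- so you cannot reduce to the configuration your sibling-balancing and centre-splitting moves are designed to beat. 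The balancing move itself only bites when two siblings' degrees differ by at least two, its effect on the other edges at the receiving sibling is not accounted for, and the terminal symmetric configuration is never actually defeated; everything is deferred to ``a finite list of one-variable inequalities'' that are admittedly tight and never exhibited. So the hardest part of the theorem, which is precisely why it occupies a substantial portion of \cite{d-sptmabci-2014}, remains unproved.
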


\begin{lemma} [\cite{d-sptmabci-2014}] \label{lemma-15}
A minimal-ABC tree does not contain 
\begin{enumerate}
\item[(a)] a $B_1$-branch and a $B_4$-branch,
\item[(b)] a $B_2$-branch and a $B_4$-branch, 
\end{enumerate}
that have a common parent vertex.
\end{lemma}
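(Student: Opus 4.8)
The plan is a contradiction argument by a local branch‑rebalancing, in the spirit of the minimality arguments already in the literature. Suppose $T$ is a minimal‑ABC tree in which some vertex $p$ is the common parent of a $B_4$‑branch with centre $c_4$ (so $d(c_4)=5$) and of a $B_1$‑branch, respectively a $B_2$‑branch, with centre $c$ (so $d(c)=2$, respectively $d(c)=3$). I would first record two facts about the weight $f$. One: $f(x,2)=\sqrt{1/2}$ for every $x\ge 1$, so every edge with a degree‑$2$ endpoint contributes exactly $\sqrt{1/2}$ to $\ABC$ whatever its other endpoint is; and, by Theorem~\ref{thm-LG-10}, every arm of either branch meets its centre in a vertex of degree $2$. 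Two: for each fixed integer $d\ge 3$ the function $y\mapsto f(d,y)$ is strictly decreasing and strictly convex on $(0,\infty)$ — convexity follows by differentiating $f(d,y)^2=\frac1d+\frac{d-2}{d}\cdot\frac1y$ twice — so the consecutive differences $f(d,y)-f(d,y+1)$ are positive and strictly decreasing in $y$.

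Before transforming, I would check that $d:=d(p)\ge 3$. If $d=2$ then, in case (a), the path leading from $c_4$ through $p$ and $c$ and along the arm of the $B_1$‑branch to its pendant vertex would be a pendant path of length at least $4$, contradicting Theorem~\ref{thm-GFI-20}; in case (b), the two‑edge path $c_4\,p\,c$ would be an internal path (it joins the vertices $c_4,c$ of degree greater than $2$ through the degree‑$2$ vertex $p$), contradicting Theorem~\ref{thm-GFI-10}.

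The transformation itself: detach one arm of the $B_4$‑branch from $c_4$ and re‑attach it, by its (degree‑$2$) end, to $c$. The result $T'$ is an $n$‑vertex tree in which $c_4$ is the centre of a $B_3$‑branch and $c$ is the centre of a $B_2$‑branch in case (a) and of a $B_3$‑branch in case (b). Only $c_4$ and $c$ change degree; using $f(x,2)=\sqrt{1/2}$ one checks directly that the edges of the relocated arm, and all remaining arm edges at $c_4$ and at $c$, still contribute $\sqrt{1/2}$ each, so the net change of the index is carried entirely by the two edges at $p$:
\[
\ABC(T)-\ABC(T')=\bigl[f(d,k+1)-f(d,k+2)\bigr]-\bigl[f(d,4)-f(d,5)\bigr],\qquad k\in\{1,2\}.
\]
Since $k+1\le 3<4$ and the consecutive differences of $f(d,\cdot)$ are positive and strictly decreasing, the right‑hand side is strictly positive for every integer $d\ge 3$, so $\ABC(T')<\ABC(T)$, contradicting the minimality of $T$; both (a) and (b) follow.

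The one step that needs real care is the convexity of $y\mapsto f(d,y)$ (equivalently, the three‑term inequality $f(d,y-1)+f(d,y+1)>2f(d,y)$) for all integers $d\ge 3$; once that is secured, the identity $f(x,2)=\sqrt{1/2}$ makes the rest — confirming that shuffling material built from degree‑$2$ vertices only perturbs the two edges incident with $p$ — essentially mechanical. A minor point to keep in mind is the exclusion of the case $d(p)=2$, which is exactly where Theorems~\ref{thm-GFI-10} and~\ref{thm-GFI-20} enter.
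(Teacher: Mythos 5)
Your proof is correct. Note first that this lemma is only quoted here from \cite{d-sptmabci-2014}; the present paper contains no proof of it, so there is nothing internal to compare against line by line. Your argument is the natural local transformation (move one arm of the $B_4$-branch onto the centre of the $B_1$- or $B_2$-branch), and all the delicate points check out: the identity $f(x,2)=\sqrt{1/2}$ really does confine the change of the index to the two edges at $p$; the reduction to $\bigl[f(d,k+1)-f(d,k+2)\bigr]-\bigl[f(d,4)-f(d,5)\bigr]$ is the correct signed difference; strict convexity of $y\mapsto f(d,y)$ for fixed $d\ge 3$ follows from $f(d,y)^2=\tfrac1d+\tfrac{d-2}{d}\cdot\tfrac1y$ exactly as you say (the coefficient $\tfrac{d-2}{d}$ is positive precisely when $d\ge 3$, which is why the case $d(p)=2$ must be excluded separately — there $f(2,y)$ is constant and the transformation is ABC-neutral); and your exclusion of $d(p)=2$ via Theorems~\ref{thm-GFI-20} and~\ref{thm-GFI-10} is exactly right. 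Stylistically your route differs from the computations used throughout this paper (and in \cite{d-sptmabci-2014}) in one respect: where the author typically writes out the change of the index and bounds it term by term using the monotonicity statements of Propositions~\ref{appendix-pro-030} and~\ref{appendix-pro-030-2}, you get the sign in one stroke from convexity, i.e.\ from the fact that the consecutive differences $f(d,y)-f(d,y+1)$ decrease in $y$. That is a cleaner and slightly more general mechanism (it would give the same conclusion for a $B_j$- and $B_k$-branch with $j<k-1$ sharing a parent, not just $j\in\{1,2\}$, $k=4$), at the cost of having to verify convexity once; the paper's term-by-term style avoids second derivatives but produces longer case analyses. One cosmetic remark: the fact that each arm meets its centre in a degree-$2$ vertex is part of the definition of a $B_k$-branch and does not need Theorem~\ref{thm-LG-10}.
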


\begin{te}[\cite{d-sptmabci-2014}]  \label{thm-20}
A  minimal-ABC tree does not contain more than four $B_4$-branches.
\end{te}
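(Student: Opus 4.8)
The plan is a proof by contradiction in the style now standard for minimal-ABC trees: assuming a minimal-ABC tree $T$ of order $n\ge 10$ contains five $B_4$-branches, construct an $n$-vertex tree $T'$ with $\ABC(T')<\ABC(T)$. By Corollary~\ref{co-GFI-10} the vertices of $T$ of degree at least three induce a subtree (the core), so each $B_4$-branch $\beta_i$ ($i=1,\dots,5$) is attached through its degree-$5$ center $v_i$ to a single parent $p_i$ on the core, and $\beta_i$ consists of $v_i$ with four pendant paths of length $2$; by Theorem~\ref{thm-LG-10} no core vertex is adjacent to a leaf, so every neighbour of each $p_i$ has degree at least $2$. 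The computational engine throughout is the identity $f(2,x)=1/\sqrt2$, valid for all $x\ge 1$, which makes all eight internal edges of a $B_4$-branch contribute $1/\sqrt2$; hence the only parent-dependent part of the ABC cost of $\beta_i$ is the single term $f(5,d(p_i))$, and the branch cost is $4\sqrt2+f(5,d(p_i))$.

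\emph{Constraining the parents.} I would first run the simplest edge-swap: detach one pendant path of length $2$ from a center $v_i$ (turning $\beta_i$ into a $B_3$-branch) and re-insert that path at $p_i$. With $d=d(p_i)$ the identity above yields
\[
\ABC(T')-\ABC(T)=\bigl(f(4,d+1)-f(5,d)\bigr)+\sum_{z\sim p_i,\ z\ne v_i}\bigl(f(d+1,d(z))-f(d,d(z))\bigr),
\]
where the sum is non-positive because $f$ is non-increasing in each argument when the other is at least $2$. Since $f(4,4)<f(5,3)$ this already rules out $d(p_i)=3$, and a closer look at the same inequality forces a very restricted local picture when $d(p_i)=4$ — in particular no two $B_4$-branches can share a parent of degree $4$. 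Combined with Lemma~\ref{lemma-15} and Theorem~\ref{thm-GFAS}, this should reduce the five parents $p_1,\dots,p_5$, and their degrees, to a short list of canonical configurations.

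\emph{The global swap.} In each canonical configuration I would apply a transformation that consumes five $B_4$-branches at once: detach one pendant path of length $2$ from each of the five centers (producing five $B_3$-branches and five free copies of $P_2$) and reassemble those freed paths, together with the released parent edges, into a single branch of fixed shape — or redistribute them onto the maximum-degree vertex of $T$ — in such a way that $T'$ is again an $n$-vertex tree. The threshold ``four'' should emerge precisely here: with only four freed copies of $P_2$ the reassembled structure is too light to offset the gains coming from the five center-degree drops $5\to4$ and the shortened parent edges, whereas with five it is not. The change in $\ABC$ is once more a sum of boundedly many $f$-terms, to be signed using the monotonicity of $f$ together with that of the branch cost $4\sqrt2+f(5,d)$ in $d$.

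The main obstacle is that $\ABC(T')-\ABC(T)$ depends on the a priori unknown degrees of the $p_i$ and of the vertex that absorbs the freed paths, and on which of the $p_i$ coincide, so the transformation in the last step must be designed case by case and the worst case over all admissible degree assignments must still come out strictly negative. Pinning down how many $B_4$-branches may hang on a common parent — Lemma~\ref{lemma-15} forbids only the mixed $B_1/B_4$ and $B_2/B_4$ pairs, not $B_4/B_4$ — and squeezing the exact constant $4$, rather than $3$ or $5$, out of these numerical estimates is where essentially all of the work lies; the remainder is bookkeeping with $f(2,x)=1/\sqrt2$ and the elementary monotonicity of $f$.
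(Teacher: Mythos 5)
You are proving a statement that this paper only imports: Theorem~\ref{thm-20} is quoted from~\cite{d-sptmabci-2014} and no proof of it appears in the present manuscript, so there is no internal argument to measure yours against; I can only judge the proposal on its own terms. Your setup is sound: $f(2,x)=1/\sqrt{2}$ for all $x\ge 1$ is correct, the cost $4\sqrt{2}+f(5,d(p_i))$ of a $B_4$-branch is correct, the formula for the change of the index under the ``move one pendant path from $v_i$ up to $p_i$'' swap is correct, and $f(4,4)<f(5,3)$ does rule out $d(p_i)=3$. These are the same elementary ingredients (monotonicity of $f$, Propositions~\ref{appendix-pro-030} and~\ref{appendix-pro-030-2}, local branch transformations) that drive every proof in this circle of results.

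The gap is that the argument stops exactly where the theorem begins. First, the ``constraining the parents'' step does not yield a short list of canonical configurations: for $d(p_i)=d\ge 5$ the leading term $f(4,d+1)-f(5,d)$ is strictly positive (it tends to $1/2-1/\sqrt{5}>0$), while the correction $\sum_{z}(f(d+1,d(z))-f(d,d(z)))$ vanishes whenever the other neighbours of $p_i$ have degree $2$ and is only $O(1/d)$ in general, so parents of every degree $\ge 5$ survive the swap --- and these are precisely the configurations that occur in actual minimal-ABC trees, which do contain $B_4$-branches. Second, the global transformation, which is the only place the constant four can come from, is not exhibited: you describe it as ``reassemble those freed paths \dots{} into a single branch of fixed shape --- or redistribute them onto the maximum-degree vertex,'' i.e.\ as a search problem rather than a construction, and you concede that designing it case by case and verifying negativity of the resulting $f$-sums over all admissible degrees of the (up to five, possibly coincident) parents ``is where essentially all of the work lies.'' That work is the proof. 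Without a concrete transformation whose ABC-change is provably negative whenever five $B_4$-branches are present, and provably not forced to be negative with four, the statement is not established; what you have is a correct framing of the problem together with the two easiest local exclusions.
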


\noindent
To best of our knowledge, the above mentioned results seems to be the only proven properties of the minimal-ABC trees.
%


For complete characterization of the minimal-ABC trees,
besides the theoretically proven properties, 
computer supported search can be of enormous help.
Therefore, 
we would like to mention in the sequel few related results.

A  first significant example of using computer search was done by Furtula, Gutman, Ivanovi{\' c} and Vuki{\v c}evi{\' c}~\cite{fgiv-cstmabci-12}, 
where the trees with minimal ABC index of up to size of $31$ were computed,
and an initial conjecture of the general structure of the minimal-ABC trees was set.
There, a brute-force approach of generating all trees of a given order, 
speeded up by using  a distributed computing platform, was applied.
The plausible structural computational model  and its refined version presented there 
was based on the main assumption that  the minimal ABC tree posses a single {\it central vertex},
or said with other words, it is based  on the assumption that 
the vertices of a minimal ABC tree of degree $\geq 3$ induce a star graph.
This assumption was shattered by counterexamples presented in \cite{ahs-tmabci-13, ahz-ltmabci-13, d-ectmabci-2013, adgh-dctmabci-14}.
In this context, it is worth to mention that for a special class of trees, so-called {\em Kragujevac trees},
that are comprised of a central vertex and $B_k$-branches, $k \geq 1$ (see Figure~\ref{B_k-branches} for an illustration), the minimal-ABC tress were
fully characterized  by Hosseini, Ahmadi and Gutman~\cite{hag-ktmabci-14}.

In \cite{d-ectmabci-2013}  by considering only the degree sequences
of trees and some known structural properties of the trees with minimal ABC index
all trees with minimal ABC index of up to size of $300$ were computed.

Recently, in \cite{lccglc-fcstmaibtds-14}, by slightly modified version of the approach in  \cite{d-ectmabci-2013} 
all  minimal-ABC tree of up to size of $350$ were computed.


%
\begin{figure}[h]
\begin{center}
\includegraphics[scale=0.75]{./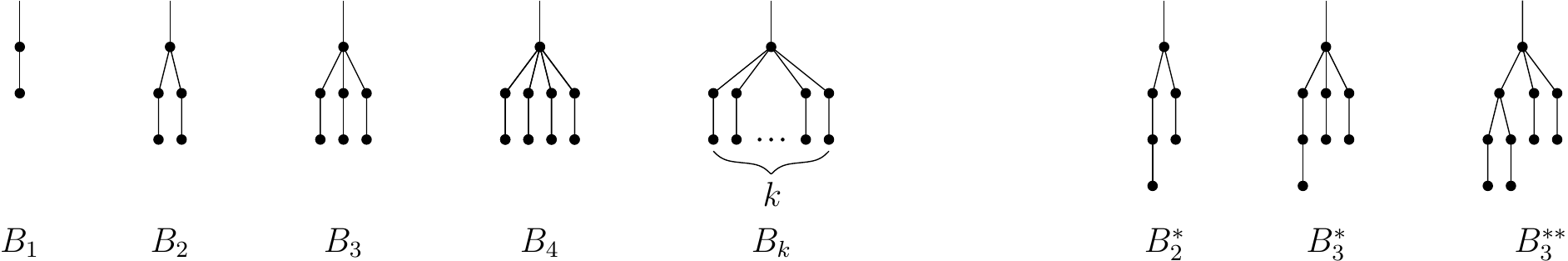}
\caption{$B_k$-branches.}
\label{B_k-branches}
\end{center}
\end{figure}




By Theorem~\ref{thm-LG-10} and Corollary~\ref{co-GFI-10}, it follows that the minimal-ABC tree is comprised
of a tree $T$ to whose each pendant vertiex a $B_k$-branch is attached.
If $T$ is just a single vertex, then the minimal-ABC trees are the same trees that are minimal 
with respect to Kragujevac trees.
In this section, we present new results considering the types of $B_k$-branches that a minimal-ABC cannot contain.

Theorem~\ref{thm-GFI-30} says that there is at most one pendant path of length $k \geq 3$ in the tree with minimal ABC-index.
%
It was already observed in~\cite{hag-ktmabci-14} that the position of the path of length $k \geq 3$ does not have an influence 
on the value of the ABC index. Therefore, we assume that it is attached to the vertex of degree $4$, 
forming a $B_3^*$-branch (see Figure~\ref{B_k-branches} for an illustration).


The next  proposition is from~\cite{d-sptmabci-2014} and will be used in the proofs in the main text here.
The function $f(x,y)$ is defined as in (\ref{eqn:000}).

\begin{pro}  \label{appendix-pro-030}
Let $g(x,y)=-f(x,y)+f(x+\Delta x,y-\Delta y)$, with  real numbers $x, y \geq 2$,  $\Delta x \geq 0$,  $0 \leq \Delta y < y$.
Then, $g(x,y)$ increases in $x$ and decreases in $y$.
\end{pro}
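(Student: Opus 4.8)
The plan is to analyze the one-variable functions obtained by freezing one argument and differentiating with respect to the other. Write $g(x,y) = -f(x,y) + f(x+\Delta x, y-\Delta y)$, where $f(s,t) = \sqrt{(s+t-2)/(st)}$. The claim has two symmetric-looking parts: $\partial g/\partial x \ge 0$ and $\partial g/\partial y \le 0$. For the first, note that $\partial g/\partial x = -f_x(x,y) + f_x(x+\Delta x, y - \Delta y)$, so it suffices to show that $x \mapsto f_x(x,t)$ is nondecreasing in $x$ for fixed $t\ge 2$ (then apply it at the two shift-amounts, noting $\Delta x \ge 0$), together with the fact that $f_x(s,t)$ is monotone in $t$ in the direction needed so that decreasing the second argument from $y$ to $y-\Delta y$ pushes $f_x$ the right way. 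Actually the cleanest route is to compute $f_x$ explicitly and show it is itself monotone in each variable with the appropriate signs; then $g_x = f_x(x+\Delta x, y-\Delta y) - f_x(x,y)$ is a difference that increases one argument and decreases the other, and one checks both moves increase $f_x$.

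Concretely, I would first compute the partial derivatives of $f$. With $f(x,y)^2 = (x+y-2)/(xy) = 1/x + 1/y - 2/(xy)$, differentiating gives $2 f f_x = -1/x^2 + 2/(x^2 y) = (2 - y)/(x^2 y)$, so
\[
f_x(x,y) = \frac{y-2}{2 x^2 y\, f(x,y)} = -\frac{1}{2 x^2}\cdot\frac{y-2}{y}\cdot\frac{1}{f(x,y)},
\]
wait—the sign: $2-y \le 0$ for $y \ge 2$, so $f_x \le 0$, consistent with $f$ decreasing in $x$. Then I would examine the factors of $f_x(x,y) = \dfrac{2-y}{2x^2 y f(x,y)}$: as $x$ increases with $y$ fixed, $x^2$ increases and $f(x,y)$ decreases, and the numerator is a negative constant, so $f_x$ moves toward $0$ from below, i.e. increases — this is exactly the needed monotonicity in $x$. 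For the behavior in $y$ one has to be a little more careful because $y$ appears in the numerator, in $y$ in the denominator, and inside $f$; I would rewrite $f_x = \dfrac{2-y}{2x^2 \sqrt{y}\,\sqrt{xy+ x y - 2x}}$ hmm, better: $y f(x,y) = y\sqrt{(x+y-2)/(xy)} = \sqrt{y(x+y-2)/x} = \sqrt{(xy + y^2 - 2y)/x}$, so
\[
f_x(x,y) = \frac{2-y}{2x^2}\cdot\sqrt{\frac{x}{\,y^2 + xy - 2y\,}} = \frac{2-y}{2x^{3/2}\sqrt{y^2 + xy - 2y}}.
\]
Now as $y$ increases (with $x$ fixed), $|2-y| = y-2$ increases and the radicand $y^2 + xy - 2y = y(y+x-2)$ increases, so the magnitude comparison is between $(y-2)$ growing and $\sqrt{y(y+x-2)}$ growing; one shows the ratio $(y-2)^2/(y(y+x-2))$ is increasing in $y$ (its numerator has a zero at $y=2$ while the denominator is positive there, and a direct derivative-of-the-ratio check settles it), hence $|f_x|$ increases in $y$, i.e. $f_x$ decreases. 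Therefore decreasing the second argument from $y$ to $y-\Delta y$ increases $f_x$. Combining: $g_x(x,y) = f_x(x+\Delta x, y-\Delta y) - f_x(x,y) \ge 0$ since both the increase in the first argument and the decrease in the second argument can only increase $f_x$; and if $\Delta x = \Delta y = 0$ it is $0$, consistent with "increases."

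The second half, that $g$ decreases in $y$, is entirely analogous: $g_y(x,y) = -f_y(x,y) - f_y(x+\Delta x, y - \Delta y)\cdot 0$—no: $g_y = -f_y(x,y) + \frac{\partial}{\partial y}f(x+\Delta x, y-\Delta y) = -f_y(x,y) + f_y(x+\Delta x, y-\Delta y)$ (the chain rule gives a $+1$ from $\partial(y-\Delta y)/\partial y$). So again it reduces to showing $f_y(s,t)$ increases when $s$ increases and when $t$ decreases, which by the symmetry $f(x,y) = f(y,x)$ is the mirror image of what was just done: $f_y(x,y) = \dfrac{2-x}{2y^{3/2}\sqrt{x^2 + xy - 2x}}$, and one checks $f_y$ is increasing in $x$ (numerator magnitude $x-2$ grows, radicand grows, ratio increasing) — giving $f_y(x+\Delta x, \cdot) \ge f_y(x,\cdot)$ — and that $f_y$ is decreasing in $y$ directly, so replacing $y$ by $y - \Delta y$ only increases it. Hence $g_y \le 0$.

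The main obstacle I anticipate is the monotonicity check for the ratio $(t-2)^2 / (t(t + s - 2))$ as a function of $t \ge 2$ (equivalently, that $|f_x|$ is increasing in $y$): it is a rational-function inequality that should reduce, after clearing denominators and differentiating, to something manifestly nonnegative for $s, t \ge 2$, but one must be careful near $t = 2$ (where the numerator vanishes) and confirm there are no sign changes for $s$ large. Everything else is bookkeeping with the explicit formulas for $f_x$ and $f_y$ and the two-step "increase one argument, decrease the other" decomposition, using $\Delta x \ge 0$ and $0 \le \Delta y < y$ to guarantee the shifted arguments stay in the domain $x, y \ge 2$ (here one uses $y - \Delta y > 0$; if additionally $y - \Delta y \ge 2$ is needed for the stated domain, it is part of the hypotheses as applied). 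I would also double check that $f$ and its partials are continuous and differentiable on the relevant region so that sign-of-derivative arguments translate to genuine monotonicity.
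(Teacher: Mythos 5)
The paper does not actually prove this proposition; it is imported from \cite{d-sptmabci-2014}, so your argument has to stand on its own. Your first half does: the formula $f_x(x,y)=\frac{2-y}{2x^{3/2}\sqrt{y(y+x-2)}}$ is correct, $f_x$ is nonpositive and increasing in $x$ for fixed $y\ge 2$, and the ratio $(y-2)^2/\bigl(y(y+x-2)\bigr)$ has derivative with numerator $(y-2)\bigl((x+2)y+2(x-2)\bigr)\ge 0$, so $|f_x|$ increases and $f_x$ decreases in the second argument; hence $f_x(x+\Delta x,y-\Delta y)\ge f_x(x,y)$ and $g_x\ge 0$. One caveat you half-acknowledge but should resolve: the hypothesis only gives $y-\Delta y>0$, not $y-\Delta y\ge 2$. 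The monotonicity of $f_x$ in its second argument does extend to $t\in(0,2)$ (there the numerator $2-t$ is positive and decreasing while the denominator is positive and increasing), but $f_x(s,t)$ is \emph{de}creasing in $s$ when $t<2$, so your two-step decomposition must be performed in the order ``increase the first argument at fixed $y\ge 2$, then decrease the second argument''; the reverse order can fail.

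The second half contains a genuine error. To conclude $g_y\le 0$ you need $f_y(x+\Delta x,y-\Delta y)\le f_y(x,y)$, i.e.\ that $f_y$ \emph{decreases} under the move, whereas you reduce to showing it increases. Moreover, both monotonicities you assert for $f_y$ are backwards: from $f_y(s,t)=\frac{2-s}{2t^{3/2}\sqrt{s(s+t-2)}}$, for $s\ge 2$ the numerator is nonpositive and $|f_y|$ increases in $s$ (the same ratio computation with the variables exchanged), so $f_y$ is \emph{decreasing} in its first argument --- you conflate $|f_y|$ with $f_y$ here, a slip you avoided in the first half; and the denominator increases in $t$ while the numerator is a nonpositive constant, so $f_y$ is \emph{increasing} in its second argument. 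This is also what the symmetry $f_y(s,t)=f_x(t,s)$ actually yields. Taken at face value, your claims would give $f_y(x+\Delta x,y-\Delta y)\ge f_y(x,y)$ and hence $g_y\ge 0$, the opposite of what you conclude. The statement is nonetheless true and your strategy is salvageable: with the correct signs, increasing the first argument and decreasing the second both push $f_y$ down, so $f_y(x+\Delta x,y-\Delta y)\le f_y(x,y)$ and $g_y\le 0$. As written, though, the proof of the second half is invalid.
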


\noindent
Due the symmetry of the function $f(.,.)$  Proposition~\ref{appendix-pro-030} can be rewritten as follows.

\begin{pro}  \label{appendix-pro-030-2}
Let $g(x,y)=-f(x,y)+f(x -\Delta x,y+\Delta y)$, with  real numbers $x, y \geq 2$,  $\Delta y \geq 0$,  $0 \leq \Delta x < x$.
Then, $g(x,y)$ decreases in $x$ and increases in $y$.
\end{pro}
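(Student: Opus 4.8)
The plan is to deduce Proposition~\ref{appendix-pro-030-2} directly from Proposition~\ref{appendix-pro-030}, exploiting the fact that the function in~(\ref{eqn:000}) is symmetric, $f(x,y)=f(y,x)$. First I would rewrite the quantity of interest as $g(x,y)=-f(x,y)+f(x-\Delta x,y+\Delta y)=-f(y,x)+f(y+\Delta y,x-\Delta x)$, simply by swapping the two arguments of each occurrence of $f$. Then I would perform the change of variables $x'=y$, $y'=x$, $\Delta x'=\Delta y$, $\Delta y'=\Delta x$, under which the right-hand side becomes $-f(x',y')+f(x'+\Delta x',\,y'-\Delta y')$, i.e.\ exactly the function $g$ appearing in Proposition~\ref{appendix-pro-030}, evaluated at $(x',y')$ with increments $(\Delta x',\Delta y')$.

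Next I would verify that the hypotheses carry over under this relabeling: from $x,y\geq 2$ we get $x'=y\geq 2$ and $y'=x\geq 2$; from $\Delta y\geq 0$ we get $\Delta x'\geq 0$; and from $0\leq\Delta x<x$ we get $0\leq\Delta y'<y'$. Hence Proposition~\ref{appendix-pro-030} applies and asserts that the expression is increasing in $x'$ and decreasing in $y'$. Translating back via $x'=y$ and $y'=x$, this says precisely that $g(x,y)$ is increasing in $y$ and decreasing in $x$, which is the assertion of Proposition~\ref{appendix-pro-030-2}.

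The only point that needs any attention — and it is a very minor one — is to confirm that the substitution is a bona fide bijective renaming of variables and that the constraint set of Proposition~\ref{appendix-pro-030} is the exact mirror image (under swapping the roles of the two coordinates and of $\Delta x,\Delta y$) of the constraint set of Proposition~\ref{appendix-pro-030-2}; once that bookkeeping is done the result is immediate. I therefore do not expect a genuine obstacle here: there is no new analytic content, only the symmetry of $f$. If for some reason one preferred a self-contained argument, one could instead mimic the proof of Proposition~\ref{appendix-pro-030} and differentiate $g$ with respect to $x$ and to $y$ directly, but the symmetry route is shorter and cleaner.
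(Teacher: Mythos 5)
Your argument is correct and is exactly the paper's route: the paper derives Proposition~\ref{appendix-pro-030-2} from Proposition~\ref{appendix-pro-030} in one line by invoking the symmetry $f(x,y)=f(y,x)$, which is precisely the variable-swap you carry out (with the hypotheses checked more carefully than the paper bothers to). No further comment is needed.
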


\noindent
A $k$-{\em terminal vertex} of a rooted tree is a vertex with degree $k \geq 3$, that is adjacent to a pendant path of length two or three.
A $k$-{\em terminal branch}, referred as a {\it $T_k$-branch}, is a subtree induced by  a $(k+1)$-terminal vertex  and all its (direct and indirect) children vertices.
If the terminal vertex has at least one child with degree at least $3$, then we say that the $k$-terminal branch is {\em proper}.
Notice that $B_k$-branches are $T_k$-branches, but not proper $T_k$-branches, and the only proper $T_k$-branch in
Figure~\ref{B_k-branches} is the $B_3^{**}$-branch.
If a tree is comprised only of  one (proper)  $T_k$-branch, then we call the tree a {\it (proper)  $T_k$-tree}.
Observe that Kragujevac trees are $T_k$-trees.

\begin{pro} \label{pro-terminal-branches-10}
A minimal-ABC tree can contain at most one proper $T_k$-branch, $k \geq 2$.
\end{pro}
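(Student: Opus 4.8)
The plan is to reduce the statement to a structural fact about greedy trees and then read it off from the way the greedy algorithm distributes degrees. By Theorem~\ref{thm-DS}, a tree with minimal ABC index may be taken to be the greedy tree on its own degree sequence, so it suffices to bound the number of proper $T_k$-branches in an arbitrary greedy tree. First I would translate the notion of a proper $T_k$-branch into a condition on the children of a vertex. By Theorem~\ref{thm-GFI-10} a minimal-ABC tree has no internal paths, hence every vertex of degree $2$ lies on a pendant path, and by Theorem~\ref{thm-LG-10} this path has length $2$ or $3$; moreover such degree-$2$ vertices exist, since every leaf sits on such a path. Consequently a vertex $v$ of degree $\geq 3$ is a terminal vertex precisely when at least one of its children has degree $2$ (the first vertex of a pendant path), and the corresponding $T_k$-branch with $k \geq 2$ is proper precisely when $v$ additionally has a child of degree $\geq 3$. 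Thus the claim becomes: a greedy tree has at most one vertex whose children include both a vertex of degree $\geq 3$ and a vertex of degree $2$.

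The heart of the argument is the observation that in a greedy tree the children of any vertex occupy a contiguous block of the weakly decreasing sorted degree sequence $d_1 \geq d_2 \geq \dots \geq d_n$. Indeed, root the tree at the vertex $v$ of Definition~\ref{def-GT} and list the vertices in breadth-first order; since at each step the algorithm assigns the largest still-available degrees to the children of the vertex currently being processed, the degrees read off in this breadth-first order are exactly $d_1, d_2, \dots, d_n$. Because breadth-first processing gives each vertex a set of children that are consecutive in this order, the children of the successive vertices partition $\{2, \dots, n\}$ into consecutive intervals, and each interval is a contiguous block of the sorted degree sequence.

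Now I would locate the unique transition of the degree sequence. Since the degrees split, in this order, into the groups $\geq 3$, equal to $2$, and equal to $1$, there is a single index $t$ with $d_t \geq 3$ and $d_{t+1} = 2$. A child-block can contain both a degree-$\geq 3$ entry and a degree-$2$ entry only if, being contiguous, it contains the consecutive pair $(t, t+1)$; but $t$ and $t+1$ lie in one and the same block, so at most one child-block straddles this transition. Hence at most one vertex has children of both degree $\geq 3$ and degree $2$, that is, at most one proper $T_k$-branch with $k \geq 2$ occurs in total, which is the assertion.

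The main obstacle I anticipate is not the combinatorics above but the two reductions that bracket it. First, one must make precise that the minimal-ABC tree may be taken to be the greedy tree on its degree sequence; this needs the uniqueness of the greedy minimizer for a fixed degree sequence, which I would extract from the proof of Theorem~\ref{thm-DS}. Second, one needs the exact dictionary between ``proper $T_k$-branch'' and ``a vertex with a degree-$2$ child and a degree-$\geq 3$ child'', where Theorems~\ref{thm-GFI-10} and~\ref{thm-LG-10} are used to guarantee that every degree-$2$ child is genuinely the first vertex of a pendant path of length $2$ or $3$. As a self-contained alternative avoiding Theorem~\ref{thm-DS}, one could assume two proper terminal vertices $v_1, v_2$ with $d(v_1) \geq d(v_2)$, relocate the pendant path hanging at $v_2$ onto $v_1$, and invoke Proposition~\ref{appendix-pro-030} to show the resulting tree has strictly smaller ABC index; there the delicate point is the bookkeeping of the edges incident to $v_1$ and $v_2$ across all admissible degrees and path lengths.
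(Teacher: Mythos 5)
Your proof is correct and takes essentially the same route as the paper: both reduce the claim to the greedy-tree structure guaranteed by Theorem~\ref{thm-DS}, the paper arguing in three lines that if two proper terminal vertices $u,v$ existed with $d(u)\geq d(v)$, then $u$ having a degree-$2$ child forces, by greediness, all children of $v$ to have degree at most $2$, contradicting properness. Your contiguous-block formulation of how the greedy algorithm distributes the sorted degree sequence is simply a more explicit (and more careful) rendering of that same observation, so no further comparison is needed.
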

\begin{proof}
Let $u$ and $v$ be root vertices of two $T_k$-branches, and therefore terminal vertices, 
such that $d(u) \geq d(v)$.
Since $u$ is a root of $T_k$-branch, it has a child of degree $2$.
Due to Theorem~\ref{thm-DS} all direct children of $v$ cannot have degree bigger than $2$, 
which is a contradiction that $v$ is a terminal vertex.
\end{proof}

\noindent
All subtrees of a terminal vertex of a proper $T_k$-branch are $B_l$-branches. By Theorem~\ref{thm-DS}, we have $k \geq l$.

\section[Number of $B_1$-branches]{Number of $B_1$-branches} \label{sec:B_1}

In this section we analyze the occurrence of the $B_1$-branches in a minimal-ABC tree and we give
an upper bound on their number. Since the $B_1$-branches can occur only in the proper $T_k$-branches,
we focus our investigation here to these type of branches.

\begin{lemma} \label{lemma-B1-10}
A minimal-ABC tree does not contain a proper $T_k$-branch, $k \geq 13$, as a subtree.
Moreover, a minimal-ABC tree cannot be a proper $T_k$-tree itself if $k \geq 12$.
\end{lemma}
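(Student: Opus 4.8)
The plan is to argue by contradiction and exchange: starting from a proper $T_k$-branch with $k$ large, I will produce a local modification of the tree that strictly decreases the ABC index, contradicting minimality. Let $v$ be the $(k+1)$-terminal vertex, so $d(v)=k+1$ and $v$ is the root of the proper $T_k$-branch. By the remark following Proposition~\ref{pro-terminal-branches-10}, every subtree hanging off $v$ is a $B_l$-branch with $l\le k$; by Theorem~\ref{thm-DS} (the greedy tree) these branches are arranged so that the children of $v$ of largest degree come first, and since the branch is proper at least one child of $v$ has degree $\ge 3$. The quantity $d(v)=k+1$ enters the ABC index only through the $k+1$ edges incident to $v$, each contributing $f(k+1,d(w))$ for a neighbour $w$. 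The idea is that when $k$ is large the factor $f(k+1,\cdot)$ is so small that it is cheaper to ``detach'' one $B_l$-branch from $v$ and reattach it elsewhere — specifically to the vertex of degree $2$ inside one of the $B_1$-branches (or $B_2$-branches) already present — thereby lowering $d(v)$ by one while raising some degree-$2$ vertex to degree $3$.

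The key steps, in order, are: (1) fix the precise move — remove the edge from $v$ to one of its children $w$ (where the subtree rooted at $w$ is a $B_l$-branch, chosen to be a $B_1$-branch if one exists, else the smallest available), and attach $w$ instead as a new child of the central degree-$2$ vertex $z$ of some other $B_1$-branch of $v$, so $d(v)$ drops from $k+1$ to $k$ and $d(z)$ rises from $2$ to $3$; (2) write the net change $\Delta\ABC$ as a sum of the three affected groups of edges: the $k$ surviving edges at $v$ (each changes from $f(k+1,\cdot)$ to $f(k,\cdot)$, a decrease, and here Proposition~\ref{appendix-pro-030} with $x=d(\text{neighbour})$, $y=k+1$, $\Delta x=0$, $\Delta y=1$ quantifies the gain), the edges incident to $z$ inside its $B_1$-branch (these go from $f(2,\cdot)$ to $f(3,\cdot)$, a controlled loss), and the single edge $vw$ together with the at most two edges of $w$'s branch whose endpoint degree is $d(w)$ (bounded loss); (3) bound the loss terms by explicit constants independent of $k$, bound the gain term from below by a quantity that grows — or at least stays bounded away from the loss — as $k\to\infty$, since $f(k,t)-f(k+1,t)$ summed over $k$ neighbours behaves like $k$ times $\Theta(k^{-3/2})=\Theta(k^{-1/2})$, which eventually dominates the $O(1)$ loss; (4) solve the resulting inequality to extract the threshold $k\ge 13$ (and $k\ge 12$ for the standalone $T_k$-tree, where there is no ``rest of the tree'' and the edge $vw$ bookkeeping is slightly different because $v$ may itself be the global root of largest degree). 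For the ``tree itself'' case the argument is the same modification applied within the tree, using that a proper $T_k$-tree of order $n$ forces $d(v)=k+1$ with all the branch structure explicit, so the inequality can be checked directly.

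I expect the main obstacle to be step (3): getting the gain term's lower bound clean enough that the constant threshold comes out as sharply as $13$ (resp.\ $12$) rather than something larger. The surviving neighbours of $v$ have unknown degrees $d_1\ge d_2\ge\cdots\ge d_k\ge 2$, so $\sum_i\bigl(f(k,d_i)-f(k+1,d_i)\bigr)$ must be lower-bounded uniformly over all admissible degree profiles; the worst case is when all $d_i=2$ except the few forced to be large, and one must verify that even then the sum exceeds the loss once $k$ is past the threshold. This is a monotonicity-plus-convexity estimate on $f$ — exactly the kind of thing Propositions~\ref{appendix-pro-030} and~\ref{appendix-pro-030-2} are built for — combined with a finite numerical check at the boundary values $k=12$ and $k=13$. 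A secondary subtlety is ensuring the modified graph is still a tree and still admits the relevant known structural constraints (it does, since we only relocate a whole branch), and handling Lemma~\ref{lemma-15} type interactions if the branch being moved is a $B_4$-branch — but choosing to move a smallest branch, preferentially a $B_1$-branch, sidesteps that.
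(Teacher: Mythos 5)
Your proposal takes a genuinely different transformation from the paper's, and as written it contains a sign error that undermines the whole strategy. Since $\partial_x\bigl[(x+y-2)/(xy)\bigr]=-(y-2)/(x^2y)\le 0$, the function $f$ is non-increasing in each argument (for the other one $\ge 2$); hence when the terminal vertex's degree drops from $k+1$ to $k$, each surviving incident edge changes from $f(k+1,d_i)$ to the \emph{larger} value $f(k,d_i)$. Those $k$ terms are therefore a loss, not a gain, and their total is $O(1/k)$ (plus $O(k^{-3/2})$ for the parent edge), tending to zero --- it does not ``dominate the $O(1)$ loss''; it is the quantity that must itself be beaten. Your proposed worst case ``all $d_i=2$'' is also off: $f(\cdot,2)\equiv 1/\sqrt{2}$, so edges to degree-$2$ children contribute nothing to the change, and the true worst case is $d_i=4$ (this is exactly what Proposition~\ref{appendix-pro-030-2} is used for in the paper). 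The only possible source of gain is the single relocated edge together with the edge into the attachment point, and your choice of attachment point is the second problem: if $z$ is the degree-$2$ vertex adjacent to a leaf inside a $B_1$-branch, raising $d(z)$ from $2$ to $3$ costs $f(3,1)-f(2,1)\approx 0.109$ on the pendant edge, which essentially cancels the saving $f(2,d(v))-f(3,d(v)-1)\approx 0.107$ on the edge from the terminal vertex to $z$; adding the $O(1/k)$ losses above, the net change is positive and the move \emph{increases} the ABC index. (Choosing $z$ not adjacent to a leaf can rescue the move when there are at least two $B_1$-branches, but it degenerates in the case $k_1=1$ that the lemma must also cover, since the relocated branch then has root degree $\ge 3$ and the term $-f(d(v),d(r))+f(3,d(r))$ is itself a loss of about $0.07$.)

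The paper's proof instead detaches a $B_1$-branch from the terminal vertex $u$ and reattaches it to the root $v$ of a sibling $B_2$- or $B_3$-branch (the child of $u$ of smallest degree exceeding $2$). Then the one edge $uv$ changes from $f(d(u),d(v))$ to $f(d(u)-1,d(v)+1)$ with $d(v)\in\{3,4\}$, a genuine saving with no compensating loss at the attachment point (all edges to degree-$2$ vertices are neutral because $f(\cdot,2)$ is constant), while every other affected edge incurs a positive increment bounded by setting child degrees to $4$ and letting $d(w)\to\infty$; the resulting function of $d(u)$ is shown to be decreasing and first becomes negative at $d(u)=14$ (respectively $12$ when $u$ is the global root), which is precisely where the thresholds $k\ge 13$ and $k\ge 12$ come from. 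To repair your argument you would need to (i) reverse the monotonicity claim for $f$, (ii) recognize the relocated edge as the sole gain and the hub's surviving edges as the loss, and (iii) attach to a vertex of degree at least $3$ so the gain is not cancelled --- at which point you have essentially reconstructed the paper's transformation.
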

\begin{proof}
Denote by $T$ a proper $T_k$-branch with a root vertex $u$.
By Theorems~\ref{thm-DS}, \ref{te-no5branches-10} and Lemma~\ref{lemma-15}($a$), it follows that $T$ in
addition to $B_1$-branches may contain only $B_3$ and  $B_4$-branches.
Let $v$ be a child of $u$ with smallest degree larger than two.
 
First, we consider the case when $T$ is a subtree of a minimal-ABC tree $G$.
Assume that the number of  $B_1$-branches  contained in $T$ is $k_1 >0$, while the number of $B_2$ and $B_3$-branches is $k_2>0$.
It holds that $k_1+k_2=k$.
Perform the transformation $\mathcal{T}$ depicted in Figure~\ref{fig-B1-1}.
\begin{figure}[h]
\begin{center}
\includegraphics[scale=0.750]{./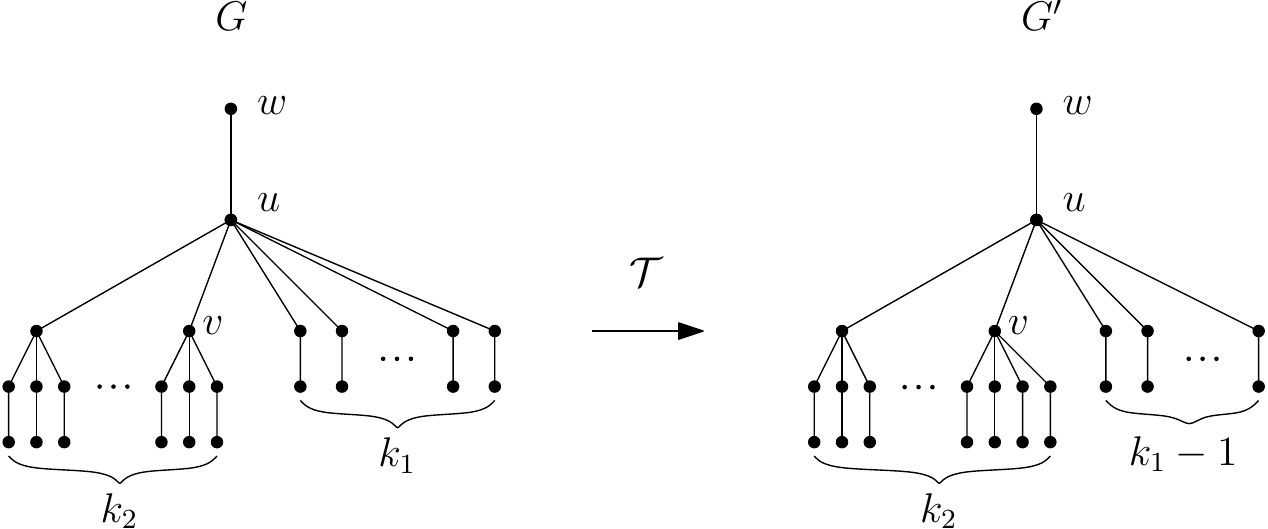}
\caption{Transforamation $\mathcal{T}$ from the proof of  Lemma~\ref{lemma-B1-10}. Note that 
in this illustration all $k_2$ children of $u$ are of degree $4$. As it is shown in the proof below,
in that case the change of the ABC index is the largest.}
\label{fig-B1-1}
\end{center}
\end{figure}
After this transformation the degree of the vertex $u$ decreases by one, while the degree of the vertex $v$ increases by one.
The degrees of the other vertices remain unchanged.
The change of the ABC index is
\beq \label{change-10}
&&-f(d(u),d(v))+f(d(u)-1,d(v)+1)+\sum_{i=1}^{d(u)-k_1-2}(-f(d(u),d(x_i))+f(d(u)-1,d(x_i))) \nonumber \\
&&-f(d(u),d(w))+f(d(u)-1,d(w)),
\eeq
where $x_i, i=1,\dots, d(u)-k_1-2$ are  children vertices of $u$ different than $v$, with degrees $3$ or $4$, and
$w$ is a parent vertex of $u$.
By Proposition~\ref{appendix-pro-030-2} the expression
$-f(d(u),d(v))+f(d(u)-1,d(v)+1)$ increases in $d(v)$, and therefore it is maximal for $d(v)=4$.
By the same proposition, the expressions
$-f(u,d(x_i))+f(d(u)-1,d(x_i))$ and $-f(d(u),d(w))+f(d(u)-1,d(w))$ increase in $x_i$ and $w$, respectively, and thus, 
the expressions are maximal for $d(x_i)=4, i=1,\dots, d(u)-3$, and $d(w) \to \infty$. Hence,
an upper bound on the expression  (\ref{change-10}) is
\beq \label{change-20}
&&-f(d(u),4)+f(d(u)-1,5)+(d(u)-k_1-2)(-f(d(u),4)+f(d(u)-1,4)) \nonumber \\
&&+\lim_{d(w) \to \infty} (-f(d(u),d(w))+f(d(u)-1,d(w))).
\eeq
Since $-f(u,4)+f(d(u)-1,4)$ is positive for $d(u)  > 1$, (\ref{change-20}) is bounded from above,  by
\beq \label{change-20-20}
&&-f(d(u),4)+f(d(u)-1,5)+(d(u)-3)(-f(d(u),4)+f(d(u)-1,4)) \nonumber \\
&&+\lim_{d(w) \to \infty} (-f(d(u),d(w))+f(d(u)-1,d(w))).
\eeq
The expressions
$-f(d(u),4)+f(d(u)-1,5)$ and $-f(d(u),d(w))+f(d(u)-1,d(w))$ decrease in $d(u)$ by Proposition~\ref{appendix-pro-030-2}.
Next we show that the expression $(d(u)-3)(-f(u,4)+f(d(u)-1,4))$ also decreases in $d(u)$.
The first derivative of $(d(u)-3)(-f(u,4)+f(d(u)-1,4))$ 
after a simplification is
\beq \label{change-40}
\ds &\frac{1}{2} \left(\sqrt{\frac{1+d(u)}{-1+d(u)}}-\sqrt{\frac{2+d(u)}{d(u)}}+\frac{1}{2} (d(u)-3)
 \left(-\frac{2}{(-1+d(u))^2 \sqrt{\frac{1+d(u)}{-1+d(u)}}}+\frac{2}{d(u)^2 \sqrt{\frac{2+d(u)}{d(u)}}}\right)\right), \nonumber
\eeq
which is a negative function for positive values of $d(u)$.
It follows that (\ref{change-20-20}) decreases in $d(u)$.
The smallest $d(u)$ for which  (\ref{change-20-20})  is negative ($\approx -0.0000943005$) is $d(u)=14$. 
Therefore, we may conclude that for any $k_1$, also (\ref{change-20}) and (\ref{change-10})  are negative if $d(u) \geq 14$.
Hence, the  change of the ABC index (\ref{change-10}), 
after applying the transformation $\mathcal{T}$, is negative, which is a contradiction to the assumption 
that $T$ is a subtree
of a tree with minimal-ABC index.

Consider now the case when $u$ is the root vertex of the tree with a minimal-ABC index.
We have the same configuration and apply the same transformation as in Figure~\ref{fig-B1-1}, with the only difference that
$u$ does not have a parent vertex.
Hence, it holds that $d(u)=k_1+k_2$.
Now the change of the ABC index is
\beq \label{change-60}
&&-f(d(u),d(v))+f(d(u)-1,d(v)+1)+\sum_{i=1}^{d(u)-k_1-1}(-f(u,d(x_i))+f(d(u)-1,d(x_i))). \nonumber \\
\eeq
Similarly as above we obtain that  (\ref{change-60}) as most
\beq \label{change-70}
&&-f(d(u),4)+f(d(u)-1,5)+(d(u)-k_1-1)(-f(u,4)+f(d(u)-1,4)),
\eeq
and it decreases in $d(u)$ and is maximal for $k_1=1$. 
The smallest $d(u)$ for which (\ref{change-70}) is negative ($\approx -0.000580929$) is $d(u)=12$.
Thus, in this case we again obtain that  
after applying the transformation $\mathcal{T}$, the value of the ABC index decrease, which is a contradiction to the assumption 
that $T$, $k \geq 12$, is a tree with minimal-ABC index.
\end{proof} 

\noindent
The next proposition is based on Lemma~\ref{lemma-B1-10}, and presents few configurations
that cannot be contained in a minimal-ABC tree.

\begin{pro} \label{pro-B1-10-02}
The proper $T_k$-branches depicted in Figure~\ref{fig-B1-05} cannot be subtrees of a minimal-ABC tree,
and the proper $T_k$-branches depicted in Figure~\ref{fig-B1-06} cannot be minimal-ABC trees.
\end{pro}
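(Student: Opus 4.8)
The plan is to derive Proposition~\ref{pro-B1-10-02} directly from Lemma~\ref{lemma-B1-10} together with the structural theorems of Section~\ref{sec:known}, without any new optimization argument. First I would recall that, by Theorems~\ref{thm-DS}, \ref{te-no5branches-10} and Lemma~\ref{lemma-15}(a), a proper $T_k$-branch can contain only $B_1$, $B_3$ and $B_4$-branches as subtrees of its terminal vertex, and that by Proposition~\ref{pro-terminal-branches-10} at most one proper $T_k$-branch (with $k \geq 2$) can occur in a minimal-ABC tree. The point of Lemma~\ref{lemma-B1-10} is the numerical threshold: a proper $T_k$-branch as a subtree forces $k \leq 12$, and a proper $T_k$-\emph{tree} forces $k \leq 11$. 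Each configuration drawn in Figures~\ref{fig-B1-05} and~\ref{fig-B1-06} is (or contains) a proper $T_k$-branch whose terminal vertex has degree $k+1$ exceeding the respective threshold, so it cannot appear.

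The key steps, in order, are as follows. (1) For each branch in Figure~\ref{fig-B1-05}, identify its terminal (root) vertex $u$ and count its children; by construction these are all $B_1$, $B_3$ or $B_4$-branches, at least one of which has degree $\geq 3$ (so the branch is proper), and the number of children is at least $13$, i.e.\ $d(u) \geq 14$. Then Lemma~\ref{lemma-B1-10} (the ``subtree'' part, $k \geq 13$) immediately rules the configuration out. (2) For each tree in Figure~\ref{fig-B1-06}, the whole tree is a proper $T_k$-tree with terminal vertex of degree $\geq 13$, i.e.\ $k \geq 12$, so the ``tree'' part of Lemma~\ref{lemma-B1-10} applies. (3) In either case one should also note that the monotonicity established in the proof of Lemma~\ref{lemma-B1-10} — that the change of the ABC index under the transformation $\mathcal{T}$ decreases in $d(u)$ and is maximized at $k_1 = 1$ — means it suffices to check the \emph{smallest} degree occurring in each depicted configuration; every larger instance is handled a fortiori.

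I would present the argument as a short case analysis keyed to the figures: for each picture, state $d(u)$, observe the branch is a proper $T_k$-branch with $k = d(u) - 1$ at or above the threshold, and invoke the appropriate half of Lemma~\ref{lemma-B1-10} to conclude that applying $\mathcal{T}$ strictly decreases the ABC index, contradicting minimality. If the figures instead display a whole family parametrized by the number of $B_3$/$B_4$-branches, I would phrase it as: the configuration is proper and has $d(u) \geq 14$ (resp.\ $\geq 12$), hence is excluded; the extremal (smallest) member is the only one needing explicit mention because of the established monotonicity in $d(u)$.

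The main obstacle I anticipate is purely bookkeeping rather than mathematical: making sure that each configuration in the two figures genuinely is a \emph{proper} $T_k$-branch (i.e.\ that the terminal vertex really has a child of degree $\geq 3$, so Lemma~\ref{lemma-B1-10} applies and not merely Theorem~\ref{te-no5branches-10} or Lemma~\ref{lemma-15}), and that the degree of the terminal vertex in each picture is correctly read off as meeting the relevant threshold ($\geq 14$ for subtrees, $\geq 12$ for standalone trees). There is also a minor subtlety about the at-most-one pendant path of length $3$ (Theorem~\ref{thm-GFI-30}) and the $B_3^*$ convention fixed earlier — I would check that the figures respect it, so that no extra $B_3$-type branch is being implicitly double-counted. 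Once those identifications are pinned down, the proof is a one-line appeal to Lemma~\ref{lemma-B1-10} for each case.
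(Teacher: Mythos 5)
Your proposal rests on a misreading of what the two figures contain, and as a result it misses the entire mathematical content of the proposition. You assume that every configuration in Figure~\ref{fig-B1-05} has terminal vertex degree $d(u)\geq 14$ (and $\geq 12$ in Figure~\ref{fig-B1-06}), so that Lemma~\ref{lemma-B1-10} applies directly. If that were the case, the proposition would be an empty restatement of the lemma. In fact the depicted configurations all have $d(u)$ \emph{below} the thresholds of Lemma~\ref{lemma-B1-10}: the proof of Theorem~\ref{theorem-B1-10} later cites Figure~\ref{fig-B1-05}($a$) to conclude that $u$ has at most $4$ children of degree larger than $2$ and hence $7\leq d(u)\leq 10$, which is only meaningful if the excluded configuration in that figure has roughly $6$ $B_1$-branches and few $B_2$/$B_3$-branches, i.e.\ $d(u)\leq 11$. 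The point of the proposition is precisely the trade-off region $8\leq d(u)\leq 13$ (resp.\ $5\leq d(u)\leq 11$ for the standalone case), where the transformation $\mathcal{T}$ still decreases the ABC index provided the number $k_1$ of $B_1$-branches is large enough relative to $k_2$.

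The missing idea is monotonicity of the bound (\ref{change-20}) in $k_1$ for fixed $d(u)$ — not the monotonicity in $d(u)$ that you cite, which is what the lemma already uses. The paper computes the derivative of (\ref{change-20}) with respect to $k_1$, namely $-\tfrac{1}{2}\sqrt{(1+d_g)/(-1+d_g)}+\tfrac{1}{2}\sqrt{(2+d_g)/d_g}$, shows it is negative, and then determines numerically, for each $d(u)\in\{8,\dots,13\}$, the smallest $k_1$ making (\ref{change-20}) negative ($6,6,6,5,4,3$ respectively). Translating these into conditions on the pairs $(k_1,k_2)$ is what yields the excluded configurations of Figure~\ref{fig-B1-05}; an analogous analysis of (\ref{change-70}) yields Figure~\ref{fig-B1-06}. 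Note also that this argument must work with the bound (\ref{change-20}), which retains the factor $(d(u)-k_1-2)$, rather than the weaker bound (\ref{change-20-20}) with factor $(d(u)-3)$ used to finish the lemma — your appeal to the lemma's final inequality would discard exactly the $k_1$-dependence that the proposition exploits. Without this additional analysis your argument proves nothing beyond Lemma~\ref{lemma-B1-10} itself.
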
 
\begin{proof}

First, consider the cases when $T$ is a proper subtree of  a minimal-ABC tree. 
For a given value of $d(u)=d_g$, the first derivative of (\ref{change-20}) with respect to $k_1$ is
$$
-\frac{1}{2} \sqrt{\frac{1+d_g}{-1+d_g}}+\frac{1}{2} \sqrt{\frac{2+d_g}{d_g}},
$$
and it s negative for any positive $d_g$, 
from which follows that (\ref{change-20}) decreases in $k_1$ for any fixed value of $d(u)$.
Thus, for $d(u) \geq 14$ the smallest value of $k_1$ for which (\ref{change-20}) is negative is $1$, and
for $d(u)=13,12,11,10,9,8$  the smallest values of $k_1$ for which (\ref{change-20}) is negative are $3,4,5,6,6,6$, respectively.
Or expressed differently, (\ref{change-20}) is negative for
\beq \label{change-75}
& k_1+k_2 \geq 13  \quad \text{and} \quad k_1 \geq  1  \quad \quad (\text{the case}  \quad d(u) \geq 14);  \nonumber  \\
& k_1+k_2 =12 \quad \text{and} \quad k_1 \geq 3  \quad \quad (\text{the case}  \quad d(u) = 13);  \nonumber  \\ 
& k_1+k_2 = 11 \quad \text{and} \quad k_1 \geq 4  \quad \quad (\text{the case}  \quad d(u) = 12); \nonumber  \\
& k_1+k_2 = 10 \quad \text{and} \quad k_1 \geq 5  \quad \quad (\text{the case}  \quad d(u) = 11);   \nonumber  \\
& k_1+k_2 = \,9 \quad \; \text{and} \quad k_1 \geq 6  \quad \quad (\text{the case}  \quad d(u) = 10);   \nonumber  \\
& k_1+k_2 = 8 \quad  \text{and} \quad k_1 \geq 6  \quad \quad (\text{the case}  \quad d(u) = 9);   \nonumber  \\
& k_1+k_2 = 7 \quad  \text{and} \quad k_1 \geq 6  \quad \quad (\text{the case}  \quad d(u) = 8).   \nonumber
\eeq
From the above constrains, one can conclude that for $k_2=1,2,3,4$, the smallest value of $k_1$ for which (\ref{change-20}) is negative is $6$.
Similarly, (\ref{change-20}) is negative, for $k_2=5,6$ and $k_1 \geq 5$, 
for $k_2=7,8$ and $k_1 \geq 4$, 
for $k_2=9,10$ and $k_1 \geq 3$,
for $k_2=11$ and $k_1 \geq 2$, and 
for $k_2 \geq 12$ and  $k_1 \geq 1$.
From here, it follows that  the subtrees depicted in Figure~\ref{fig-B1-05} cannot occur in a minimal-ABC tree.
\begin{figure}[h!]
\begin{center}
\includegraphics[scale=0.750]{./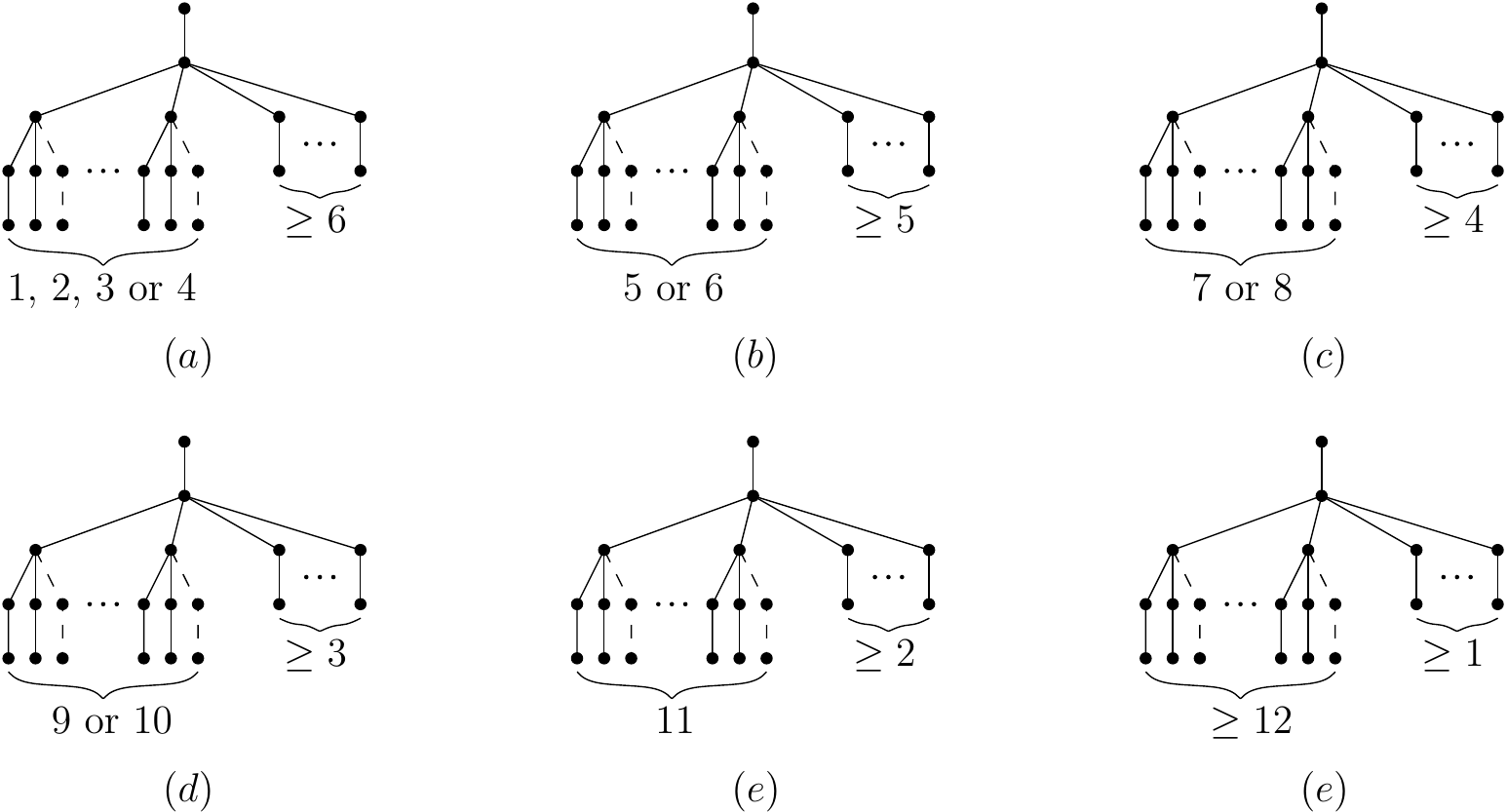}
\caption{Some subtrees that cannot occur in a minimal-ABC tree.
The dashed lines indicates that the particular branch can be a $B_3$ or a $B_2$-branch.}
\label{fig-B1-05}
\end{center}
\end{figure}
%

\noindent
In the case when $T$ is  a minimal-ABC tree itself, we have obtain in  Lemma~\ref{lemma-B1-10} that
there is no minimal-ABC tree that is $B_{\geq 12}$-branch.
Analogously, as in the case when $T$ is a subtree of a minimal-ABC tree, 
in this case we obtain that for $d(u)=11,10,9,8,7,6$  the smallest values of $k_1$ for which (\ref{change-70}) 
is negative are $2,3,4,4,5,5$, respectively, and an identical analysis as above show that 
the trees depicted in Figure~\ref{fig-B1-06} cannot be  minimal-ABC trees.
\begin{figure}[h!]
\begin{center}
\includegraphics[scale=0.750]{./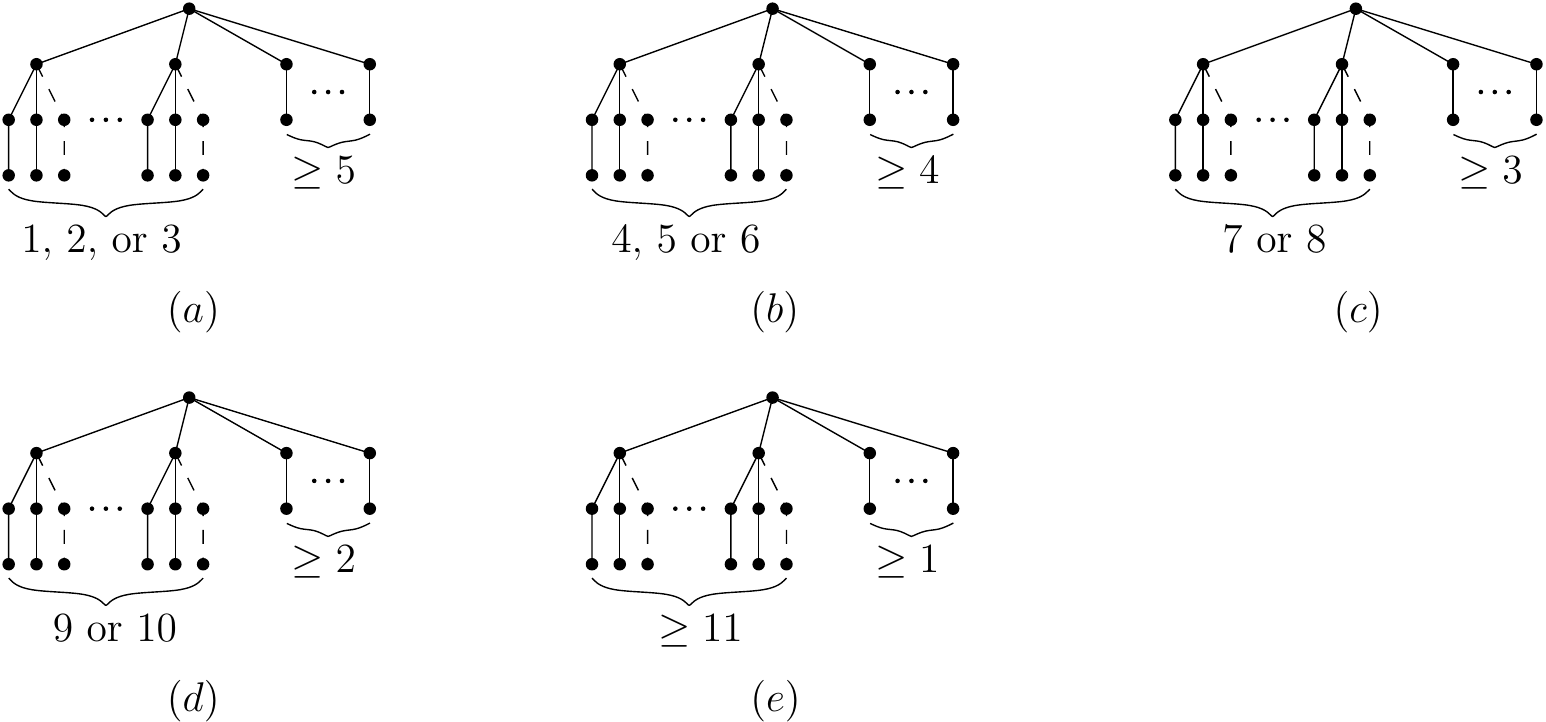}
\caption{Some trees that cannot be  a minimal-ABC trees.
The dashed lines indicates that the particular branch can be a $B_3$ or a $B_2$-branch.}
\label{fig-B1-06}
\end{center}
\end{figure}
%

%
\end{proof}

\noindent
Next we present special cases of Lemma~\ref{lemma-B1-10} and  Proposition~\ref{pro-B1-10-02}, 
with a relaxation that a proper $T_k$-branch contains only $B_2$ and $B_1$-branches.


\begin{lemma} \label{lemma-B1-10-2}
A minimal-ABC tree $T$ does not contain a proper $T_k$-branch, $k \geq 8$, as a subtree, if the $T_k$-branch is comprised only
of $B_2$ and $B_1$-branches.
Moreover, also in this case $T$ cannot be a proper $T_k$-tree itself if $k \geq 7$.
\end{lemma}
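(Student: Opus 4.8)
The plan is to mirror the strategy of Lemma~\ref{lemma-B1-10}, but now exploiting the extra assumption that the proper $T_k$-branch rooted at $u$ contains only $B_1$- and $B_2$-branches, which tightens the analysis because the children $x_i$ of $u$ that are roots of non-$B_1$-branches now have degree exactly $3$ (the root of a $B_2$-branch) rather than possibly $4$. First I would set up the same transformation $\mathcal{T}$ from Figure~\ref{fig-B1-1}: pick a child $v$ of $u$ of smallest degree $>2$ (so $d(v)=3$ here), move one pendant subtree so that $d(u)$ drops by one and $d(v)$ rises by one, and write the change in the ABC index exactly as in~(\ref{change-10}), namely
\beqs
&&-f(d(u),d(v))+f(d(u)-1,d(v)+1)+\sum_{i=1}^{d(u)-k_1-2}\bigl(-f(d(u),d(x_i))+f(d(u)-1,d(x_i))\bigr)\\
&&-f(d(u),d(w))+f(d(u)-1,d(w)),
\eeqs
in the subtree case (with the parent term $w$ dropped when $u$ is the root, as in~(\ref{change-60})).

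Next I would bound this change from above using Proposition~\ref{appendix-pro-030-2} exactly as before: the first term is increasing in $d(v)$, but now $d(v)\le 3$, so it is maximized at $d(v)=3$, giving $-f(d(u),3)+f(d(u)-1,4)$; each summand $-f(d(u),d(x_i))+f(d(u)-1,d(x_i))$ is increasing in $d(x_i)$ and now $d(x_i)\le 3$, so each is maximized at $d(x_i)=3$; and the parent term is handled by letting $d(w)\to\infty$. One checks (as in the proof of Lemma~\ref{lemma-B1-10}) that $-f(d(u),3)+f(d(u)-1,3)>0$ so that enlarging the number of summands only increases the bound, giving
\beqs
&&-f(d(u),3)+f(d(u)-1,4)+(d(u)-3)\bigl(-f(d(u),3)+f(d(u)-1,3)\bigr)\\
&&+\lim_{d(w)\to\infty}\bigl(-f(d(u),d(w))+f(d(u)-1,d(w))\bigr)
\eeqs
in the subtree case, and $-f(d(u),3)+f(d(u)-1,4)+(d(u)-2)\bigl(-f(d(u),3)+f(d(u)-1,3)\bigr)$ in the root case (taking $k_1=1$, the worst case, since the bound decreases in $k_1$ by the same derivative computation as in Proposition~\ref{pro-B1-10-02}). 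Then I would show, as there, that each of these bounds is monotonically decreasing in $d(u)$ — the two $f$-difference terms by Proposition~\ref{appendix-pro-030-2}, and the product term $(d(u)-3)(-f(d(u),3)+f(d(u)-1,3))$ by an explicit first-derivative computation analogous to~(\ref{change-40}), verifying it is negative for the relevant range of $d(u)$. Finally I would locate the threshold: find the smallest $d(u)$ making the subtree bound negative — this should be $d(u)=9$, i.e. $k=d(u)-1\ge 8$ forces a contradiction when $T$ is a subtree — and the smallest $d(u)$ making the root bound negative, which should be $d(u)=8$, i.e. $k\ge 7$ when $T$ is the whole tree; here $d(u)=k$ because $u$ has no parent. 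Since $\mathcal{T}$ strictly decreases the ABC index whenever the bound is negative, both cases contradict minimality, proving the claim.

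\textbf{Main obstacle.} The conceptual content is entirely parallel to Lemma~\ref{lemma-B1-10}; the only genuine work is the sign verification of the derivative of $(d(u)-3)(-f(d(u),3)+f(d(u)-1,3))$ and the numerical determination of the two thresholds $d(u)=9$ and $d(u)=8$. I expect the threshold computation to be the delicate point, since it must be checked that the bound is still \emph{positive} at $d(u)=8$ in the subtree case (so that $k=7$ is genuinely not excluded by this argument, consistent with the statement) and at $d(u)=7$ in the root case; getting these boundary values right, rather than off by one, is where care is needed. Everything else is a routine substitution of $d(v)=3$ and $d(x_i)=3$ into the already-established machinery.
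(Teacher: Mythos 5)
Your plan is essentially identical to the paper's proof: the paper itself carries out ``the analog transformation and analysis as in Lemma~\ref{lemma-B1-10}'', with the children of degree $3$ replacing those of degree $4$, arrives at exactly your two upper bounds, and locates the thresholds numerically. One correction, though, on the point you yourself flag as delicate: in the root case your numbers are internally inconsistent. Since $d(u)=k$ there, concluding ``$k\geq 7$'' requires the bound to first become negative at $d(u)=7$ (and to still be positive at $d(u)=6$), not at $d(u)=8$ as you state; a direct evaluation of $-f(d(u),3)+f(d(u)-1,4)+(d(u)-2)\bigl(-f(d(u),3)+f(d(u)-1,3)\bigr)$ gives approximately $+0.0034$ at $d(u)=6$ and $-0.0079$ at $d(u)=7$, so the correct threshold is $d(u)=7$ and the argument does deliver the claimed $k\geq 7$. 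With that threshold fixed (the subtree-case threshold $d(u)=9$, positive at $d(u)=8$, is as you describe and matches the paper), the proof is complete.
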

\begin{proof}
We proceed with the analog transformation (see Figure~\ref{fig-B1-1-10}) and analysis as in Lemma~\ref{lemma-B1-10}.
So, we omit most of the details that were  mentioned in Lemma~\ref{lemma-B1-10}.
\begin{figure}[h]
\begin{center}
\includegraphics[scale=0.750]{./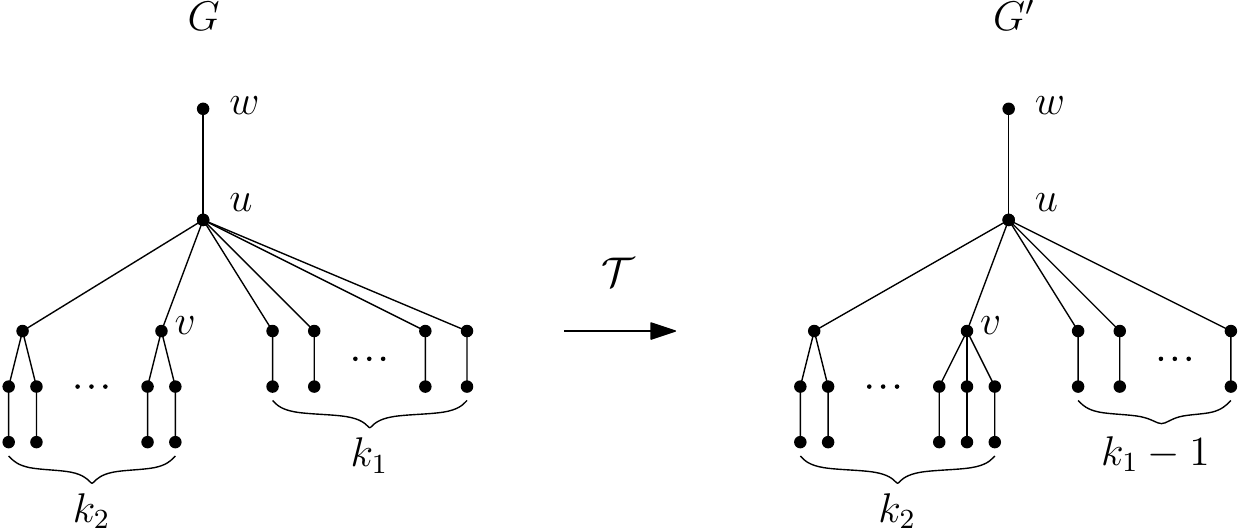}
\caption{Transforamation $\mathcal{T}$ from  the proof of Lemma~\ref{lemma-B1-10-2}.}
\label{fig-B1-1-10}
\end{center}
\end{figure}
After this transformation the degree of the vertex $u$ decreases by one, while the degree of the vertex $v$ increases by one.
The degrees of other vertices remain unchanged.
The change of the ABC index is at most
\beq \label{change-20-2a}
&&-f(d(u),3)+f(d(u)-1,4)+(d(u)-k_1-2)(-f(d(u),3)+f(d(u)-1,3)) \nonumber \\
&&+\lim_{d(w) \to \infty} (-f(d(u),d(w))+f(d(u)-1,d(w))). 
\eeq
and it is  bounded from above by
\beq \label{change-20-2}
&&-f(d(u),3)+f(d(u)-1,4)+(d(u)-3)(-f(u,3)+f(d(u)-1,3)) \nonumber \\
&&+\lim_{d(w) \to \infty} (-f(d(u),d(w))+f(d(u)-1,d(w))).
\eeq
By the same arguments as in the proof of Lemma~\ref{lemma-B1-10}, it follows that the expressions
$-f(u,3)+f(d(u)-1,3)$, $(d(u)-3)(-f(u,3)+f(d(u)-1,3))$ and $-f(d(u),d(w))+f(d(u)-1,d(w))$ decrease in $d(u)$.  
The smallest $d(u)$ for which (\ref{change-20-2}) is negative is 
$d(u)=9$. Hence, for $d(u)\geq 9$ or $k \geq 8$, the  change of the ABC index, 
after applying the transformation $\mathcal{T}$, is negative, which is a contradiction to the assumption 
that $T_{k}$ belongs to a tree with minimal-ABC index.

Consider now the case when $u$ is the root vertex of the tree with a minimal-ABC index.
We have the same configuration and apply the same transformation as in Figure~\ref{fig-B1-1-10}, with the only difference that
$u$ does not have a parent vertex.
Hence, it holds that $d(u)=k_1+k_2$.
Now the change of the ABC index is
\beq \label{change-60-2}
&&-f(d(u),d(v))+f(d(u)-1,d(v)+1)+\sum_{i=1}^{d(u)-k_1-1}(-f(u,d(x_i))+f(d(u)-1,d(x_i))). \nonumber \\
\eeq
Similarly as above, we obtain that  (\ref{change-60-2}) is as most
\beq \label{change-70-2}
&&-f(d(u),3)+f(d(u)-1,4)+(d(u)-k_1)(-f(u,3)+f(d(u)-1,3)),
\eeq
it decreases in $d(u)$ and is maximal for $k_1=1$. 
The smallest $d(u)$ for which (\ref{change-70-2}) is negative is $d(u)=7$.
Thus, in this case we again obtain that  
after applying the transformation $\mathcal{T}$, the value of the ABC index decreases, which is a contradiction to the assumption 
that $T_{k}$ is a tree with minimal-ABC index.
\end{proof}

\begin{pro} \label{pro-B1-10-00}
The proper $T_k$-branches depicted in Figure~\ref{fig-B1-1-20} cannot be subtrees of a minimal-ABC tree,
and the proper $T_k$-branches depicted in Figure~\ref{fig-B1-1-30} cannot be minimal-ABC trees.
\end{pro}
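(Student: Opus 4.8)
The plan is to mirror, essentially verbatim, the case-analysis already carried out in the proof of Proposition~\ref{pro-B1-10-02}, but now feeding in the sharper threshold values obtained in Lemma~\ref{lemma-B1-10-2} for $T_k$-branches consisting solely of $B_1$ and $B_2$-branches. First I would recall that, by Theorem~\ref{thm-DS} together with the fact that we have restricted attention to $B_1$ and $B_2$-branches, the relevant change-of-ABC-index quantities are exactly the upper bounds (\ref{change-20-2}) (for the subtree case, where $u$ has a parent $w$ and $d(u)-k_1-2$ siblings of $v$ of degree $3$) and (\ref{change-70-2}) (for the whole-tree case, where $d(u)=k_1+k_2$ and there is no parent). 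Both expressions were shown in Lemma~\ref{lemma-B1-10-2} to be decreasing in $d(u)$, and (\ref{change-70-2}) to be maximal at $k_1=1$.

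Next I would establish monotonicity in $k_1$ for fixed $d(u)=d_g$: differentiating (\ref{change-20-2a}) with respect to $k_1$ gives
$$
-\tfrac12\sqrt{\tfrac{d_g}{d_g-2}}+\tfrac12\sqrt{\tfrac{d_g+1}{d_g-1}},
$$
which is negative for all admissible $d_g$ (this is the exact analogue of the derivative computed in the proof of Proposition~\ref{pro-B1-10-02}, with the "$4$"s replaced by "$3$"s), so (\ref{change-20-2a}) is strictly decreasing in $k_1$. Combined with Lemma~\ref{lemma-B1-10-2}, which already handles $d(u)\ge 9$ (resp.\ $d(u)\ge 7$ for the tree case), it remains only to determine, for each small value $d(u)=8,7,6,\dots$ (resp.\ $d(u)=6,5,\dots$ in the tree case), the least $k_1$ for which the relevant bound becomes negative. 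These are finitely many one-variable sign checks; I would record the resulting thresholds (the analogues of the table (\ref{change-75})) and then translate them, exactly as in Proposition~\ref{pro-B1-10-02}, from constraints on $(d(u),k_1)$ into constraints on $(k_2,k_1)$ via $k_1+k_2=d(u)$ in the tree case and $k_1+k_2 = d(u)-(\text{number of remaining siblings})$ — i.e.\ reading off, for each $k_2$, the smallest offending $k_1$. Those $(k_1,k_2)$ pairs are precisely the branches drawn in Figures~\ref{fig-B1-1-20} and~\ref{fig-B1-1-30}, and since a negative change of ABC index under the transformation $\mathcal T$ contradicts minimality, none of them can occur, which is the claim.

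The only real obstacle is bookkeeping rather than mathematics: one must be careful that in the subtree case the transformation removes one child of $u$ to become $v$ and that the $\lim_{d(w)\to\infty}$ term is being used to get an \emph{upper} bound on the change, so the thresholds obtained are conservative (genuinely sufficient for the contradiction); and one must make sure the figures' dashed-edge convention — a branch that may be either $B_1$ or $B_2$ — is consistent with the restriction $l\le k$ from Theorem~\ref{thm-DS}. Aside from verifying the handful of numerical sign evaluations for small $d(u)$, there is no delicate step; the structure of the argument is identical to that of Proposition~\ref{pro-B1-10-02}.
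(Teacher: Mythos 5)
Your proposal follows essentially the same route as the paper's proof: it specializes the argument of Proposition~\ref{pro-B1-10-02} to the bounds (\ref{change-20-2a}) and (\ref{change-70-2}) from Lemma~\ref{lemma-B1-10-2}, establishes that the bound decreases in $k_1$ for fixed $d(u)$, determines the threshold $k_1$ for each small $d(u)$ by finitely many sign checks, and converts these into $(k_1,k_2)$ constraints. One correction: the derivative of (\ref{change-20-2a}) with respect to $k_1$ is $f(d_g,3)-f(d_g-1,3)=\bigl(\sqrt{1+1/d_g}-\sqrt{d_g/(d_g-1)}\bigr)/\sqrt{3}$, not the expression you wrote (which is neither this nor the literal ``$4$ replaced by $3$'' analogue of the derivative in Proposition~\ref{pro-B1-10-02}); fortunately both expressions are negative, so the monotonicity conclusion and the rest of the argument are unaffected. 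Note also that in the subtree case the conversion is $d(u)=k_1+k_2+1$ (the extra unit coming from the edge to the parent $w$), which is the relation used in the paper's threshold table.
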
 
\begin{proof}
Since this proposition is a special case of Proposition~\ref{pro-B1-10-02},  the derivation of their proofs are analogous.
First, we consider the cases when $T$ is a proper subtree of  a minimal-ABC tree. 
For a given value of $d(u)=d_g$, the first derivative of (\ref{change-20-2a}) with respect to $k_1$ is
$$
\frac{\sqrt{1+\frac{1}{d_g}}-\sqrt{\frac{d_g}{-1+d_g}}}{\sqrt{3}},
$$
and it s negative for any positive $d_g$, 
from which follows that (\ref{change-20-2a}) decreases in $k_1$ for any fixed value of $d(u)$.
Thus, for $d(u) \geq 9$ the smallest value of $k_1$ for which (\ref{change-20-2a}) is negative is $1$, and
for $d(u)=8,7,6$  the smallest values of $k_1$ for which (\ref{change-20-2a}) is negative are $2,4,5,$ respectively.
Or expressed differently, (\ref{change-20-2a}) is negative for
\beq \label{change-75}
& k_1+k_2 \geq 8  \quad \text{and} \quad k_1 \geq 1   \quad \quad (\text{the case}  \quad d(u) \geq 9);  \nonumber  \\
& k_1+k_2 =7 \quad \text{and} \quad k_1 \geq 2  \quad \quad (\text{the case}  \quad d(u) = 8);  \nonumber  \\ 
& k_1+k_2 = 6 \quad \text{and} \quad k_1 \geq 4  \quad \quad (\text{the case}  \quad d(u) = 7). \nonumber
\eeq
For $d(u) = 6$, i.e., $k_1+k_2 = 5$, (\ref{change-20-2a}) is negative if  $k_1 \geq 5$. However, 
this is not a feasible combination, since $k_2$ must be positive.
From the above constrains, one can conclude that for $k_2=1$, the smallest value of $k_1$ for which (\ref{change-20-2a}) is negative is $5$.
Similarly, (\ref{change-20-2a}) is negative, for $k_2=2,3$ and $k_1 \geq 4$, 
for $k_2=4$ and $k_1 \geq 3$, 
for $k_2=5,6$ and $k_1 \geq 2$, and
for $k_2 \geq 7$ and $k_1 \geq 1$.
From here, it follows that  the subtrees depicted in Figure~\ref{fig-B1-1-20} cannot occur in a minimal-ABC tree.
\begin{figure}[h!]
\begin{center}
\includegraphics[scale=0.750]{./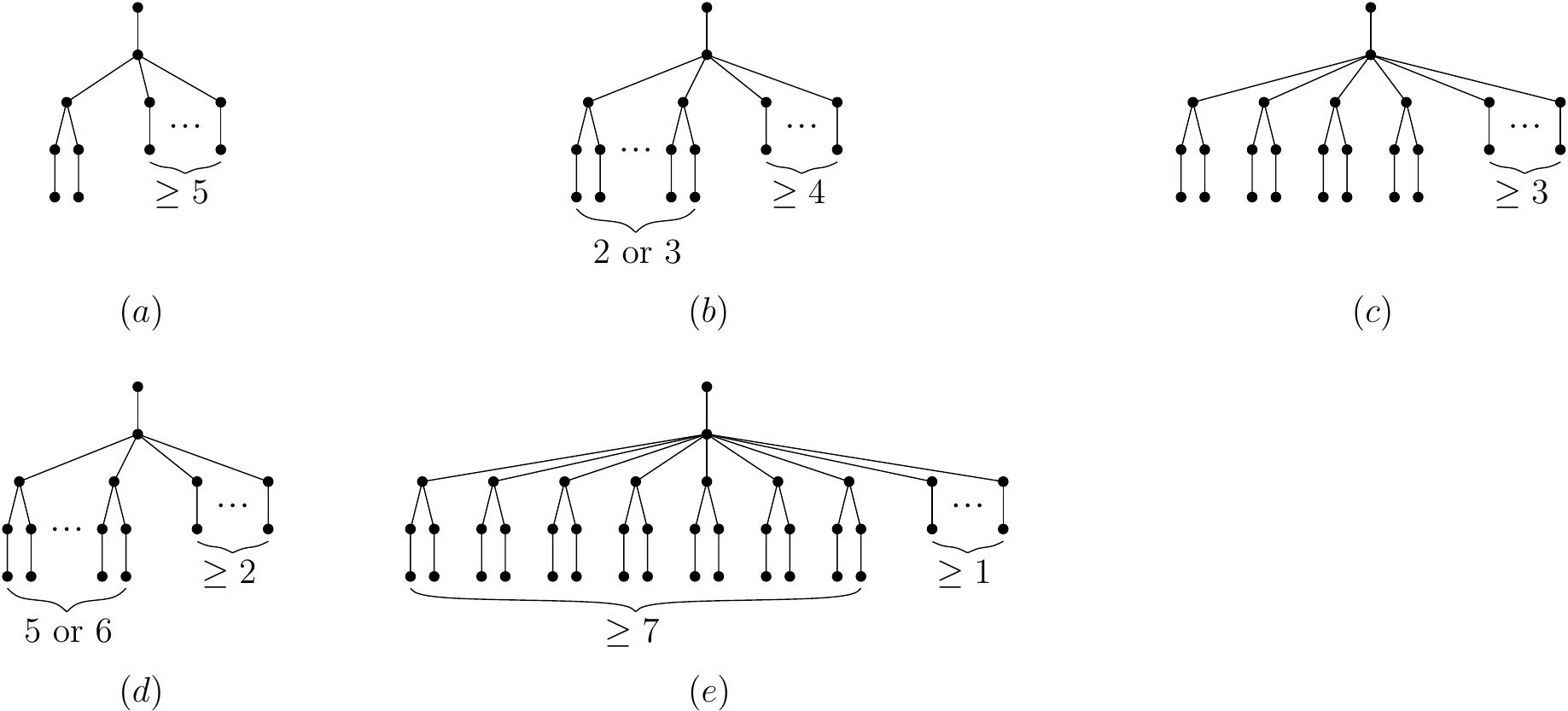}
\caption{Some subtrees that cannot occur in a minimal-ABC tree.}
\label{fig-B1-1-20}
\end{center}
\end{figure}
%

\noindent
In the case when $T$ is  a minimal-ABC tree itself, we have obtain in Lemma~\ref{lemma-B1-10-2} that
there is no minimal-ABC tree that is a proper $T_k$-branch, $k \geq 7$.
Analogously, as in the case when $T$ is a subtree of a minimal-ABC tree, 
in this case we obtain that for $d(u)=6,5$  the smallest values of $k_1$ for which (\ref{change-70-2}) 
is negative are $2,3$, respectively, and identical analysis as above show that 
the trees depicted in Figure~\ref{fig-B1-1-30} cannot be  minimal-ABC trees.

\begin{figure}[h]
\begin{center}
\includegraphics[scale=0.750]{./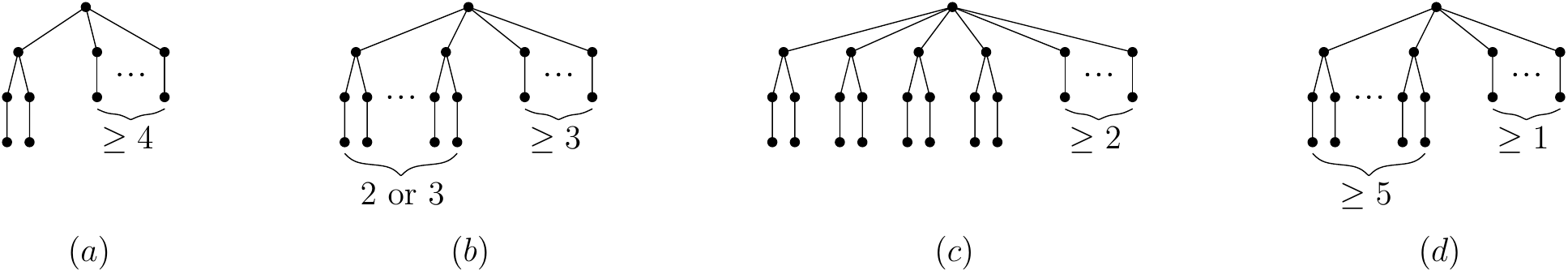}
\caption{Some trees that are not minimal-ABC trees.}
\label{fig-B1-1-30}
\end{center}
\end{figure}
\end{proof}

\noindent
The following result is the main result in this section and it gives an upper bound 
on the number of  $B_1$-branches in a minimal-ABC tree.

\begin{te} \label{theorem-B1-10}
A minimal-ABC $G$ tree can contain at most four $B_1$-branches.
Moreover, if $G$ is a $T_k$-branch itself, then it can contain at most three $B_1$-branches.
\end{te}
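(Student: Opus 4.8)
The plan is to reduce the question to a local transformation argument, exactly in the spirit of the proofs of Lemmas~\ref{lemma-B1-10} and~\ref{lemma-B1-10-2}. Suppose, for contradiction, that a minimal-ABC tree $G$ contains at least five $B_1$-branches. By Theorem~\ref{thm-DS} (greedy trees), together with Theorem~\ref{te-no5branches-10} and Lemma~\ref{lemma-15}, the $B_1$-branches of $G$ can only be attached to terminal vertices of proper $T_k$-branches, and all such $B_1$-branches are children of vertices whose degree $\ge 3$ children are themselves $B_1$, $B_3$ or $B_4$-branches (no $B_{\ge 5}$; no mix of $B_1$ and $B_4$ at a common parent). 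Moreover, by Proposition~\ref{pro-terminal-branches-10} a minimal-ABC tree contains at most one proper $T_k$-branch. Hence all the $B_1$-branches of $G$ sit inside a single proper $T_k$-branch $T$ with some root vertex $u$, and if $G$ itself is a $T_k$-tree then $T=G$. So it suffices to bound $k_1$, the number of $B_1$-branches of this unique $T$.

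The key step is then purely quantitative. First I would dispose of the ``$G$ is a $T_k$-tree itself'' case: by Lemma~\ref{lemma-B1-10-2} such a $G$ with only $B_1$ and $B_2$-branches has $k=d(u)\le 6$, and by Lemma~\ref{lemma-B1-10} (or Proposition~\ref{pro-B1-10-02}) together with Lemma~\ref{lemma-15}$(a)$ the presence of a $B_4$-branch excludes $B_1$-branches at the same parent; combining the case distinctions on $d(u)$ recorded in Proposition~\ref{pro-B1-10-02} (for $d(u)=11,\dots,6$ the smallest destabilizing $k_1$ is $2,3,4,4,5,5$) with the feasibility constraint $k_2\ge1$, one reads off $k_1\le 3$. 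For the general case, where $T$ is a proper subtree of $G$, the vertex $u$ has a parent $w$, and I would apply the transformation $\mathcal T$ of Figure~\ref{fig-B1-1}: move one $B_1$-child of $u$ onto the smallest-degree child $v$ of $u$ with $d(v)>2$. The change in the ABC index is (\ref{change-10}), bounded above by (\ref{change-20}); as established in Proposition~\ref{pro-B1-10-02}, (\ref{change-20}) is monotone decreasing in $k_1$ for each fixed $d(u)$, and is already negative for $d(u)\ge 14$ with $k_1\ge1$, and for $d(u)=13,12,11,10,9,8$ with $k_1\ge 3,4,5,6,6,6$ respectively. Reorganizing these constraints by $k_2=k-k_1$ (as done after (\ref{change-75})) shows (\ref{change-20}) is negative whenever $k_2\le 4$ and $k_1\ge 6$, whenever $k_2\in\{5,6\}$ and $k_1\ge5$, and so on. In every one of these situations the transformation strictly decreases $\ABC$, contradicting minimality; hence $k_1\le 5$ is forced only in the narrow window $k_2\in\{1,2,3,4\}$, and there $k_1\le 5$. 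It remains to rule out $k_1=5$ exactly, which I expect follows by noting that $k_1=5$ with $k_2\le4$ means $k=d(u)\le 9$, a range already excluded by the Figure~\ref{fig-B1-05} configurations of Proposition~\ref{pro-B1-10-02} (for $d(u)=9$ the threshold is $k_1\ge6>5$), so in fact $k_1\le4$, completing the claim.

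The main obstacle is the bookkeeping at the boundary. The inequality (\ref{change-20}) only becomes negative for $d(u)$ moderately large, so small-$d(u)$ cases $d(u)\le 9$ are not handled by the transformation $\mathcal T$ alone; for those I must lean on the structural exclusions already proved (Propositions~\ref{pro-B1-10-02} and~\ref{pro-B1-10-00}, Lemma~\ref{lemma-15}) and on the fact that with few large-degree children a $B_1$-heavy $T_k$-branch is anyway incompatible with the greedy-tree structure of Theorem~\ref{thm-DS}. Assembling the three sources of constraints — the transformation inequality for large $d(u)$, the tabulated thresholds for $d(u)\in\{8,\dots,13\}$, and the feasibility requirement $k_2\ge 1$ — so that they jointly yield the clean bounds $4$ (general case) and $3$ ($T_k$-tree case), and verifying that no boundary case slips through, is the delicate part; the analytic monotonicity facts are all already in hand from the earlier propositions.
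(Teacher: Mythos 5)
Your reduction to a single proper $T_k$-branch (via Proposition~\ref{pro-terminal-branches-10}) and your handling of the subcase where the branch contains only $B_1$ and $B_2$-branches (via Lemma~\ref{lemma-B1-10-2} and Proposition~\ref{pro-B1-10-00}) agree with the paper. The gap is in the remaining subcase, where the $T_k$-branch also contains a $B_3$-branch. There you try to close the argument using only the single-$B_1$-relocation transformation of Lemma~\ref{lemma-B1-10} together with the thresholds tabulated in Proposition~\ref{pro-B1-10-02}, but those thresholds say the opposite of what you need: for $k_2\in\{1,2,3,4\}$ the bound (\ref{change-20}) is negative only when $k_1\geq 6$, so the configuration $k_1=5$ with $k_2\leq 4$ is precisely the one that this transformation provably does \emph{not} destabilize. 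Your sentence ``for $d(u)=9$ the threshold is $k_1\ge 6>5$, so in fact $k_1\le 4$'' is a non sequitur — the threshold being $6$ means $k_1=5$ survives the test, not that it is excluded. The same problem occurs in the case where $G$ is a $T_k$-tree itself: the thresholds for (\ref{change-70}) at $d(u)=6$ require $k_1\geq 5$, so $k_1=4$ with $k_2=2$ is not excluded, and one cannot simply ``read off'' $k_1\le 3$ from Proposition~\ref{pro-B1-10-02}.

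The paper closes exactly this residual window with a second, genuinely different transformation (Figure~\ref{fig-B1-2}), absent from your proposal: it simultaneously detaches two of the five $B_1$-branches of $u$ and redistributes their vertices onto a $B_3$-child $v$ of $u$ and onto two of the remaining $B_1$-branches, so that $d(u)$ drops by two, $d(v)$ drops by one, and two other children gain degree. The resulting change of the ABC index, (\ref{change-80}) bounded by (\ref{change-90}), is evaluated only over the feasible range $7\leq d(u)\leq 10$ forced by Proposition~\ref{pro-B1-10-02}, and is strictly negative (at most $\approx -0.0145$); this is what actually eliminates $k_1=5$ in the subtree case and, mutatis mutandis, $k_1=4$ in the root case. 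Without introducing some such second local move, your argument cannot be completed, because in the boundary configurations the only transformation you invoke increases the ABC index.
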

\begin{proof}
Here we consider again two cases: when $G$ has a $T_k$-branch as subtree or when
$G$ is a $T_k$-branch itself.
Recall that by Theorem~\ref{te-no5branches-10} and Lemma~\ref{lemma-15}($a$), a $T_k$-branch does not contains a $B_4$-branch.

\bigskip
\noindent
{\bf Case $1$.} $G$ has a $T_k$-branch as subtree.

\noindent
Let $G$ be a  minimal-ABC tree that have more than three $B_1$-branches.
If the $T_k$-branch contains only $B_2$-branches as its children, then by Proposition~\ref{pro-B1-10-00} (Figure~\ref{fig-B1-1-20}($a$)),
it cannot contains more than $4$ $B_1$-branches.
So we assume that the $T_k$-branch contains at least one $B_3$-branches.
Observe, that by Proposition~\ref{pro-B1-10-02} (Figure~\ref{fig-B1-05}($a$)),
it $G$ cannot contain a $T_k$-branch with more than $5$ $B_1$-branches.
In this case, we perform the transformation $\mathcal{T}$ depicted in Figure~\ref{fig-B1-2}.
\begin{figure}[h]
\begin{center}
\includegraphics[scale=0.750]{./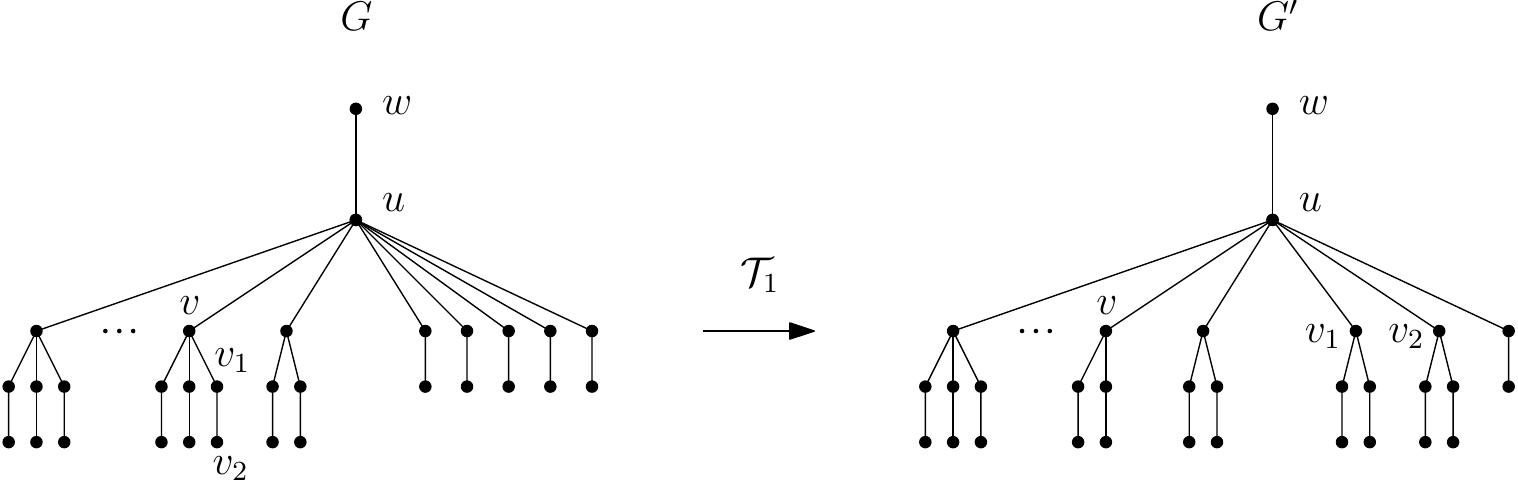}
\caption{Transforamation $\mathcal{T}_1$ from proof of  Theorem~\ref{theorem-B1-10},~Case $1$.}
\label{fig-B1-2}
\end{center}
\end{figure}
After this transformation the degree of the vertex $u$ decreases by two, while the degree of the vertex $v$ decreases by one.
The degree of the vertex $v_1$ increases by one, and degree of the vertex $v_2$ increases by two.
The degrees of other vertices remain unchanged.
The change of the ABC index is
\beq \label{change-80}
&&-f(d(u),4)+f(d(u)-2,3)+\sum_{i=1}^{d(u)-7}(-f(d(u),d(x_i))+f(d(u)-2,d(x_i))) \nonumber \\
&&-f(4,2)+f(d(u)-2,3)-f(2,1)+f(d(u)-2,3)) -f(d(u),d(w))+f(d(u)-2,d(w)), \nonumber \\
\eeq
where $x_i, i=1,\dots, d(u)-7$ are  children vertices of $u$ different than $v$, with degrees $3$ or $4$, and
$w$ is a parent vertex of $u$.
By Proposition~\ref{appendix-pro-030-2} the expressions $-f(d(u),d(x_i))+f(d(u)-2,d(x_i))$ and $-f(d(u),d(w))+f(d(u)-2,d(w))$
increase in $x_i$, $i=1, \dots, d(u)-7$, and $d(w)$, respectively. Thus, (\ref{change-80}) is bounded from above by
\beq \label{change-90}
&&-f(d(u),4)+f(d(u)-2,3)+(d(u)-7)(-f(d(u),4)+f(d(u)-2,4)) \nonumber \\
&&-f(4,2)+f(d(u)-2,3)-f(2,1)+f(d(u)-2,3))   \nonumber \\
&&\lim_{d(w) \to \infty}(-f(d(u),d(w))+f(d(u)-2,d(w))).
\eeq
by Proposition~\ref{pro-B1-10-02} (Figure~\ref{fig-B1-05}($a$)), it follows that 
$u$ may have at most $4$ children of degree larger than $2$.
Thus, $7 \leq d(u) \leq 10$.
For all $4$ possible values of  $d(u)$, (\ref{change-90}) is largest for $d(u)=7$, and its value is
$\approx -0.0145446$.
Thus, after applying the transformation from Figure~\ref{fig-B1-2} the  ABC index decreases,
which is in a contradiction that $G$ is a minimal-ABC tree.

\bigskip
\noindent
{\bf Case $2$.} $G$ is a $T_k$-branch itself.

\noindent
Let $G$ be a  minimal-ABC tree that have more than four $B_1$-branches.
If the $T_k$-branch contains only $B_2$-branches as its children, then by Proposition~\ref{pro-B1-10-00} (Figure~\ref{fig-B1-1-30}($a$)),
it cannot contains more than $3$ $B_1$-branches.
So we assume that the $T_k$-branch contains at least one $B_3$-branches.
Observe, that by Proposition~\ref{pro-B1-10-02} (Figure~\ref{fig-B1-06}($a$)),
$G$ cannot contain a $T_k$-branch with more than $4$ $B_1$-branches.
In this case, we perform the same transformation $\mathcal{T}$ as in Figure~\ref{fig-B1-2}
(the only difference in this case is that there is no vertex $w$ -$u$ is the root of the tree, and 
there are $4$ $B_1$-branches).
After this transformation the degree of the vertex $u$ decreases by two, while the degree of the vertex $v$ decreases by one.
The degree of the vertex $v_1$ increases by one, and degree of the vertex $v_2$ increases by two.
The degrees of other vertices remain unchanged.
The change of the ABC index is
\beq \label{change-100}
&&-f(d(u),4)+f(d(u)-2,3)+\sum_{i=1}^{d(u)-5}(-f(d(u),d(x_i))+f(d(u)-2,d(x_i))) \nonumber \\
&&-f(4,2)+f(d(u)-2,3)-f(2,1)+f(d(u)-2,3)),
\eeq
where $x_i, i=1,\dots, d(u)-5$ are  children vertices of $u$ different than $v$, with degrees $3$ or $4$.
By Proposition~\ref{appendix-pro-030-2} the expressions $-f(d(u),d(x_i))+f(d(u)-2,d(x_i))$ 
increase in $x_i$, $i=1, \dots, d(u)-7$. Thus, (\ref{change-100}) is bounded from above by
\beq \label{change-110}
&&-f(d(u),4)+f(d(u)-2,3)+(d(u)-5)(-f(d(u),4)+f(d(u)-2,4)) \nonumber \\
&&-f(4,2)+f(d(u)-2,3)-f(2,1)+f(d(u)-2,3)).  
\eeq
by Proposition~\ref{pro-B1-10-02} (Figure~\ref{fig-B1-06}($a$)), it follows that 
$u$ may have at most $3$ children of degree larger than $2$.
Thus, $5 \leq d(u) \leq 7$.
For all $3$ possible values of  $d(u)$, (\ref{change-110}) is largest for $d(u)=5$, and its value is
$\approx -0.00582154$.
Thus, after applying the above  transformation the  ABC index decreases,
which is in a contradiction that $G$ is a minimal-ABC tree.
\end{proof}

\noindent
In the next section we analyze the $B_2$-branches and there occurrence in the minimal-ABC trees.

\section[Number of $B_2$-branches]{Number of $B_2$-branches} \label{sec:B_2}

First, we present two configurations that cannot occur as subtrees of a minimal-ABC tree.
Their exclusion will be considered in the proofs of some of the results presented later in 
this section.

\begin{pro} \label{pro-B2-10}
The tree depicted in Figure~\ref{fig-B2-1}~$(a)$ cannot be a subtree of a minimal-ABC tree. 
\end{pro}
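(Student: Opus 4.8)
The plan is to follow the same edge-shifting strategy that drove the proofs of Lemma~\ref{lemma-B1-10} and Theorem~\ref{theorem-B1-10}: identify a local transformation $\mathcal{T}$ that redistributes degree away from a high-degree vertex and toward a lower-degree neighbour, write the resulting change of the ABC index as a sum of terms $-f(\cdot,\cdot)+f(\cdot,\cdot)$, bound each term from above using the monotonicity furnished by Propositions~\ref{appendix-pro-030} and \ref{appendix-pro-030-2}, and then show the resulting one-variable (or two-variable, in $d(u)$ and $k_1$) upper bound is strictly negative throughout the relevant parameter range — contradicting minimality. First I would fix the labelling in Figure~\ref{fig-B2-1}$(a)$: let $u$ be the root (terminal-type) vertex of the offending configuration, let $v$ be the child of smallest degree larger than two, and let $w$ be the parent of $u$ (when it exists). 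Since a configuration built around several $B_2$-branches with a common parent is at issue, I expect the transformation to detach one $B_2$-branch (or a pendant path of length two) from $u$ and re-root it at $v$, exactly as in Figure~\ref{fig-B1-1}, so that $d(u)$ drops by one and $d(v)$ rises by one while every other degree is frozen.

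Next I would assemble the change of the ABC index. It consists of: the term $-f(d(u),d(v))+f(d(u)-1,d(v)+1)$ coming from the moved branch's root edge; a sum $\sum_i\bigl(-f(d(u),d(x_i))+f(d(u)-1,d(x_i))\bigr)$ over the remaining children $x_i$ of $u$; and, if $w$ exists, the term $-f(d(u),d(w))+f(d(u)-1,d(w))$. By Proposition~\ref{appendix-pro-030-2} each of these expressions is increasing in the ``partner'' degree $d(v)$, $d(x_i)$, $d(w)$, so I replace $d(v)$ by its largest admissible value, replace each $d(x_i)$ by its largest admissible value (here $3$, since in a configuration of $B_2$-branches the siblings have degree at most $3$), and let $d(w)\to\infty$. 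This yields a clean upper bound depending only on $d(u)$ (and, if I keep track of how many siblings are present, on $k_1$ and $k_2$), of exactly the shape of \eqref{change-20-2} and \eqref{change-70-2}. I would then invoke — verbatim as in Lemma~\ref{lemma-B1-10-2} — the facts that $-f(d(u),3)+f(d(u)-1,3)>0$, that $(d(u)-c)\bigl(-f(d(u),3)+f(d(u)-1,3)\bigr)$ is decreasing in $d(u)$ (its first derivative being a negative function for positive $d(u)$), and that the remaining pieces are decreasing in $d(u)$ by Proposition~\ref{appendix-pro-030-2}; hence the whole bound is monotone in $d(u)$, and it suffices to check its sign at the single smallest value of $d(u)$ that the configuration in Figure~\ref{fig-B2-1}$(a)$ forces. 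That numeric check — a finite, explicit evaluation of $f$ at small integer arguments — closes the argument by contradiction with minimality.

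The main obstacle I anticipate is not the monotonicity bookkeeping, which is by now routine, but pinning down the exact combinatorial constraints that Figure~\ref{fig-B2-1}$(a)$ imposes: precisely how many $B_2$-branches hang from $u$, whether any $B_3$-branch is also present, and therefore what the correct lower bound on $d(u)$ is and which value of $k_1$ (number of $B_1$-branches) to plug in. If the forced value of $d(u)$ turns out to be small enough that the single-transformation bound is only borderline negative, I would instead iterate the transformation (moving two branches, as in Figure~\ref{fig-B1-2}, decreasing $d(u)$ by two and raising the degrees of two distinct lower vertices), which as in Case~$1$ of Theorem~\ref{theorem-B1-10} produces a comfortably negative change $\approx -0.0145$ even at $d(u)=7$. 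A secondary subtlety is the boundary case where $u$ has no parent $w$; there $d(u)=k_1+k_2$ is one smaller for the same branch count, the $w$-term is absent, and the critical $d(u)$ shifts down by one, exactly as in the ``root'' sub-cases already handled — so I would split the proof into the two parallel cases ``$u$ has a parent'' and ``$u$ is the root'' and run the identical computation in each.
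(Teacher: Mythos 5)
There is a genuine gap here, and it is not just the bookkeeping you defer to the figure. The configuration of Figure~\ref{fig-B2-1}~$(a)$ is a vertex $u$ of degree $3$ whose two children are both roots of $B_2$-branches (this is exactly how the proposition is later invoked in Case~$1$ of the proof of Theorem~\ref{te-no2branches-10a}). The transformation you commit to is the one of Figure~\ref{fig-B1-1}: lower $d(u)$ by one and raise the degree of a child $v$ by one. Applied to this configuration, that move \emph{increases} the ABC index. Indeed, with $d(u)=3$ dropping to $2$ and $d(v)=3$ rising to $4$, the affected terms are $-f(d(w),3)+f(d(w),2)\geq 0.040$, $-f(3,3)+f(2,4)\approx +0.040$ and $-f(3,3)+f(4,3)\approx -0.021$, while every pendant-path term vanishes because $f(x,2)=\sqrt{1/2}$ for all $x$; the total is at least $+0.059$ for every admissible $d(w)\geq 3$. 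Shifting a pendant path between the two sibling $B_2$-roots instead gives $\approx +0.019$, so that variant fails as well.

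The paper's transformation is oriented the other way: it \emph{raises} $d(u)$ from $3$ to $4$ by pulling a pendant path of length two up from one $B_2$-child, whose degree drops to $2$; the decisive negative contributions in (\ref{change-B2-10}) are $-f(d(w),3)+f(d(w),4)$ and $-f(3,3)+f(4,3)$. Two further points of your template then break down. First, by Proposition~\ref{appendix-pro-030-2} the term $-f(d(w),3)+f(d(w),4)$ \emph{decreases} in $d(w)$, so the extremal case is the smallest admissible value $d(w)=3$ (guaranteed by Theorem~\ref{thm-DS}), not the limit $d(w)\to\infty$ you plan to take. Second, the final margin is only about $-0.0019$, so there is no slack to absorb a transformation chosen with the wrong orientation. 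Your fallback of decreasing $d(u)$ by two as in Figure~\ref{fig-B1-2} is unavailable since $d(u)=3$, and the parameters $k_1$, $k_2$ you intend to track do not arise: the configuration contains no $B_1$-branches at all.
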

\begin{proof}
\begin{figure}[h!]
\begin{center}
\includegraphics[scale=0.75]{./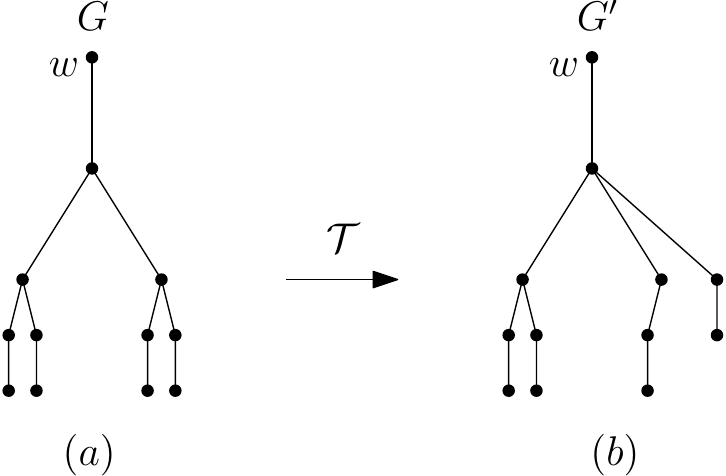}
\caption{An illustration of the transformation $\mathcal{T}$ from the proof of Proposition~\ref{pro-B2-10}.}
\label{fig-B2-1}
\end{center}
\end{figure}
After applying the transformation $\mathcal{T}$ from  Figure~\ref{fig-B2-1}, the change of the ABC index is
\beq \label{change-B2-10}
&&-f(d(w),3)+f(d(w),4)-f(3,3)+f(3,2)-f(3,3)+f(4,3)).
\eeq
By Theorem~\ref{thm-DS}, $d(w)$ cannot be smaller than the degrees of its children vertices.
By Proposition~\ref{appendix-pro-030-2}, $-f(d(w),3)+f(d(w),4)$ decreases in $d(w)$, and thus, it is maximal for $d(w)=3$. 
Therefore, (\ref{change-B2-10}) is at most
\beq \label{change-B2-20}
&&-f(3,3)+f(3,4)-f(3,3)+f(3,2)-f(3,3)+f(4,3)) < -0.0018988. \nonumber
\eeq
\end{proof}

\begin{pro} \label{pro-B2-20}
The tree depicted in Figure~\ref{fig-B2-2}~$(a)$ cannot be a subtree of a minimal-ABC tree.
\begin{figure}[h]
\begin{center}
\includegraphics[scale=0.75]{./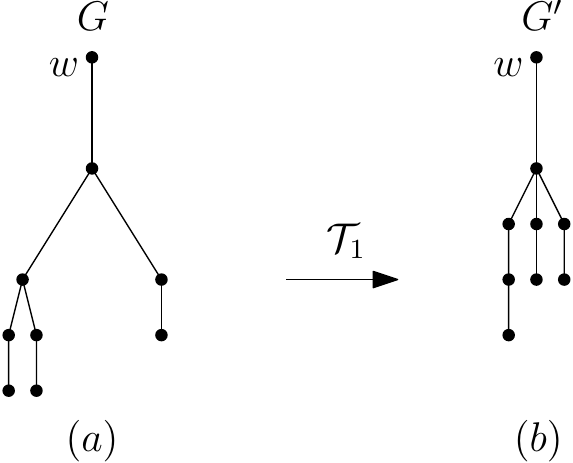}
\caption{An illustration of the transformation $\mathcal{T}_1$ from proof of Proposition~\ref{pro-B2-20}.}
\label{fig-B2-2}
\end{center}
\end{figure}
\end{pro}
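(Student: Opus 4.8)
The plan is to follow the template established in Proposition~\ref{pro-B2-10}: apply the local surgery $\mathcal{T}_1$ drawn in Figure~\ref{fig-B2-2}, write the induced change of the ABC index as a finite sum of terms $-f(a,b)+f(a',b')$ taken over the edges whose endpoint degrees are altered, and then bound that sum from above by a negative absolute constant, which contradicts the assumed minimality of the host tree.

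The first step is bookkeeping: record exactly which vertices change degree under $\mathcal{T}_1$ and by how much. Since $\mathcal{T}_1$ is a bounded-radius operation, only a constant number of edges are affected, and after the transformation the endpoint degrees of all of them are known fixed small integers, with the sole exception of the edge joining the displayed configuration to its parent $w$ in the ambient tree. For that edge the contribution has the form $-f(d(w),c)+f(d(w),c')$ (or $-f(d(w),c)+f(d(w)\pm 1,c')$ if $d(w)$ itself changes); Proposition~\ref{appendix-pro-030-2}, and if needed Proposition~\ref{appendix-pro-030}, tells us whether this term is increasing or decreasing in $d(w)$, while Theorem~\ref{thm-DS} in the form ``the degree of a vertex is at least the degree of each of its children'' — exactly as used in Proposition~\ref{pro-B2-10} — together with Theorem~\ref{thm-GFAS} pins $d(w)$ to its extremal admissible value (either $d(w)=3$, or $d(w)\to\infty$, according to the sign of the monotonicity). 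Substituting that value leaves a bound that is a sum of values of $f$ at fixed small integers, and a direct numerical evaluation shows it is strictly negative; this is the desired contradiction.

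I expect the main obstacle to be confirming that the single transformation $\mathcal{T}_1$ really handles every admissible instance of the configuration: the subscript in ``$\mathcal{T}_1$'' strongly suggests that for some sub-range of the free parameters — for instance when $w$ already has large degree, or when one of the neighbouring branches is a $B_3$-branch rather than a $B_2$-branch — the decrease produced by $\mathcal{T}_1$ fails to be strict, and a companion transformation $\mathcal{T}_2$ must be applied there instead. One then has to argue that the regimes in which $\mathcal{T}_1$ and $\mathcal{T}_2$ respectively yield a strict decrease together exhaust the parameter space, possibly after first ruling out a handful of small sub-configurations by the already-established Proposition~\ref{pro-B2-10} and Lemma~\ref{lemma-15}. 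The monotonicity lemmas reduce the whole verification to a finite check of the signs of explicit constants (at most first derivatives are involved), so the difficulty is entirely in organizing the case distinction so that no boundary case slips through. Once every case has been covered, minimality of the host tree is contradicted in each of them, which proves the proposition.
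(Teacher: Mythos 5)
Your overall template (apply the pictured surgery, expand the change of $\ABC$ over the affected edges, use Propositions~\ref{appendix-pro-030} and~\ref{appendix-pro-030-2} to push $d(w)$ to its extremal admissible value, then check a sign) is the right one, and it is what the paper does for the \emph{first} transformation. But your central computational claim fails here: the change produced by $\mathcal{T}_1$ is
$-f(d(w),3)+f(d(w),4)-f(3,3)+f(4,2)$, which is indeed maximal at the extremal value $d(w)=3$, where it equals
$-\tfrac23+\sqrt{5/12}-\tfrac23+\sqrt{1/2}\approx +0.019>0$; at $d(w)=4$ it is still $\approx +0.007>0$, and it only becomes negative for $d(w)\ge 5$. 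So substituting the extremal $d(w)$ does \emph{not} yield a negative constant, and $\mathcal{T}_1$ alone proves nothing for $d(w)\in\{3,4\}$. Your hedge about a companion transformation points in the right direction structurally, but you locate the failure in the wrong regime (you suggest it occurs ``when $w$ already has large degree''; it is exactly the small degrees $3$ and $4$ that fail), and you leave the companion argument entirely unspecified.

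That companion argument is the bulk of the actual proof and is not a routine variant of $\mathcal{T}_1$. For $d(w)=3$ one first uses Theorems~\ref{thm-GFI-10} and~\ref{thm-DS} and Proposition~\ref{pro-terminal-branches-10} to pin down that the other child $v$ of $w$ has degree $3$, and then applies a transformation $\mathcal{T}_2$ that raises $d(w)$ from $3$ to $6$ by absorbing the children of $v$ and $u$, turning $v$ and $u$ into vertices of degree $2$ and $1$; the resulting change involves the grandparent $z$ and is bounded by $\approx-0.091$. For $d(w)=4$ an analogous but distinct transformation $\mathcal{T}_3$ raises $d(w)$ to $d(v_1)+d(v_2)+1$ and requires checking the four combinations $d(v_1),d(v_2)\in\{3,4\}$, giving a bound of $\approx-0.187$. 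Without these two constructions --- and the structural lemmas that justify the assumed shape of $w$'s neighbourhood in each case --- the proposition is not proved; so as it stands your proposal has a genuine gap rather than being a complete alternative route.
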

\begin{proof}
The change of the ABC index after applying the transformation  $\mathcal{T}_1$ depicted in  Figure~\ref{fig-B2-2},  is
\beq \label{change-B2-30}
&&-f(d(w),3)+f(d(w),4)-f(3,3)+f(4,2).
\eeq
Similarly as in the previous proposition, we conclude that the expression $-f(d(w),3)+f(d(w),4)$ decreases in $d(w)$ 
and it is maximal for $d(w)=3$.
A straightforward verification of (\ref{change-B2-30}) shows that it is negative for $d(w) \geq 5$.
Next, we consider separately the cases $d(w)= 3$ and $d(w)= 4$.

\bigskip
\noindent
{\bf Case $1$.} $d(w)= 3$.

\smallskip
\noindent
In this case,
the vertex $w$, beside the child $u$, has one more child,denoted by $v$, which by Theorems~\ref{thm-GFI-10}, \ref{thm-DS}  
and Proposition~\ref{pro-terminal-branches-10} has degree $3$ (see Figure~\ref{fig-B2-3}~(a) for an illustration).
\begin{figure}[h!]
\begin{center}
\includegraphics[scale=0.75]{./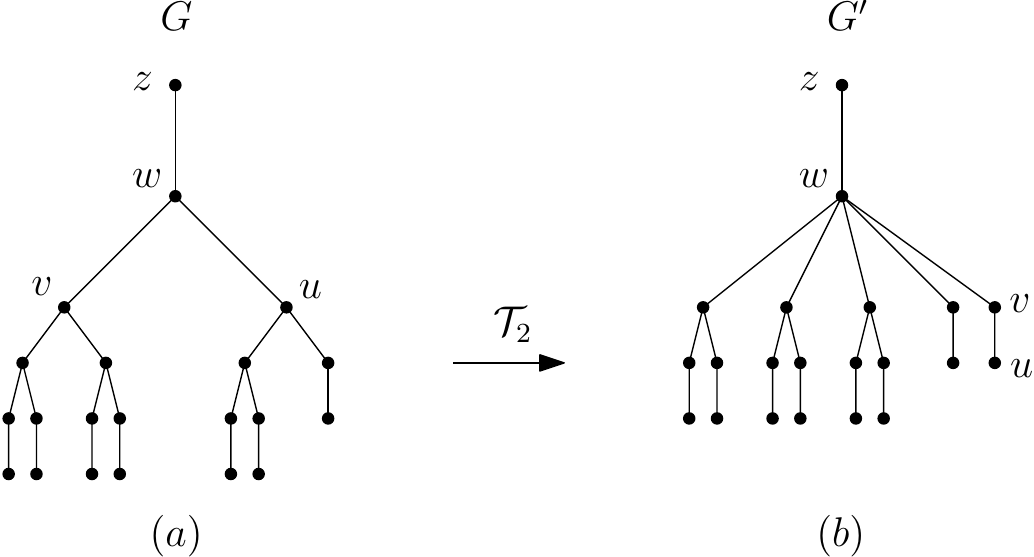}
\caption{An illustration of the transformation $\mathcal{T}_2$ from the proof of Proposition~\ref{pro-B2-20}, the case $d(w)=3$.}
\label{fig-B2-3}
\end{center}
\end{figure}
%
After applying the transformation $\mathcal{T}_2$ depicted in  Figure~\ref{fig-B2-3}, the degree of 
the  vertex $w$ increases from $3$ to $6$, while the degrees of 
the  vertices $v$ and $u$ decreases from $3$ to $2$ and $1$, respectively. Thus, the total change of the ABC index of
$G$ is
\beq \label{change-B2-40}
&&-f(d(z),3)+f(d(z),6)-f(3,3)+f(6,2)-f(3,3)+f(2,1) \nonumber \\ 
&& +3(-f(3,3)+f(6,3)).
\eeq
By Proposition~\ref{appendix-pro-030-2}, $-f(d(z),3)+f(d(z),6)$ decreases in $d(z)$, and thus, 
(\ref{change-B2-40}) reaches its maximum of  $\approx -0.0913482$ for $d(z)=3$.

If $w$ is a root vertex of $G$, i.e., $z$ is a child of $w$, then, $z$ for the same reasons as $v$ must have degree $3$,
and thus, in this case, the change of the ABC index after applying the transformation $\mathcal{T}_2$ is smaller than $-0.091348$.

\bigskip
\noindent
{\bf Case $2$.} $d(w)= 4$.

\smallskip
\noindent
Similarly as in the previous case, we may conclude that $w$, in addition to $u$, has two more
children vertices $v_1$ and $v_2$ that by Theorems~\ref{thm-GFI-10}, \ref{thm-DS}  
and Proposition~\ref{pro-terminal-branches-10} have degrees $3$ or $4$
(see Figure~\ref{fig-B2-4}~(a) for an illustration).
\begin{figure}[h!]
\begin{center}
\includegraphics[scale=0.75]{./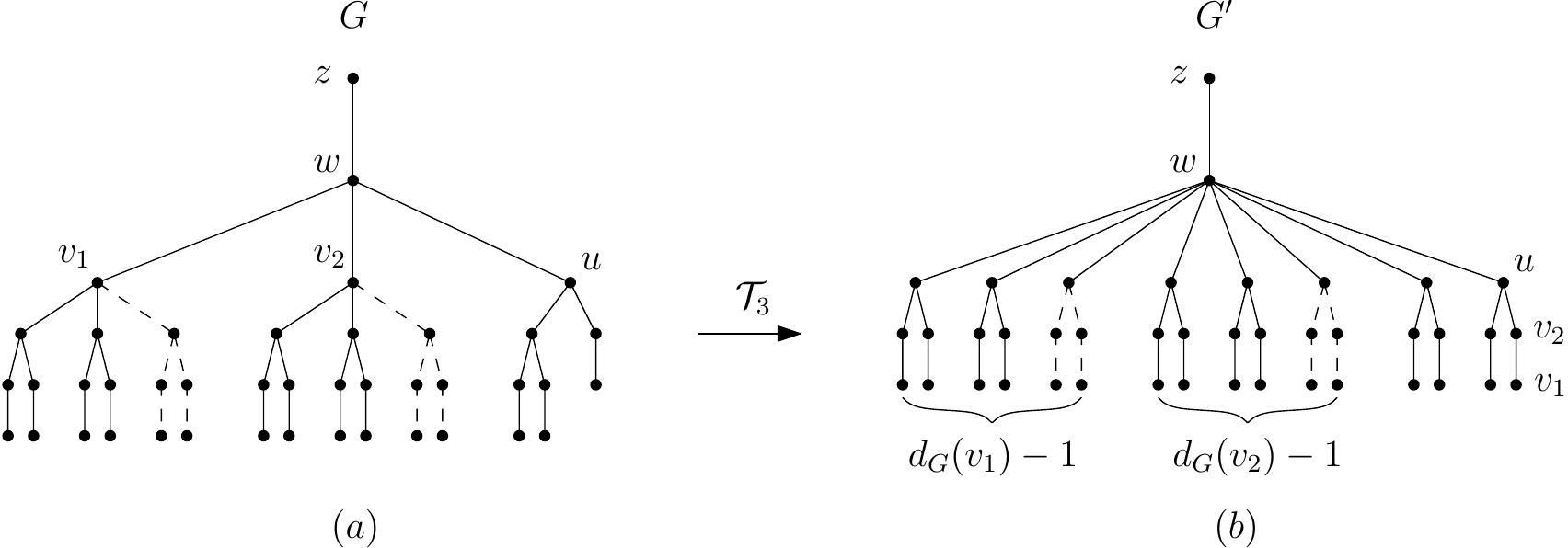}
\caption{An illustration of the transformation $\mathcal{T}_3$ from the proof of Proposition~\ref{pro-B2-20}, the case $d(w)=4$.}
\label{fig-B2-4}
\end{center}
\end{figure}
In this case we apply the transformation $\mathcal{T}_3$ depicted in  Figure~\ref{fig-B2-4}.
After applying the transformation $\mathcal{T}_3$ depicted in  Figure~\ref{fig-B2-4}, the degree of 
the  vertex $v_i$, $i=1,2$, decreases from $d(v_i)$ to $2$ and $1$, respectively.
The degree of the vertex $w$ increases from $4$ to $d(v_1)+d(v_2)+1$, while the rest of the vertices
do not change their degrees. The total change of the ABC index of
$G$ is
\beq \label{change-B2-50}
&&-f(d(z),4)+f(d(z), d(v_1)+d(v_2)+1)-f(4,d(v_1))+f(2,1)-f(4,d(v_2))+f(3,2) \nonumber \\
&&-f(4,3)+f(d(v_1)+d(v_2)+1,3) +(d(v_1)-1)(-f(d(v_1),3)+f(d(v_1)+d(v_2)+1,3))\nonumber \\
&& +(d(v_2)-1)(-f(d(v_2,3)+f(d(v_1)+d(v_2)+1,3)) \nonumber \\
&&-f(3,3) +f(d(v_1)+d(v_2)+1,3).
\eeq
(Maybe explain each term?)
Since $d(v_1)+d(v_2)+1 > 4$, by Proposition~\ref{appendix-pro-030-2}, the
expression $-f(d(z),4)+f(d(z), d(v_1)+d(v_2)+1)$ decreases in $z$, and therefore 
it is maximal for $d(z)=4$. 
Out of the four possible combinations of the values of $d(v_1)$ and $d(v_2)$
(recall that $d(v_1)$ and $d(v_2)$ can be either $3$ or $4$), 
(\ref{change-B2-50}) is maximal for $d(x_1)=d(x_2)=4$ and is  $\approx -0.186635$.

If $w$ is a root vertex of $G$, i.e., $z$ is a child of $w$, then, $z$ for the same reasons as $v_1$ and $v_2$ must have degree $3$ or $4$.
By Proposition~\ref{appendix-pro-030-2}, it follows that (\ref{change-B2-50}) reaches it maximum of $\approx -0.16395$ for $d(z)=3$.
\end{proof}

Next, we present an upper bound on the number of $B_2$-branches that may be attached to a vertex of a minimal-ABC tree.

\begin{lemma} \label{lemma-B2-30}
A vertex $w$ of a minimal-ABC tree $G$ cannot be a parent of more than eleven $B_2$-branches.
Moreover, if $w$ is a root of $G$, then it cannot be a parent of more than ten $B_2$-branches.
%
\end{lemma}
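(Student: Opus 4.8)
The plan is to mimic the structure already established for $B_1$-branches in Lemma~\ref{lemma-B1-10}, Lemma~\ref{lemma-B1-10-2} and Theorem~\ref{theorem-B1-10}, but now with the roles of the branches reorganized around a vertex $w$ that is the common parent of many $B_2$-branches. First I would set up the configuration: let $w$ be a vertex of a minimal-ABC tree $G$ that is the parent of $p$ distinct $B_2$-branches, and suppose for contradiction that $p \geq 12$ (resp. $p \geq 11$ when $w$ is the root). The key structural input is that, by Theorems~\ref{thm-GFI-10} and \ref{thm-DS} and Corollary~\ref{co-GFI-10}, the non-pendant-path part of $G$ is a tree, so $w$ has some other children $x_1,\dots,x_t$ of degree $\geq 3$ (and possibly a parent $z$), and $d(w)=p+t$ (resp.\ $d(w)=p+t+1$ if $w$ has a parent). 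Propositions~\ref{pro-B2-10} and \ref{pro-B2-20} were proved precisely so that certain small local configurations around such a $w$ can be excluded, which should pin down the possible degrees of the $x_i$ and of $z$.

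The core of the argument is a single transformation $\mathcal{T}$ that takes two of the $B_2$-branches at $w$ and merges them (or relocates edges between $w$ and one distinguished child), decreasing $d(w)$ by one or two while increasing the degree of some child, exactly as in Figure~\ref{fig-B1-1} and Figure~\ref{fig-B1-2}. I would write the change in the ABC index as a sum of terms of the form $-f(d(w),d(\cdot))+f(d(w)-\delta,d(\cdot))$ over the neighbors of $w$, plus the terms internal to the two merged branches. Then the monotonicity Propositions~\ref{appendix-pro-030} and \ref{appendix-pro-030-2} do the heavy lifting: each ``$-f(d(w),y)+f(d(w)-\delta,y)$'' term is monotone in $y$, so replacing the unknown degrees of the $x_i$ and of $z$ by their extreme feasible values (degree $4$ for the children, $d(z)\to\infty$ for the parent) gives an upper bound that is an explicit one-variable function of $d(w)$; one then checks this function decreases in $d(w)$ — via the derivative computation of the same flavor as equation~(\ref{change-40}) — and evaluates it at the threshold to see it is negative, contradicting minimality. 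The two cases ($w$ interior vs.\ $w$ the root) differ only by the presence of the parent term, which is why the root bound is one smaller, just as in the $B_1$ results.

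The main obstacle I anticipate is twofold. First, one must correctly identify which child degrees and which transformation give the \emph{largest} change of ABC index, so that the resulting bound is tight enough to hit the claimed constants $11$ and $10$ rather than something weaker; this is where Propositions~\ref{pro-B2-10} and \ref{pro-B2-20} must be invoked to rule out the borderline configurations (e.g.\ a $B_2$-branch sitting next to another small branch in a way that would otherwise survive the transformation). Second, the threshold values $11$ and $10$ presumably come from the first integer $d(w)$ at which the explicit bounding function turns negative, so the proof reduces to a finite, if slightly tedious, numerical check over the remaining small values of $d(w)$ below the threshold — analogous to the tables in the proof of Proposition~\ref{pro-B1-10-02} — together with the observation that for those small $d(w)$ the excluded configurations of Propositions~\ref{pro-B2-10} and \ref{pro-B2-20} already forbid $w$ from having $12$ (resp.\ $11$) $B_2$-children. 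I would organize the write-up as: (i) reduce to the transformation and write~(\ref{change-B2-10})-style expression; (ii) bound it above using $\ref{appendix-pro-030-2}$; (iii) show the bound decreases in $d(w)$; (iv) evaluate at $d(w)$ equal to the threshold and conclude; (v) handle the root case by deleting the parent term; (vi) clean up the few small-$d(w)$ cases with Propositions~\ref{pro-B2-10} and \ref{pro-B2-20}.
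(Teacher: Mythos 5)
Your general framework (a local degree-reducing transformation at $w$, upper-bounding the change of the ABC index via Propositions~\ref{appendix-pro-030} and~\ref{appendix-pro-030-2} with extreme child degrees and $d(z)\to\infty$, showing the bound decreases in $d(w)$, and evaluating at the integer threshold, with the root case obtained by dropping the parent term) is exactly the skeleton of the paper's proof. The genuine gap is in the transformation itself. You propose to ``merge two of the $B_2$-branches'' or to ``relocate edges between $w$ and one distinguished child, decreasing $d(w)$ by one or two while increasing the degree of \emph{some child}'', modelled on Figures~\ref{fig-B1-1} and~\ref{fig-B1-2}. Neither variant produces a negative change here. If you move one $B_2$-branch under another $B_2$-root $v$ (so $d(w)$ drops by one and $d(v)$ goes from $3$ to $4$), the dominant contributions are $(n_2-2)\bigl(-f(d(w),3)+f(d(w)-1,3)\bigr)>0$ plus a positive parent term, against a single small negative term, and the total is positive. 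If you merge two $B_2$-branches into one $B_4$-like branch, the freed vertex must be reinserted as a degree-$2$ vertex on a pendant path, contributing $+f(2,2)\approx 0.707$, which again makes the total positive for $d(w)=13$. So the bound of eleven is not reachable along the route you describe.

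What the paper actually does (transformation $\mathcal{T}_{11}$, Figure~\ref{fig-B2-5}) is to \emph{dismantle two} $B_2$-branches at $w$ completely, reducing $d(w)$ by two, and to redistribute their ten vertices as five new pendant paths of length two attached to \emph{five other} $B_2$-roots at $w$, promoting those five from $B_2$- to $B_3$-branches; this is why the change~(\ref{change-B2-60}) contains the term $5\bigl(-f(3,d(w))+f(4,d(w)-2)\bigr)$ rather than a single promoted child. The simultaneous drop of $d(w)$ by two and the conversion of five children from degree $3$ to degree $4$ is what makes the bound negative precisely from $d(w)=13$ (i.e.\ $n_2\geq 12$) onward, and from $d(w)=11$ in the root case. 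A secondary omission: the paper must separately treat the subcase in which $w$ has children of degree larger than $4$ that are not roots of $B_k$-branches (Subcase~$1.2$, using Lemma~\ref{lemma-B1-10-2} to cap those degrees at $8$), and it runs the $(n_1,n_3)$ bookkeeping to show the bound only tightens when $w$ has additional $B_1$- or $B_3$-children; your sketch does not anticipate either of these, though they are routine once the right transformation is in hand.
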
 
\begin{proof}

Let $w$ has $n_3$  children of degree at least $4$, $n_2$ children of degree $3$ and $n_1$ 
children of degree $2$, where $n_3 \geq 0$,  $n_2  > 11$ and $n_1  \geq 0$.
By Theorem~\ref{thm-DS}, it follows that a vertex of degree $3$ can have a children of degree at most $3$.
Further, by Propositions~\ref{pro-B2-10} and ~\ref{pro-B2-20}, it follows  that a vertex of degree $3$ can be a parent
only of $B_2$-branches, but only if it is a root vertex.
Let $v$ be a children of $w$ with degree $\geq 4$.
By Lemma~\ref{lemma-15}, $w$ cannot have simultaneously $B_4$-branches and $B_2$-branches as its children.
By Theorems~\ref{thm-DS} and \ref{te-no5branches-10}, $v$ cannot be a root of $B_{\geq 5}$-branch.
Thus, it follows that $v$ is a parent of  $B_k$-branches, $1 \leq k \leq 2$, or  $v$ is a root of
a $B_3$-branch. Notice that, if $v$ is a not a root of  a $B_3$-branch, then by Theorem~\ref{thm-DS}, it follows
that $n_1=0$.

We distinguish two cases regarding if $w$ is the root vertex of a minimal-ABC tree or not.
Further subcases  that depend on the children of $w$ are introduced. 

\smallskip
\noindent
{\bf Case $1$.} $w$ is not the root vertex of a minimal-ABC tree.

\smallskip
\noindent
{\bf Subcase $1.1$.} $w$ is a parent only  of $B_k$-branches, $1 \leq k \leq 3$.

\noindent 
In this case the structure of $G$ is illustrated in Figure~\ref{fig-B2-5}.
%
\begin{figure}[h!]
\begin{center}
\includegraphics[scale=0.75]{./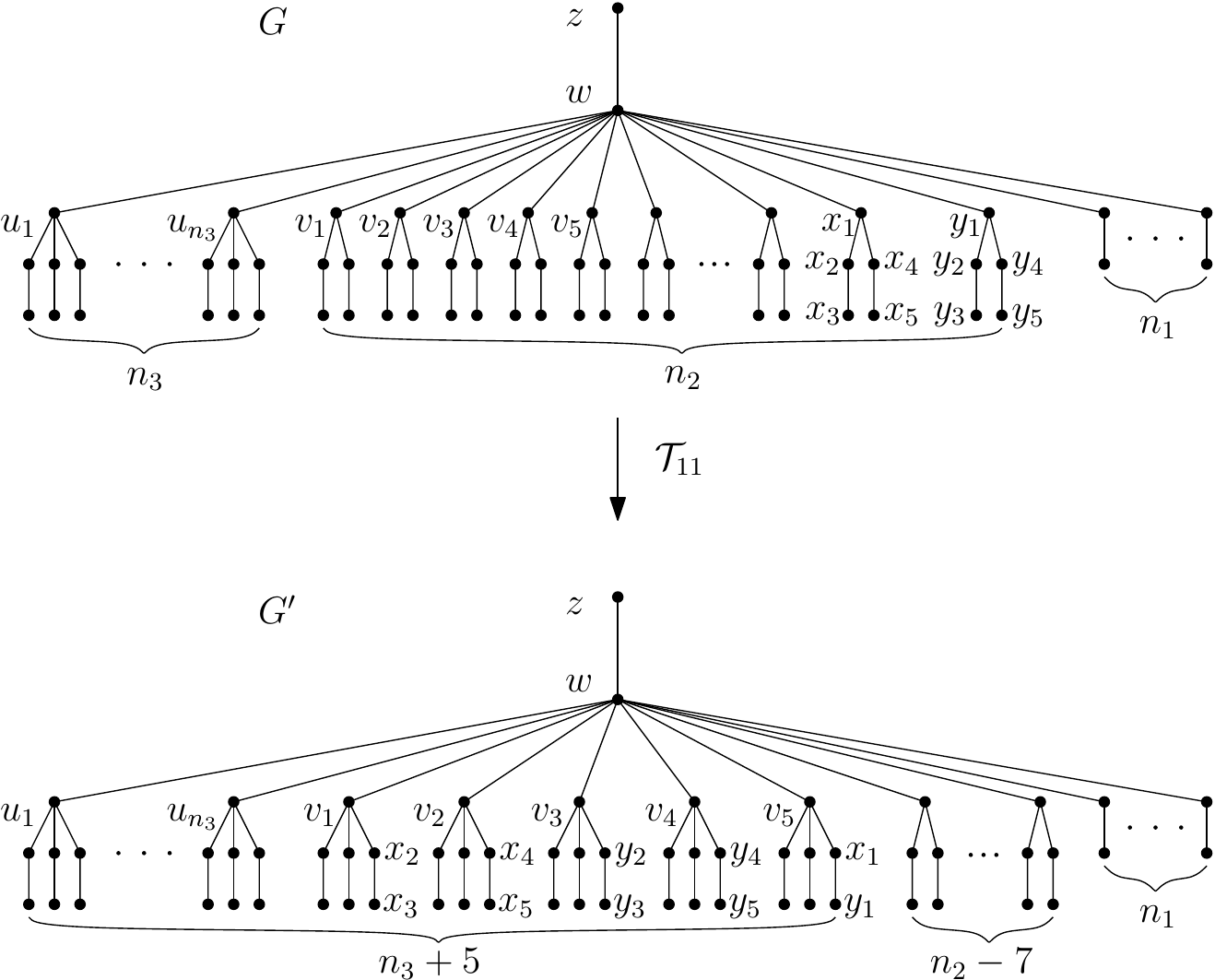}
\caption{An illustration of the transformation $\mathcal{T}_{11}$ from the proof of Lemma~\ref{lemma-B2-30}, Subcase~$1.1$.}
\label{fig-B2-5}
\end{center}
\end{figure}
%
%
%
After applying the transformation $\mathcal{T}_{11}$ from the same figure, the degrees of the vertices $v_i, i=1, \dots, 5$, increase by one,
the degrees of  $w$ and $y_1$ decrease by two, and the degree of $x_1$ decreases by one.
Thus, the change of the ABC index after applying the transformation $\mathcal{T}$  is
\beq \label{change-B2-60}
&& -f(d(z),d(w))+f(d(z),d(w)-2)+\sum_{i=1}^{n_3}(-f(d(u_i),d(w))+f(d(u_i),d(w)-2))  \nonumber \\
&&+5(-f(3,d(w))+f(4,d(w)-2)) +(n_2-5)(-f(3,d(w))+f(3,d(w)-2)) \nonumber \\
&&-f(d(w),3)+f(3,2)-f(d(w),3)+f(2,1).
\eeq 
By Proposition~\ref{appendix-pro-030}, $-f(d(z),d(w))+f(d(z),d(w)-2)$ 
increases in $d(z)$ and it reaches it maximal values for $d(z) \to \infty$.
By the same argument $-f(d(u_i),d(w))+f(d(u_i),d(w)-2)$ is maximal for 
$d(u_i)=4$.
The expression $-f(3,d(w))+f(3,d(w)-2)$ is positive for $d(w) > 2$. 
From $d(w)=n_3+n_2+n_1+1$, we have that $n_2-5=d(w)- n_3-n_1-6$.
Thus $(n_2-5)(-f(3,d(w))+f(3,d(w)-2))=(d(w)- n_3-n_1-6)(-f(3,d(w))+f(3,d(w)-2))$ is maximal for $n_1=0$, 
and then, $d(w)=n_3+n_2+1$.

\smallskip
\noindent
Since $-f(4,d(w))+f(4,d(w)-2) > -f(3,d(w))+f(3,d(w)-2)$,
the sum 
\beq \label{change-B2-65}
&&n_3(-f(4,d(w))+f(4,d(w)-2))+(n_2-5)(-f(3,d(w))+f(3,d(w)-2))  \nonumber \\
&=&(d(w)-n_2-1)(-f(4,d(w))+f(4,d(w)-2))  +(n_2-5)(-f(3,d(w))+f(3,d(w)-2))\nonumber
\eeq 
is maximal when $n_2$ is minimal, i.e., $n_2=12$.
Considering all these, it follows that (\ref{change-B2-60}) is bound from above by
\beq \label{change-B2-66}
&&\lim_{d(z) \to \infty}( -f(d(z),d(w))+f(d(z),d(w)-2))+(d(w)-13)(-f(4,d(w))+f(4,d(w)-2))  \nonumber \\
&&+5(-f(3,d(w))+f(4,d(w)-2)) +7(-f(3,d(w))+f(3,d(w)-2)) \nonumber \\
&&-f(d(w),3)+f(3,2)-f(d(w),3)+f(2,1).
\eeq 
Next, consider the following functions that are comprised by components of (\ref{change-B2-66}): 
\beq \label{change-B2-68}
g_1(d(w))&=& 4(-f(3,d(w))+f(4,d(w)-2))-f(d(w),3)+f(3,2)-f(d(w),3)+f(2,1), \nonumber
\eeq 
and
\beq \label{change-B2-69}
g_2(d(w)) &=& (d(w)-13)(-f(4,d(w))+f(4,d(w)-2))+7(-f(3,d(w))+f(3,d(w)-2)) \nonumber \\ 
&&-f(3,d(w))+f(4,d(w)-2). \nonumber
\eeq 
After simplifying, we obtain that the first derivative of $g_1(d(w))$ is
\beq \label{change-B2-70}
\frac {d \, g_1(d(w))}{d \, d(w)} &=&\frac{-2 \sqrt{3}+d(w) \left(\sqrt{3}-2 \sqrt{\frac{d(w)+1}{d(w)-2}}\right)}{(d(w)-2) d(w)^2\sqrt{1+\frac{1}{d(w)}} }. \nonumber
\eeq 
For $d(w) >2$, it is easy to verify that the nominator of the last expression is negative, while its denominator is positive.
Thus $d \, g_1(d(w))/d \, d(w)$ is negative, from which follows that $g(d(w))$ is decreasing function in $d(w)$.
The first derivative of $g_2(d(w))$, after a simplification is
\beq \label{change-B2-80}
\frac {d \, g_2(d(w))}{d \, d(w)} &=&
\frac{1}{6} \left(-\frac{7 \sqrt{6-3 d(w)}}{\sqrt{1-d(w)} (-2+d(w))^2}+\frac{3 \sqrt{-d(w)}}{(2-d(w))^{3/2}}+3 \sqrt{\frac{d(w)}{-2+d(w)}} \nonumber \right.\\
&& \;  +\frac{3}{d(w) \sqrt{\frac{2+d(w)}{d(w)}}} -3 \left. \sqrt{\frac{2+d(w)}{d(w)}}+\frac{-\frac{39}{\sqrt{-2-d(w)}}+\frac{11 \sqrt{3}}{\sqrt{-1-d(w)}}-\frac{27 d(w)}{(2-d(w))^{3/2}}}{(-d(w))^{3/2}}\right), \nonumber
\eeq 
and it has no real roots, which means that it is either positive or negative. Since for $d(w)=13$, $d \, g_2(d(w))/d \, d(w) =-0.00661647$, 
it follows that $g_2(d(w))$ decreases in $d(w)$, too.

By Proposition~\ref{appendix-pro-030}, the expression $-f(d(z),d(w))+f(d(z),d(w)-2)$ also decreases in $d(w)$.
We can conclude that (\ref{change-B2-66}), and therefore also (\ref{change-B2-60}) decrease in $d(w)$, and are
maximal when $d(w)$ is minimal, i.e., $d(w)=13$ and their upper bound is
\beq \label{change-B2-67}
&&\lim_{d(z) \to \infty}( -f(d(z),13)+f(d(z),11)) +5(-f(3,13)+f(4,11)) +7(-f(3,13)+f(3,11))  \nonumber \\
&&-f(13,3)+f(3,2)-f(13,3)+f(2,1) < -0.0107055. \nonumber
\eeq 

Observe that the above upper bound of $-0.0107055$ is obtained when $w$ does not have $B_3$ and $B_1$-branches ($n_3=n1=0$) as
it immediate children.
If $w$ has in addition one $B_3$, i.e., $n_3=1$ and $n_1=0$, then  (\ref{change-B2-60}) is negative for $n_2 \geq 11$, or with other words, 
$w$ can have at most $10$ $B_2$-branches. 
If $w$ $n_3=2$ and $n_1=0$, then  (\ref{change-B2-60}) is negative for $n_2 \geq 10$, i.e., 
$w$ can have at most $9$ $B_2$-branches.

If $w$ has in addition one $B_1$, i.e., $n_3=0$ and $n_1=1$, then  (\ref{change-B2-60}) is negative for $n_2 \geq 10$, or with other words, 
$w$ can have at most $9$ $B_2$-branches. If $n_3=0$ and $n_1=2$, then  (\ref{change-B2-60}) is negative for $n_2 \geq 9$,
i.e., $w$ can have at most $8$ $B_2$-branches.

\smallskip
\noindent
{\bf Subcase $1.2$.} $w$ is a parent of one or more vertices that are not roots of $B_k$-branches, $1 \leq k \leq 4$.

\noindent
This case is similar to the previous one, with the difference that
$w$ may have children with degree larger than $4$.
Denote by $x$ a child of $w$ with  $d(x) \geq 4$.
Since $w$ has $B_2$-branches as children, by Theorem~\ref{thm-DS}, it follows that
the children of $w$ are either $B_2$ or $B_1$-branches.
Due to Proposition~\ref{pro-B1-10-00}, we may assume that $d(x) \leq 8$.
If $d(x) \geq 9$ we can apply the transformation from Lemma~\ref{lemma-B1-10-2}, obtaining $d(x) \leq 8$.
If $w$ has a child $y$ of degree $3$, where $y$  is not a root of $B_k$-branches, $1 \leq k \leq 2$,
then, by Propositions~\ref{pro-B2-10} and ~\ref{pro-B2-20} and Theorem~\ref{thm-DS}, it follows that 
$y$ must be a root of a $B_2$-branch. Thus, $4 \leq d(x) \leq 8$.
Also, in this case $w$ does not have $B_1$-branches as children. 
Otherwise, if $w$ does have $B_1$-branches as children, then by Theorem~\ref{thm-DS} a child of $w$ cannot have
a child of degree larger than $2$, which is a contradiction to the main assumption of this subcase.

This subcase and the corresponding transformation that we apply are illustrated in Figure~\ref{fig-B2-5-1}.
\begin{figure}[h!]
\begin{center}
\includegraphics[scale=0.75]{./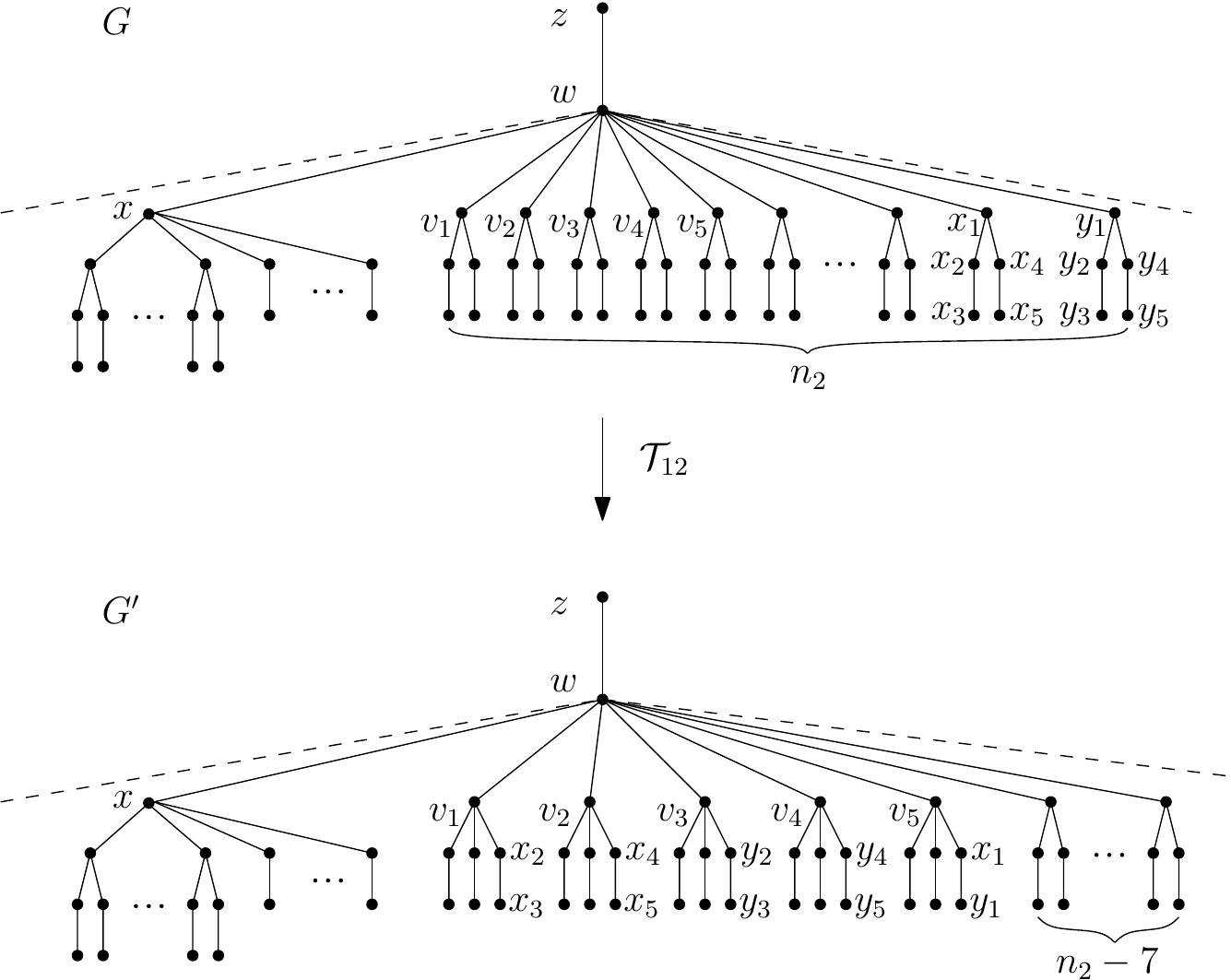}
\caption{An illustration of the transformation $\mathcal{T}_{12}$ from the proof of Lemma~\ref{lemma-B2-30}, Subcase~$1.2$.}
\label{fig-B2-5-1}
\end{center}
\end{figure}
%
Similarly as in previous case, here we obtain that the change of  ABC($G$) is
\beq \label{change-B2-60-12}
&& -f(d(z),d(w))+f(d(z),d(w)-2)+\sum_{i=1}^{d(w)-n_2-1}(-f(d(u_i),d(w))+f(d(u_i),d(w)-2))  \nonumber \\
&&+5(-f(3,d(w))+f(4,d(w)-2)) +(n_2-5)(-f(3,d(w))+f(3,d(w)-2)) \nonumber \\
&&-f(d(w),3)+f(3,2)-f(d(w),3)+f(2,1).
\eeq 
By Proposition~\ref{appendix-pro-030} $-f(d(u_i),d(w))+f(d(u_i),d(w)-2)$ 
increases with $d(u_i)$.
Bearing in mind that $4 \leq d(u_i) \leq 8$, we obtain that (\ref{change-B2-60-12})
is maximal for $d(u_i)=8$, $i=1, \dots, d(w)-n_2-1$.
Applying same arguments as in Subcase~$1.1.$, we also obtain that
that (\ref{change-B2-60-12}) is maximal when $d(w)$ is minimal, i.e, $d(w)=13$ ($n_2=12$), 
and when $d(z) \to \infty$.
Thus, 
\beq \label{change-B2-67-12}
&&\lim_{d(z) \to \infty}( -f(d(z),13)+f(d(z),11)) +5(-f(3,13)+f(4,11)) +7(-f(3,13)+f(3,11))  \nonumber \\
&&-f(13,3)+f(3,2)-f(13,3)+f(2,1) \approx -0.0107055 \nonumber
\eeq 
is an upper bond on (\ref{change-B2-60-12}).

\smallskip
\noindent
{\bf Case $2$.} $w$ is the root vertex of a minimal-ABC tree.

\smallskip
\noindent
{\bf Subcase $2.1$.} $w$ is a parent only of vertices that are roots of $B_k$-branches, $1 \leq k \leq 3$.

\noindent
If $w$ is a root vertex of $G$, then the change of the ABC-index after applying the same 
transformation from Figure~\ref{fig-B2-5} is
\beq \label{change-B2-100}
&&  n_3(-f(4,d(w))+f(4,d(w)-2)) +5(-f(3,d(w))+f(4,d(w)-2)) \nonumber \\
&&+(n_2-5)(-f(3,d(w))+f(3,d(w)-2)) -f(d(w),3)+f(3,2)-f(d(w),3)+f(2,1).\nonumber \\
\eeq 
where $d(w)=n_1+n_2+n_3$, $n_3 \geq 0$,  $n_2  > 10$ and $n_1  \geq 0$.
Almost identical analysis as in the Subcase $1.1.$, shows that (\ref{change-B2-100}) decreases in $d(w)$ and is maximal 
for $n_1=0$ and $n_3$ as large as possible, which in this case is $d(w)-n_2=d(w)-10$.
Thus, (\ref{change-B2-100}) is bounded from above by
\beq \label{change-B2-67}
&&(d(w)-11)(-f(4,11)+f(4,9))  +5(-f(3,11)+f(4,9)) +7(-f(3,11)+f(3,9)) \nonumber \\
&&-f(11,3)+f(3,2)-f(11,3)+f(2,1) < -0.00974369. \nonumber
\eeq 
 
If $w$ has in addition one $B_3$, i.e., $n_3=1$ and $n_1=0$, then  (\ref{change-B2-100}) is negative for $n_2 \geq 10$, or with other words, 
$w$ can have at most $9$ $B_2$-branches. 
If $w$ $n_3=2$ and $n_1=0$, then  (\ref{change-B2-100}) is negative for $n_2 \geq 9$, i.e., 
$w$ can have at most $8$ $B_2$-branches.

If $w$ has in addition one $B_1$, i.e., $n_3=0$ and $n_1=1$, then  (\ref{change-B2-100}) is negative for $n_2 \geq 9$, or with other words, 
$w$ can have at most $8$ $B_2$-branches. If $n_3=0$ and $n_1=2$, then  (\ref{change-B2-60}) is negative for $n_2 \geq 8$,
i.e., $w$ can have at most $7$ $B_2$-branches.

\smallskip
\noindent
{\bf Subcase $2.2$.} $w$ is a parent of one or more vertices that are not roots of $B_k$-branches, $1 \leq k \leq 3$.
 
\noindent
Here we apply the same transformation as in Figure~\ref{fig-B2-5-1}. 
Since  $w$ is a root vertex of $G$, the change of the ABC-index now is
\beq \label{change-B2-100-22}
&&  (d(w)-n_2)(-f(13,d(w))+f(13,d(w)-2)) +5(-f(3,d(w))+f(4,d(w)-2)) \nonumber \\
&&+(n_2-5)(-f(3,d(w))+f(3,d(w)-2)) -f(d(w),3)+f(3,2)-f(d(w),3)+f(2,1).\nonumber \\
\eeq 
where  $n_2  > 10$.
Almost identical analysis as in the Subcase $1.2.$, shows that (\ref{change-B2-100-22}) decreases in $d(w)$,
and is maximal for $d(w)=11$ and when the children of $w$ that are not roots of $B_k$-branches, $1 \leq k \leq 4$,
have maximal degrees, which after Subcase $2.1$ do not exceed $11$. 
Thus, (\ref{change-B2-100-22}) is bounded from above by
\beq \label{change-B2-67-22}
&&(d(w)-11)(-f(11,11)+f(11,9))  +5(-f(3,11)+f(4,9)) +7(-f(3,11)+f(3,9)) \nonumber \\
&&-f(11,3)+f(3,2)-f(11,3)+f(2,1) < -0.00974369. \nonumber
\eeq 
 \end{proof}

\noindent 
The following  proposition follows from Lemma~\ref{lemma-B2-30}  and will be used in the
proof of Theorem~\ref{te-no2branches-10a}.
 
 \begin{pro} \label{pro-B2-30}
The trees depicted in Figure~\ref{fig-B2-5-1-co}  are not minimal-ABC trees. 
 \end{pro}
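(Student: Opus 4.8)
The plan is to go through the trees depicted in Figure~\ref{fig-B2-5-1-co} one by one, in each case isolating the vertex $w$ that carries the largest collection of $B_2$-branches, and to show that the number of $B_2$-branches attached to $w$ exceeds the bound permitted by Lemma~\ref{lemma-B2-30}. For the simplest configurations $w$ is a parent of more than eleven $B_2$-branches, or $w$ is the root of the tree and a parent of more than ten $B_2$-branches; in either situation Lemma~\ref{lemma-B2-30} applies verbatim and already contradicts the tree being a minimal-ABC tree. Before invoking it one checks the structural hypotheses: by Lemma~\ref{lemma-15} and Theorem~\ref{te-no5branches-10} no $B_4$- or $B_{\ge 5}$-branch occurs at $w$, and by Theorem~\ref{thm-DS} the children of $w$ have the degrees needed for the transformation $\mathcal{T}_{11}$ (respectively $\mathcal{T}_{12}$) from the proof of Lemma~\ref{lemma-B2-30} to be applicable.

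For the remaining trees in the figure, $w$ additionally has one or two $B_1$- or $B_3$-branches among its children, so the plain count of $B_2$-branches can still be $\le 11$ (respectively $\le 10$) yet exceed the sharper thresholds recorded inside the proof of Lemma~\ref{lemma-B2-30}. First I would restate those thresholds: when $w$ is not the root, one extra $B_3$-branch (two extra $B_3$-branches) forces at most ten (nine) $B_2$-branches, while one extra $B_1$-branch (two extra $B_1$-branches) forces at most nine (eight); when $w$ is the root the corresponding numbers are nine (eight) and eight (seven). Each tree in Figure~\ref{fig-B2-5-1-co} is built so that its $B_2$-count is exactly one above the applicable threshold, so specializing the change-of-index expression (\ref{change-B2-60}) or (\ref{change-B2-100}) (and, when a child of $w$ has degree $\ge 4$ rooting a $B_2$-branch, (\ref{change-B2-60-12}) or (\ref{change-B2-100-22})) to the relevant values of $n_1,n_2,n_3$ and $d(w)$ gives a strictly negative change of the ABC index under the transformation $\mathcal{T}_{11}$, again contradicting minimality.

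No new estimate is required: every numerical inequality needed has already been verified in the proof of Lemma~\ref{lemma-B2-30}, and the decreasing-in-$d(w)$ monotonicity established there lets us evaluate at the smallest admissible $d(w)$ in each case. The one point demanding care is the bookkeeping — for each depicted tree one must correctly identify which of the several sub-bounds is the binding one and confirm that $w$ (and, if relevant, whether $w$ is the root) matches the hypotheses of the corresponding subcase. I expect this matching of figures to sub-cases, rather than any analytic difficulty, to be the main obstacle; once it is done each case closes by citing the relevant inequality from the proof of Lemma~\ref{lemma-B2-30}.
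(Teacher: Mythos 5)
Your approach matches the paper's: the depicted trees fall under Subcase $2.1$ of Lemma~\ref{lemma-B2-30} (with $w$ the root and, in fact, no $B_1$-branches present), and the paper simply substitutes the parameter values $n_2=10,\ n_3>0$; $n_2=9,\ n_3>1$; $n_2=8,\ n_3>3$; $n_2=7,\ n_3>4$ into the change-of-index expression (\ref{change-B2-100}) and checks negativity, exactly as you propose. The only inaccuracy is your claim that every needed numerical inequality is already recorded in the proof of Lemma~\ref{lemma-B2-30} --- the thresholds stated there stop at $n_3=2$, so the cases $n_2=8$ and $n_2=7$ require fresh (but routine) evaluations of (\ref{change-B2-100}); the paper also supplements the argument with the observation that these trees have fewer than $66$ vertices and do not appear among the computed minimal-ABC trees of order up to $300$.
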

 \begin{proof}
 Trees depicted in Figure~\ref{fig-B2-5-1-co} have the structure that belongs
 to Subcase $2.1$ of Lemma~\ref{lemma-B2-30} (actually, it is a special case of  Subcase $2.1$, since here we do not have $B_1$-brances). 
 The change of the ABC-index after applying the transformation from Figure~\ref{fig-B2-5-1-co} is
 given in (\ref{change-B2-100}).
\begin{figure}[h!]
\begin{center}
\includegraphics[scale=0.75]{./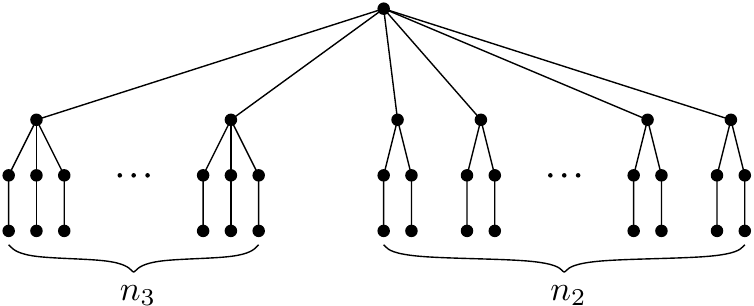}
\caption{Trees with parameters $n_2=10$ and $n_3 > 0$, $n_2=9$ and $n_3 > 1$,  $n_2=8$ and $n_3 > 3$, $n_2=7$ and $n_3 > 4$, that are
not minimal-ABC trees.}
\label{fig-B2-5-1-co}
\end{center}
\end{figure}
%
 Here it holds that  $d(w)=n_2+n_3$, $n_3 \geq 0$,  $7 \leq n_2  \leq 10$.
 By substituting particular values for $n_2$ and $n_3$, one can obtain that
 (\ref{change-B2-100}) is negative for
 $n_2=10$ and $n_3 > 0$; $n_2=9$ and $n_3 > 1$;  $n_2=8$ and $n_3 > 3$; $n_2=7$ and $n_3 > 4$.
 
 Observe that trees in Figure~\ref{fig-B2-5-1-co} have less than $66$ vertices.
In~\cite{d-ectmabci-2013} all minimal-ABC trees with up to $300$ were computed,
and no of them are in Figure~\ref{fig-B2-5-1-co}.
 \end{proof}
 
 \noindent
 The next result is a specialized version of Lemma~\ref{lemma-B2-30}.
 
\begin{lemma} \label{lemma-B2-10}
Let $w$ be a vertex of a minimal-ABC tree $G$ different than the root of $G$. 
If  $w$ has only $B_2$-branches as its children, then their number is at most $6$.
\end{lemma}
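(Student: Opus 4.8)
The plan is to argue by contradiction. Suppose $w$ is a non-root vertex of a minimal-ABC tree $G$ all of whose children are roots of $B_2$-branches, and suppose there are $n_2 \ge 7$ of them; then $d(w)=n_2+1\ge 8$ and $w$ has a unique parent $z$. Every child of $w$ has degree $3$ and carries two pendant paths of length two, so this is exactly the restricted configuration of Subcase $1.1$ of Lemma~\ref{lemma-B2-30}, but specialized to $n_3=n_1=0$. In particular the whole local picture is governed by the single parameter $d(w)$ together with $d(z)$, and by Theorem~\ref{thm-DS}, Theorem~\ref{te-no5branches-10} and Lemma~\ref{lemma-15}$(a)$ no $B_{\ge 5}$-branch or mixed $B_4/B_2$-configuration can be hidden among these children.

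First I would fix a local transformation $\mathcal{T}$ of the same ``upgrade-and-dismantle'' type used throughout this section (cf.\ Lemma~\ref{lemma-B2-30} and Proposition~\ref{pro-B2-30}): detach a prescribed number of the $B_2$-branches from $w$, thereby lowering $d(w)$, and reattach their freed pendant paths so as to raise the degrees of several of the remaining branch roots (turning $B_2$-branches into $B_3$-branches) while collapsing the leftover material into a single short pendant path. The resulting tree has the same order as $G$, and by the three results just cited it is again an admissible minimal-ABC candidate. I would then write the change $\Delta\mathrm{ABC}$ as a sum of edge contributions at $w$, at $z$, and inside the affected branches, and bound it from above using Propositions~\ref{appendix-pro-030} and \ref{appendix-pro-030-2}. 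The term carried by the edge $zw$ has the form $-f(d(z),d(w))+f(d(z),d(w)-2)$, which increases in $d(z)$ and is therefore maximized in the limit $d(z)\to\infty$; the contributions of the branch roots are maximized at their largest admissible degree. This collapses the estimate to a single function of $d(w)$, whose monotonicity I would establish by the same first-derivative computation as in Lemma~\ref{lemma-B2-30}, showing it decreases in $d(w)$ and is strictly negative already at $d(w)=8$. Negativity then persists for all $d(w)\ge 8$, i.e.\ for all $n_2\ge 7$, contradicting minimality and forcing $n_2\le 6$.

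The hard part is that the plain ``upgrade five branches while dropping $d(w)$ by two'' transformation of Lemma~\ref{lemma-B2-30} only turns negative near $d(w)=13$ (yielding the weaker bound $11$), so to reach the sharp bound $6$ one must use a genuinely more aggressive restructuring—detaching more $B_2$-branches and lowering $d(w)$ by a larger amount—calibrated so that even the worst case $d(z)\to\infty$ is beaten precisely at the threshold $d(w)=8$. Verifying the sign exactly at this boundary, so that $n_2=6$ remains attainable while $n_2=7$ is excluded, is the delicate numerical point of the argument. The auxiliary exclusions in Propositions~\ref{pro-B2-10} and \ref{pro-B2-20}, together with Proposition~\ref{pro-B1-10-00} and Lemma~\ref{lemma-B1-10-2}, are exactly what guarantee that the children of $w$ cannot conceal additional structure (higher-degree roots or non-$B_2$ subtrees) that would invalidate these estimates.
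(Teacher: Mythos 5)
There is a genuine gap at the heart of your argument, and you have in fact pointed at it yourself without closing it. The transformation you describe in detail is the one from Lemma~\ref{lemma-B2-30} (lower $d(w)$ by two, upgrade five children of $w$ from degree $3$ to $4$, with the edge $zw$ contributing $-f(d(z),d(w))+f(d(z),d(w)-2)$); as you correctly observe, that estimate only becomes negative near $d(w)=13$ and cannot yield the bound $6$. Everything after that observation is a promissory note: ``a genuinely more aggressive restructuring \dots calibrated so that even the worst case $d(z)\to\infty$ is beaten precisely at the threshold $d(w)=8$'' is not a construction, and the entire content of the lemma lies in exhibiting that construction and verifying its sign. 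The paper's proof does something qualitatively different from what you sketch: the vertex $w$ is \emph{eliminated} (its degree drops to $1$) and its branches are re-attached directly to the grandparent $z$, so that $d(z)$ \emph{increases} by $d(w)-4$ or $d(w)-5$. With that move the relevant bounding function is a two-variable function $g(d(z),d(w))$ whose worst case in Case~$1$ ($d(w)=8$) occurs at the \emph{minimal} value $d(z)=d(w)$, not at $d(z)\to\infty$; your blanket claim that the $zw$-term is maximized as $d(z)\to\infty$ is tied to the weaker transformation and does not survive the switch.

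A second consequence of the correct transformation, which your proposal does not anticipate, is that the estimate cannot ``collapse to a single function of $d(w)$.'' Because the branches of $w$ are grafted onto $z$, the edges from $z$ to its \emph{other} children change weight, and the sign of the total change depends on how many of those children have degree $3$ versus degree at least $4$. This forces the paper's case split: $d(w)=8$ separately; then $9\le d(w)\le 13$ with a dichotomy on whether $z$ has at least three children of degree $3$ (Subcase $2.1$, where three of them are also upgraded to degree $4$) or at most two (Subcase $2.2$, further split by $d(w)\in\{9,10,11\}$ and $d(w)\in\{12,13\}$ with differently calibrated upgrades, e.g.\ raising three children of $w$ to degree $5$ rather than $4$). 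None of this structure is present in your outline, so the delicate numerical verification you defer to the end has no concrete object to be performed on. The references to Propositions~\ref{pro-B2-10}, \ref{pro-B2-20}, \ref{pro-B1-10-00} and Lemma~\ref{lemma-B1-10-2} as guaranteeing the children of $w$ hide no extra structure are fine but peripheral; the missing idea is the dismantle-$w$-onto-$z$ transformation together with the case analysis on the neighbourhood of $z$.
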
 
\begin{proof}
By Lemma~\ref{lemma-B2-30}, $d(w) \leq 13$.
Since $w$ is a parent of $B_2$-branches, 
by Theorem~\ref{thm-DS}  it follows that $z$ can not be a parent  of a $B_1$-branch.
Assume that $w$  has more than $6$ $B_2$-branches as its children.
We distinguish few cases with respect to $d(w)$ and the degrees of the children vertices
of $z$.

\bigskip
\noindent
{\bf Case $1$.} $d(w)=8$.

\noindent
In this case we apply the transformation $\mathcal{T}_1$
illustrated in Figure~\ref{fig-B2-5-5-1}. After this transformation the
degree of the vertex $z$ increases by $d(w)-4$, the degrees of five children vertices 
of $w$ increase from $3$ to $4$, the degree of the vertex $w$ decreases to $1$, 
while two children vertices of $w$  decrease their degrees from $3$ to $2$ and $1$, respectively.
\begin{figure}[h!]
\begin{center}
\includegraphics[scale=0.75]{./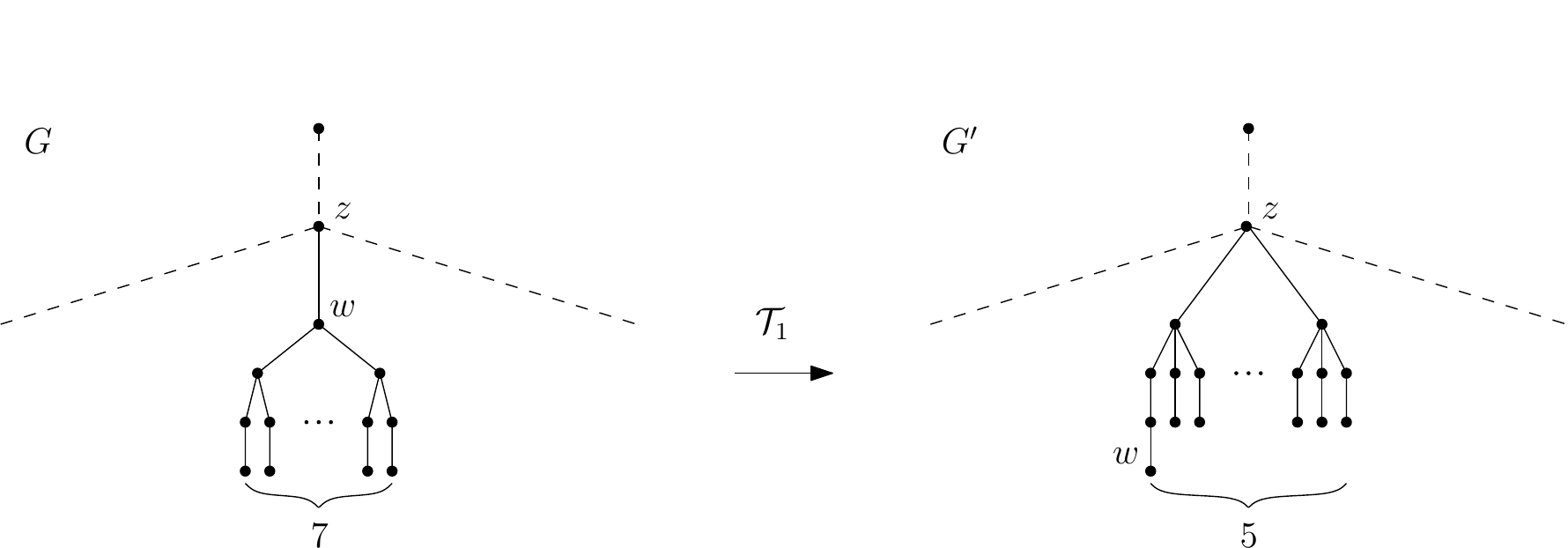}
\caption{An illustration of the transformation $\mathcal{T}_1$ from the proof of Lemma~\ref{lemma-B2-10}, Case $1$.}
\label{fig-B2-5-5-1}
\end{center}
\end{figure}
After applying the transformation $\mathcal{T}_1$, the change of the ABC index is
bounded from above by
\beq \label{lemma-change-B2-10-10}
g(d(z), d(w)) &=&  -f(d(z),d(w))+f(2,1))  +5(-f(3,d(w))+f(4,d(w)+d(z)-4))  \nonumber \\
&&-f(3,d(w))+f(4,2) -f(3,d(w))+f(2,1) \nonumber \\
&&+\sum_{i=1}^{d(z) - 2}( -f(x_i,d(w))+f(x_i,d(w)+d(z)-4))).   \nonumber
\eeq 
By Proposition~\ref{appendix-pro-030-2}  $-f(x_i,d(w))+f(x_i,d(w)+d(z)-4)$ decreases in $x_i$,
i.e., it is maximal for $x_i=3$. This together with $d(w)=8$ gives us the following upper bound 
on $g(d(z)$, $d(w))$:
\beq \label{lemma-change-B2-10-20}
g(d(z), 8) &=&  -f(d(z),8)+f(2,1))  +5(-f(3,8)+f(4,d(z)+4))  \nonumber \\
&&-f(3,8)+f(4,2) -f(3,8)+f(2,1) \nonumber \\
&&+(d(z) - 2)( -f(3,8)+f(3,d(z)+4))). \nonumber
\eeq 
The function $g(d(z), 8)$ does not have local extremal points,
and it is a decreasing function in $d(z)$.
Thus,  $g(d(z), 8)$ is maximal when $d(z)$ is smallest possible, i.e., when $d(z)=d(w)=8$.
Then $g(8, 8)=-0.00136859$, and therefore, the 
change of the ABC index after applying the transformation $\mathcal{T}_1$ is negative.
This is a contradiction to the assumption that $G$ is a tree with minimal ABC index. 

\noindent
If $z$ is the root vertex of G then the change of the ABC-index is bounded by
\beq \label{lemma-change-B2-10-20-root}
g_r(d(z), 8) &=&  -f(d(z),8)+f(2,1))  +5(-f(3,8)+f(4,d(z)+4))  \nonumber \\
&&-f(3,8)+f(4,2) -f(3,8)+f(2,1) \nonumber \\
&&+(d(z) - 1)( -f(3,8)+f(3,d(z)+4))), \nonumber
\eeq 
which is bounded from above by the negative function $g(d(z), 8)$.

\bigskip
\noindent
{\bf Case $2$.} $9 \leq d(w) \leq 13$.

\smallskip
\noindent
{\bf Subcase $2.1$.} $z$ has at least three  children of degree $3$.

\noindent
In this case we apply the transformation $\mathcal{T}_{21}$
illustrated in Figure~\ref{fig-B2-5-5-21a}. 
\begin{figure}[h!]
\begin{center}
\includegraphics[scale=0.75]{./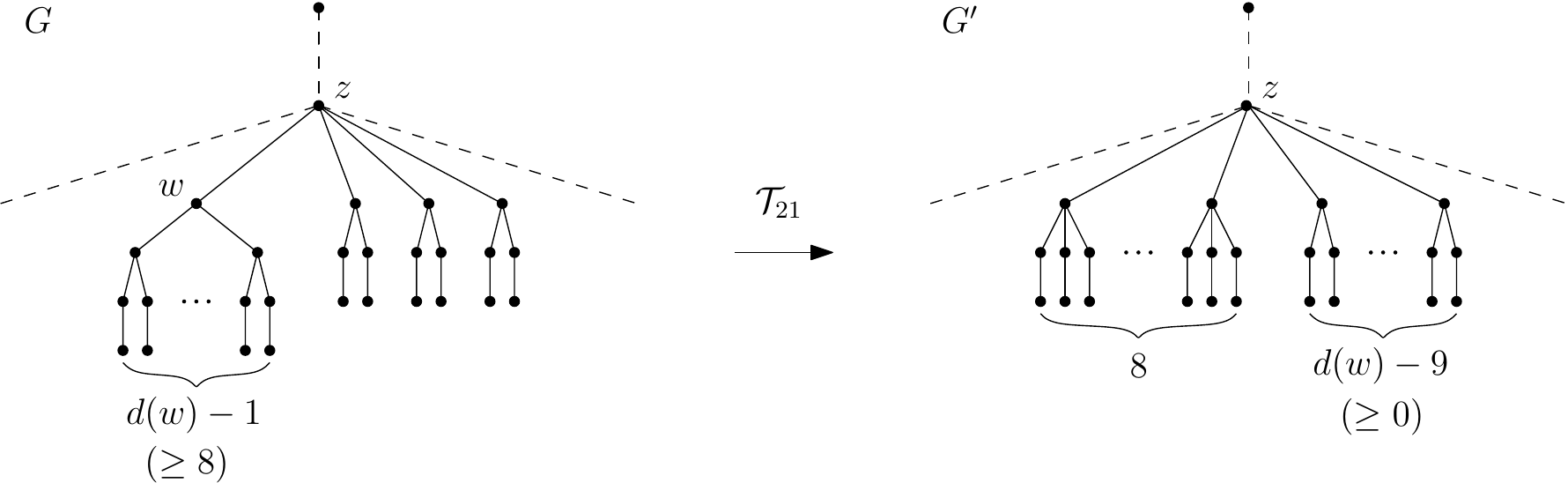}
\caption{An illustration of the transformation $\mathcal{T}_{21}$ 
from the proof of Lemma~\ref{lemma-B2-10}, Subcase $2.1$.}
\label{fig-B2-5-5-21a}
\end{center}
\end{figure}
After this transformation the
degree of the vertex $z$ increases by $d(w)-5$, the degrees of five children vertices 
of $w$ and three  children vertices of $z$ increase from $3$ to $4$, the degree of the vertex $w$ decreases to $1$, 
while three children vertices of $w$ decrease their degrees from $3$ to $2$, $2$ and $1$, respectively.
The rest of the vertices do not change their degrees.
The change of the ABC index after applying $\mathcal{T}_{21}$ is at most
\beq \label{lemma-change-B2-10-30}
 &&-f(d(z),d(w))+f(2,1)) -f(3,d(w))+f(2,1)) +2(-f(3,d(w))+f(4,2))  \nonumber \\
&& +5(-f(3,d(w))+f(4,d(w)+d(z)-5)) + 3(-f(3,d(z))+f(4,d(w)+d(z)-5)) \nonumber \\
&&+ (d(w)-9)(-f(3,d(w))+f(3,d(w)+d(z)-5)) \nonumber \\
&&+\sum_{i=1}^{d(z) - 5}( -f(x_i,d(w))+f(x_i,d(w)+d(z)-5))). 
\eeq 
By Proposition~\ref{appendix-pro-030-2}  $-f(x_i,d(w))+f(x_i,d(w)+d(z)-4)$ decreases in $x_i$,
i.e., it is maximal for $x_i=3$. Thus,
\beq \label{lemma-change-B2-10-40}
g(d(z), d(w)) &=&  -f(d(z),d(w))+f(2,1)) -f(3,d(w))+f(2,1)) +2(-f(3,d(w))+f(4,2))  \nonumber \\
&& +5(-f(3,d(w))+f(4,d(w)+d(z)-5))  \nonumber \\
&& + 3(-f(3,d(z))+f(4,d(w)+d(z)-5)) \nonumber \\
&& + (d(w)-9)(-f(3,d(w))+f(3,d(w)+d(z)-5)) \nonumber \\
&&+(d(z) - 5)( -f(3,d(w))+f(3,d(w)+d(z)-5))),   \nonumber
\eeq
is an upper bound on (\ref{lemma-change-B2-10-30}).
It can be verified that
$g(d(z), 9) \leq g(9, 9) \approx -0.0514586$,
$g(d(z), 10) \leq g(10, 10) \approx -0.0538142$,
$g(d(z), 11) \leq g(11, 11) \approx -0.0541005$,
$g(d(z), 12) \leq g(12, 12) \approx -0.0531217$, and
$g(d(z), 13) \leq g(12, 13) \approx -0.0510972$.
Thus, 
the change of the ABC index after applying the transformation $\mathcal{T}_{21}$ is negative.

\noindent
If $z$ is root vertex of G then the change of the ABC-index is bounded by
\beq \label{lemma-change-B2-10-40-r}
g_r(d(z), d(w)) &=&  -f(d(z),d(w))+f(2,1)) -f(3,d(w))+f(2,1)) +2(-f(3,d(w))+f(4,2))  \nonumber \\
&& +5(-f(3,d(w))+f(4,d(w)+d(z)-5))  \nonumber \\
&& + 3(-f(3,d(z))+f(4,d(w)+d(z)-5)) \nonumber \\
&& + (d(w)-9)(-f(3,d(w))+f(3,d(w)+d(z)-5)) \nonumber \\
&&+(d(z) - 4)( -f(3,d(w))+f(3,d(w)+d(z)-5))),   \nonumber
\eeq
which is bounded from above by the negative function $g(d(z), d(w))$.

\smallskip
\noindent
{\bf Subcase $2.2$.} $z$ has at most two  children of degree $3$.

\noindent
We distinguish further two subcases with respect to $d(w)$.

\smallskip
\noindent
{\bf Subcase $2.2.1$.} $d(w)=9, 10, 11$.

\noindent
In this subcase we apply the transformation $\mathcal{T}_{221}$
illustrated in Figure~\ref{fig-B2-5-5-221}. 
\begin{figure}[h!]
\begin{center}
\includegraphics[scale=0.75]{./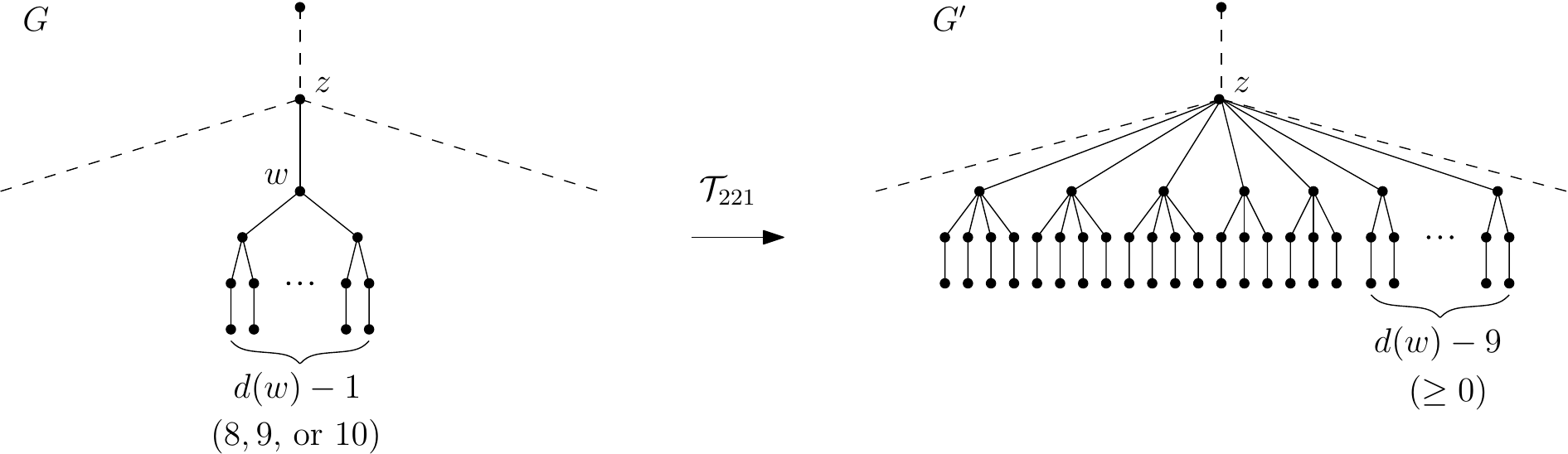}
\caption{An illustration of the transformation $\mathcal{T}_{221}$ from the proof of Lemma~\ref{lemma-B2-10}, Subcase $2.2.1$.}
\label{fig-B2-5-5-221}
\end{center}
\end{figure}
After the transformation $\mathcal{T}_{221}$ the
degree of the vertex $z$ increases by $d(w)-5$, the degrees of three children vertices 
of $w$ increase from $3$ to $5$, the degrees of two children vertices 
of $w$ increase from $3$ to $4$, the degree of the vertex $w$ decreases to $1$, 
while three children vertices of $w$,  decrease their degrees from $3$ to $2$, $2$ and $1$, respectively.
The rest of the vertices do not change their degrees.
The change of the ABC index after applying $\mathcal{T}_{221}$ is at most
\beq \label{lemma-change-B2-10-50}
 &&-f(d(z),d(w))+f(2,1)) -f(3,d(w))+f(2,1)) +2(-f(3,d(w))+f(4,2))  \nonumber \\
&& +3(-f(3,d(w))+f(5,d(w)+d(z)-5)) +2(-f(3,d(w))+f(4,d(w)+d(z)-5))   \nonumber \\
&& + (d(w)-9)(-f(3,d(w))+f(3,d(w)+d(z)-5)) \nonumber \\
&&+\sum_{i=1}^{d(z) - 2}( -f(x_i,d(w))+f(x_i,d(w)+d(z)-5))). 
\eeq 
Due to the same argument as in the previous cases,
$-f(x_i,d(w))+f(x_i,d(w)+d(z)-5))$ is maximal when $x_i=3$ is minimal.
Since $z$ may have at most two children of degree $2$ and 
$-f(x_i,d(w))+f(x_i,d(w)+d(z)-5))$ is strictly negative, then
by setting $x_i=4$, and considering only $d(z) - 4$ children of
$z$ that they have degree at least $4$ we obtain that
\beq \label{lemma-change-B2-10-60}
g(d(z), d(w)) &=&-f(d(z),d(w))+f(2,1)) -f(3,d(w))+f(2,1)) +2(-f(3,d(w))+f(4,2))  \nonumber \\
&& +3(-f(3,d(w))+f(5,d(w)+d(z)-5))  \nonumber \\
&& +2(-f(3,d(w))+f(4,d(w)+d(z)-5))  \nonumber \\
&& + (d(w)-9)(-f(3,d(w))+f(3,d(w)+d(z)-5)) \nonumber \\
&&+(d(z) - 4)( -f(4,d(w))+f(4,d(w)+d(z)-5))) \nonumber
\eeq 
is an upper bound on (\ref{lemma-change-B2-10-50}).
It can be verified that $g(d(z), d(w))$ is negative function
for $d(w)=9, 10$, and  $11$, and 
$g(d(z), 9) \leq g(9, 9) \approx -0.000496363$,
$g(d(z), 10) \leq g(10, 10) \approx -0.00763911$, and
$g(d(z), 11) \leq \lim_{d(z) \to \infty } g(d(z), 11) \approx -0.00696979$.
Thus,  also in this case, 
the change of the ABC index after applying the transformation $\mathcal{T}_{221}$ is negative.

\noindent
If $z$ is root vertex of G then the change of the ABC-index is bounded by
\beq \label{lemma-change-B2-10-60-r}
g_r(d(z), d(w)) &=&-f(d(z),d(w))+f(2,1)) -f(3,d(w))+f(2,1)) +2(-f(3,d(w))+f(4,2))  \nonumber \\
&& +3(-f(3,d(w))+f(5,d(w)+d(z)-5))  \nonumber \\
&& +2(-f(3,d(w))+f(4,d(w)+d(z)-5))  \nonumber \\
&& + (d(w)-9)(-f(3,d(w))+f(3,d(w)+d(z)-5)) \nonumber \\
&&+(d(z) - 3)( -f(4,d(w))+f(4,d(w)+d(z)-5))) \nonumber
\eeq 
which is bounded from above by the negative function $g(d(z), d(w))$.

\smallskip
\noindent
{\bf Subcase $2.2.2$.} $d(w)=12, 13$.


\noindent
Here we apply the transformation $\mathcal{T}_{222}$
illustrated in Figure~\ref{fig-B2-5-5-222}.
\begin{figure}[h!]
\begin{center}
\includegraphics[scale=0.75]{./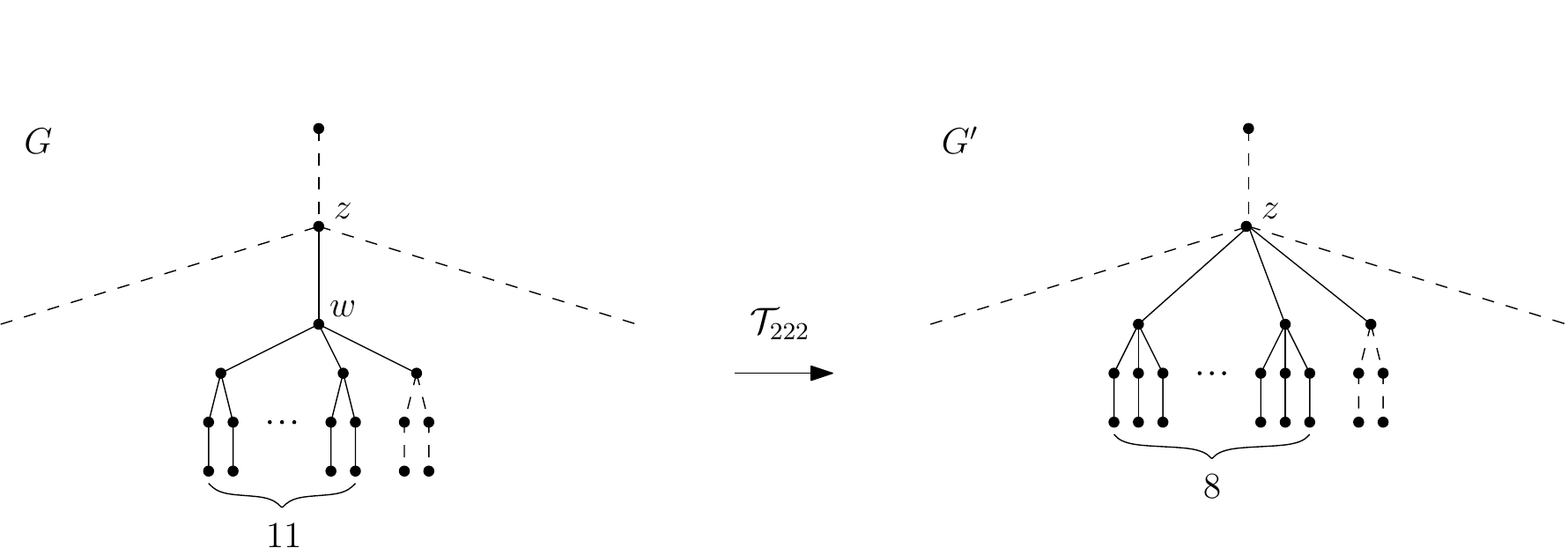}
\caption{An illustration of the proof of Lemma~\ref{lemma-B2-10}, Subcase $2.2.2$.}
\label{fig-B2-5-5-222}
\end{center}
\end{figure}
After the transformation $\mathcal{T}_{222}$ the
degree of the vertex $z$ increases by $d(w)-5$, the degrees of eight children vertices 
of $w$ increase from $3$ to $4$, the degree of the vertex $w$ decreases to $1$, 
while three children vertices of $w$,  decrease their degrees from $3$ to $2$, $2$ and $1$, respectively.
The rest of the vertices do not change their degrees.
The change of the ABC index after applying $\mathcal{T}_{222}$ is at most
\beq \label{lemma-change-B2-10-70}
 &&-f(d(z),d(w))+f(2,1)) -f(3,d(w))+f(2,1)) +2(-f(3,d(w))+f(4,2))  \nonumber \\
&& +8(-f(3,d(w))+f(4,d(w)+d(z)-5))   \nonumber \\
&& + (d(w)-12)(-f(3,d(w))+f(3,d(w)+d(z)-5)) \nonumber \\
&&+\sum_{i=1}^{d(z) - 2}( -f(x_i,d(w))+f(x_i,d(w)+d(z)-5))). 
\eeq 
Applying the same argument as in the previous case,
we obtain an upper bound on  (\ref{lemma-change-B2-10-70})
by considering only $d(z) - 4$ children of
$z$ that they have degree at least $4$: 
\beq \label{lemma-change-B2-10-80}
 g(d(z), d(w)) &=&-f(d(z),d(w))+f(2,1)) -f(3,d(w))+f(2,1)) +2(-f(3,d(w))+f(4,2))  \nonumber \\
&& +8(-f(3,d(w))+f(4,d(w)+d(z)-5))   \nonumber \\
&& + (d(w)-12)(-f(3,d(w))+f(3,d(w)+d(z)-5)) \nonumber \\
&&+(d(z) - 4)( -f(4,d(w))+f(4,d(w)+d(z)-5))). \nonumber
\eeq 
It can be verified that $g(d(z), d(w))$ is negative function
for $d(w)=12$ and $13$, and  
$g(d(z), 12) \leq \lim_{d(z) \to \infty } g(d(z), 12) \approx -0.0704253$,
and
$g(d(z), 13) \leq \lim_{d(z) \to \infty } g(d(z), 13) \approx -0.061309$.
Thus,  also in this case, 
the change of the ABC index after applying the transformation $\mathcal{T}_{222}$ is negative.

\noindent
If $z$ is root vertex of G then the change of the ABC-index is bounded by
\beq \label{lemma-change-B2-10-80-r}
 g(d(z), d(w)) &=&-f(d(z),d(w))+f(2,1)) -f(3,d(w))+f(2,1)) +2(-f(3,d(w))+f(4,2))  \nonumber \\
&& +8(-f(3,d(w))+f(4,d(w)+d(z)-5))   \nonumber \\
&& + (d(w)-12)(-f(3,d(w))+f(3,d(w)+d(z)-5)) \nonumber \\
&&+(d(z) - 3)( -f(4,d(w))+f(4,d(w)+d(z)-5))). \nonumber
\eeq which is bounded from above by the negative function $g(d(z), d(w))$.

\noindent
This concludes the proof of the lemma.
\end{proof}

\noindent
The following proposition will be used in the proof of Lemma~\ref{lemma-B2-20}, 
which is an improvement of Lemma~\ref{lemma-B2-10}.

\begin{pro} \label{pro-Tk-B1-10}
Let $T$ be a proper $T_k$-branch that contains more than $6$ $B_2$-branches
Then, $T$ cannot be a proper subtree of a minimal-ABC tree or a minimal-ABC tree itself.
\end{pro}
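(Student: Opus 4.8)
The plan is to derive a contradiction from a degree‑shifting transformation, following the pattern of the proof of Lemma~\ref{lemma-B2-10}. Assume $T$ is a proper $T_k$‑branch with $b\ge 7$ $B_2$‑branches that is a proper subtree of a minimal‑ABC tree $G$ or equals such a $G$, and let $u$ be the terminal (root) vertex of $T$. Since $T$ contains a $B_2$‑branch, Lemma~\ref{lemma-15}($b$) together with Theorems~\ref{thm-DS} and~\ref{te-no5branches-10} rules out $B_k$‑branches with $k\ge 4$, so every child of $u$ is a $B_1$‑, $B_2$‑ or $B_3$‑branch; write $a$, $b$, $c$ for the respective numbers (with $b\ge 7$ and $a\ge 1$, since a terminal vertex of degree $\ge 7$ is adjacent to a pendant path of length two, the unique pendant path of length three being by the standing convention attached to a degree‑$4$ vertex elsewhere). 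Then $d(u)=a+b+c+1$ if $u$ has a parent vertex $z$ in $G$, and $d(u)=a+b+c$ if $u$ is the root of $G$. Propositions~\ref{pro-B2-10} and~\ref{pro-B2-20} pin down the local structure around $u$ and $z$, just as in the proof of Lemma~\ref{lemma-B2-30}.

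First I would cut the problem down to finitely many parameter combinations. Lemma~\ref{lemma-B1-10} gives $d(u)\le 13$ in the proper‑subtree case and $d(u)\le 12$ when $T=G$; Lemma~\ref{lemma-B2-30} gives $b\le 11$ (resp.\ $b\le 10$); Theorem~\ref{theorem-B1-10} together with Propositions~\ref{pro-B1-10-02} and~\ref{pro-B1-10-00} bounds $a$ and forces it to shrink as $b$ grows, while $a+c=d(u)-b-1\le 5$. Only a short list of admissible triples $(d(u),a,c)$ then survives.

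Next I would apply the transformation. As in Lemma~\ref{lemma-B2-10}, collapse two of the $B_2$‑branches at $u$ (their roots dropping to degrees $2$ and $1$), raise the degree of five further $B_2$‑roots from $3$ to $4$, turn $u$ into a pendant vertex, and re‑attach to $z$ — or, when $u$ is the root of $G$, to one of the surviving branch‑roots — both the vertices freed by the collapse and the $a$ $B_1$‑ and $c$ $B_3$‑branches that hung from $u$, so that only $d(z)$ changes, say to $d'(z)$. The change of the ABC index then splits into $f$‑differences of exactly the shapes appearing in~(\ref{lemma-change-B2-10-10})--(\ref{lemma-change-B2-10-80}), plus the extra contribution $a\bigl(-f(2,d(u))+f(2,d'(z))\bigr)+c\bigl(-f(4,d(u))+f(4,d'(z))\bigr)$ of the moved branches. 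Using Propositions~\ref{appendix-pro-030} and~\ref{appendix-pro-030-2} I would replace every term by its worst case — the absorbed neighbours of $z$ by their maximal admissible degrees, $d(z)$ by the extreme that maximizes the expression — and then show the resulting function decreases in $d(u)$, so that it suffices to evaluate it at the least admissible $d(u)$ for each pair $(a,c)$. A direct numerical check over these finitely many values, the case $T=G$ being uniformly more favourable since then no vertex $z$ contributes a positive term, yields a negative change of the ABC index, contradicting the minimality of $G$.

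I expect the bookkeeping of the worst‑case reductions to be the main obstacle: with $d(u)$, $d(z)$, $a$ and $c$ all free, the monotonicity steps must be ordered so that only a small finite set of numerical evaluations remains, and at every step one must check that the transformed object is still a tree with all degrees at least $1$ and no internal path created. I also expect a sub‑case split on the local structure around $z$ — whether $z$ has few or many children of degree $3$, as in Subcases $2.1$--$2.2.2$ of Lemma~\ref{lemma-B2-10} — to reappear and to carry most of the technical weight.
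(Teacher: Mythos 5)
Your proposal takes a genuinely different, and much heavier, route than the paper. The paper's proof of Proposition~\ref{pro-Tk-B1-10} never looks at the parent $z$ of the terminal vertex $u$ and needs no finite-case reduction: it applies the one-step move of Lemma~\ref{lemma-B1-10} (Figure~\ref{fig-B1-1b}), detaching a single $B_1$-branch from $u$ and re-attaching it to one $B_2$-root $v$, so that only $d(u)$ drops by one and $d(v)$ rises from $3$ to $4$. After the worst-case substitutions $k_1=1$, $k_2=7$, $d(w)\to\infty$, the change of the ABC index is bounded by a single univariate function of $d(u)$ that is negative for every $d(u)\geq 9$ --- and $d(u)\geq 9$ is automatic once $k_1\geq 1$ and $k_2\geq 7$. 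No upper bound on $d(u)$, no bound on the number of $B_1$- or $B_3$-branches, and no subcase analysis around $z$ is required; the case where $T$ is the whole tree is the same computation with the parent term deleted. By contrast, you import the ``dissolve $u$ into its parent'' transformation of Lemma~\ref{lemma-B2-10}, which forces you to assemble $d(u)\leq 13$, $a\leq 4$, $b\leq 11$ from Lemmas~\ref{lemma-B1-10} and~\ref{lemma-B2-30} and Theorem~\ref{theorem-B1-10}, and to redo all of that lemma's subcases on the neighbourhood of $z$ with extra branches in play.

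Beyond being more complicated, the plan has two genuine gaps. First, the half of the statement in which $T$ is the minimal-ABC tree itself is not covered by the borrowed transformation: Lemma~\ref{lemma-B2-10} explicitly assumes $w$ is not the root (its ``root case'' concerns $z$ being the root, not $w$), so when $u$ has no parent there is no $z$ to absorb the freed material. Your fix --- attach everything to a surviving branch-root --- turns a degree-$3$ or degree-$4$ vertex into one of degree roughly $d(u)+2$ and changes every $f$-difference in the estimate; the claim that this case is ``uniformly more favourable'' because no $z$ contributes a positive term is unsubstantiated. Second, the decisive inequalities are asserted rather than checked, and that is not a formality here: the transformation you are borrowing is the delicate one (in Lemma~\ref{lemma-B2-10} the margin at $d(z)=d(w)=8$ is only about $-0.0014$), and you perturb it by re-attaching $a$ $B_1$- and $c$ $B_3$-branches and by reducing the number of degree-$3$ children of $u$ from $d(u)-1$ to $b$. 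Whether every admissible triple $(d(u),a,c)$ with $b\geq 7$ still yields a negative change, and whether the monotonicity steps that justify the worst-case substitutions in $d(z)$ and in the degrees of $z$'s other children survive the modification, must actually be established. The paper's one-branch shift avoids all of this, and I would recommend redoing the argument along those lines.
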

\begin{proof}
Let $u$ be the root vertex of  $T$.
From Theorems~\ref{thm-DS},~\ref{te-no5branches-10}, and Lemma~\ref{lemma-15}($a$), it follows that $T$ does not contain
$B_k$-branches, $k \geq 4$.
Assume that the number of  $B_1$-branches  contained in $T$ is $k_1 >0$,  the number of $B_2$ is $k_2 > 6$ and $B_3$-branches is $k_3 \geq 0$.
It holds that $k_1+k_2+k_3=k$.
Perform the transformation $\mathcal{T}$ depicted in Figure~\ref{fig-B1-1b}.
\begin{figure}[h]
\begin{center}
\includegraphics[scale=0.750]{./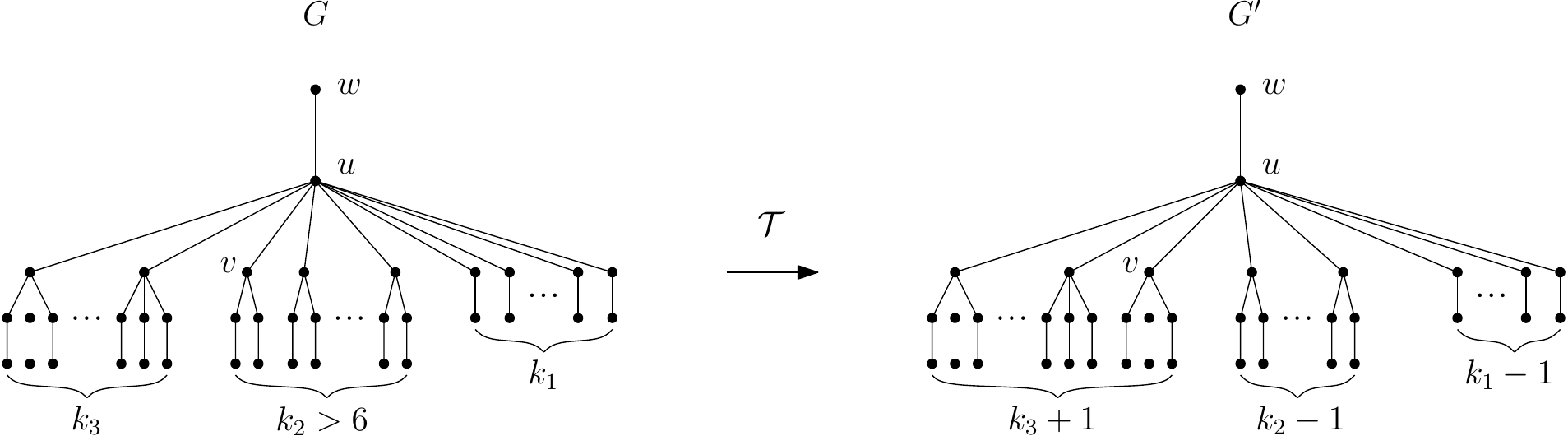}
\caption{Transforamation $\mathcal{T}$ from  Lemma~\ref{lemma-B1-10}. Note that 
in this illustration all $k_2$ children of $u$ are of degree $4$. As it is shown in the proof below,
in that case the change of the ABC index is the largest.}
\label{fig-B1-1b}
\end{center}
\end{figure}
After this transformation the degree of the vertex $u$ decreases by one, while the degree of the vertex $v$ increases by one.
The degrees of other vertices remain unchanged.
The change of the ABC index is
\beq \label{change-10-b}
&&-f(d(u),3)+f(d(u)-1,4)+(k_2-1)(-f(d(u),3)+f(d(u)-1,3)) \nonumber \\
&&+(d(u)-k_2-k_1-1)(-f(d(u),4)+f(d(u)-1,4)) -  f(d(u),d(w))+f(d(u)-1,d(w)),  \nonumber \\
\eeq
By Proposition~\ref{appendix-pro-030-2}, the expression
$-f(d(u),d(w))+f(d(u)-1,d(w))$ increases  in $w$,  and thus, 
it is maximal when  $d(w) \to \infty$. 
By the same proposition, $-f(d(u),4)+f(d(u)-1,4) > -f(d(u),3)+f(d(u)-1,3)$,
and therefore, (\ref{change-10-b}) is maximal when $k_2$ is minimal,
i.e., $k_2=7$. 
Since $-f(d(u),4)+f(d(u)-1,4)$ is strictly positive for any $d(u)$,
(\ref{change-10-b}) is maximal when $k_1$  minimal,
i.e., $k_1=1$. 
Hence,
\beq \label{change-20-b}
&&-f(d(u),3)+f(d(u)-1,4)+6(-f(d(u),3)+f(d(u)-1,3)) \nonumber \\
&&+(d(u)-9)(-f(d(u),4)+f(d(u)-1,4)) + \lim_{d(w) \to \infty} (-  f(d(u),d(w))+f(d(u)-1,d(w))) \nonumber \\
\eeq
is an upper bound on  (\ref{change-10-b}).
It can be verified that the expression (\ref{change-20-b})  is negative for $d(u) \geq 9$,
and reaches its maximum for of $\approx -0.05141846$ for $d(u)=15$.
Hence, the  change of the ABC index (\ref{change-10-b}), 
after applying the transformation $\mathcal{T}$, is negative, which is a contradiction to the assumption 
that $T$ is a subtree
of a tree with minimal-ABC index.

Consider now the case when $u$ is the root vertex of the tree with a minimal-ABC index.
We have the same configuration and apply the same transformation as in Figure~\ref{fig-B1-1}.
Here, it holds that $d(u)=k_1+k_2+k_3$.
Now the change of the ABC index is
\beq \label{change-30-b}
&&-f(d(u),3)+f(d(u)-1,4)+(k_2-1)(-f(d(u),3)+f(d(u)-1,3)) \nonumber \\
&&+(d(u)-k_2-k_1)(-f(d(u),4)+f(d(u)-1,4)) -  f(d(u),d(w))+f(d(u)-1,d(w)),  \nonumber \\
\eeq
Similarly as above we obtain that  (\ref{change-30-b}) as most
\beq \label{change-40-b}
&&-f(d(u),3)+f(d(u)-1,4)+6(-f(d(u),3)+f(d(u)-1,3)) \nonumber \\
&&+(d(u)-8)(-f(d(u),4)+f(d(u)-1,4)) + \lim_{d(w) \to \infty} (-  f(d(u),d(w))+f(d(u)-1,d(w))). \nonumber \\
\eeq
and it is maximal for $k_1=1$ and $k_2=7$.
The expression (\ref{change-40-b}), and therefore (\ref{change-30-b}),
is always negative, and it maximal value of $\approx -0.048948$ is obtained for $d(u)=14$.
Thus, in this case we again obtain that  
after applying the transformation $\mathcal{T}$, the value of the ABC index decrease, which is a contradiction to the assumption 
that $T$, $k \geq 7$, is a tree with minimal-ABC index.
\end{proof}

\begin{lemma} \label{lemma-B2-20}
Let $w$ be a vertex of a minimal-ABC tree $G$ different than the root of $G$.
Then, $w$ is a parent of at most six $B_2$-branches.
\end{lemma}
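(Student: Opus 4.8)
The plan is to reduce Lemma~\ref{lemma-B2-20} to the already-proven Lemma~\ref{lemma-B2-10} by dispensing with the one situation that Lemma~\ref{lemma-B2-10} does not cover: when $w$ is a non-root vertex all of whose children are $B_2$-branches, Lemma~\ref{lemma-B2-10} already gives the bound $6$. So it suffices to show that a non-root vertex $w$ carrying more than six $B_2$-branches, but also having at least one child that is \emph{not} a $B_2$-branch, cannot occur in a minimal-ABC tree. First I would record the structural constraints on such a $w$. By Lemma~\ref{lemma-B2-30} we have $d(w)\le 13$. Since $w$ has $B_2$-branches as children, Theorem~\ref{thm-DS} forces every child of $w$ to have degree at most the degree of the heaviest child; combined with Theorems~\ref{te-no5branches-10}, \ref{thm-DS} and Lemma~\ref{lemma-15}($a$), and with Propositions~\ref{pro-B2-10} and \ref{pro-B2-20}, the only possible children of $w$ are: $B_1$-branches, $B_2$-branches, roots of $B_3$-branches, or roots of $B_{\le 2}$-branches of higher degree — i.e. terminal-type vertices forming part of a proper $T_k$-branch rooted higher up, or a proper $T_k$-branch rooted at a child of $w$. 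A child of degree $2$ (a $B_1$-branch) together with a child of degree $\ge 3$ is again excluded by Theorem~\ref{thm-DS}. So the extra non-$B_2$ child is either (i) a $B_1$-branch, forcing \emph{all} children of $w$ to be $B_1$ or $B_2$-branches, or (ii) a vertex $v$ with $d(v)\ge 3$ that is the root of a $B_3$-branch or of a proper $T_k$-branch.

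In case (ii), consider the vertex $u$ of largest degree among the children of $w$ that are not roots of $B_2$-branches; the subtree hanging from $u$ together with the $B_2$-branches at $w$ forms a proper $T_k$-branch with root $w$ containing more than six $B_2$-branches (here I use Corollary~\ref{co-GFI-10} to see that the degree-$\ge 3$ vertices induce a tree, so ``proper $T_k$-branch'' is the right notion). Then Proposition~\ref{pro-Tk-B1-10} applies directly: such a proper $T_k$-branch cannot be a proper subtree of a minimal-ABC tree, a contradiction. This is the clean part of the argument. In case (i), $w$ is a non-root vertex whose children are all $B_1$- or $B_2$-branches, with more than six $B_2$-branches. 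Here I would invoke Lemma~\ref{lemma-B2-30} again — more precisely its Subcase~$1.1$ analysis — which already shows that if in addition $w$ has even a single $B_1$-branch as a child then $w$ can have at most $9$ $B_2$-branches, and more generally bounds $n_2$ as a decreasing function of the number $n_1$ of $B_1$-children; but that only gives $\le 10$ or so, not $\le 6$. So for case (i) I would instead run the Case~$1$/Case~$2$ transformation argument of Lemma~\ref{lemma-B2-10} essentially verbatim: the presence of $B_1$-children only lowers $d(w)$ relative to the number of $B_2$-branches and does not interfere with the transformation $\mathcal{T}_1$, $\mathcal{T}_{21}$, $\mathcal{T}_{221}$, $\mathcal{T}_{222}$ used there (those transformations act on the $B_2$-branches and on the parent $z$ of $w$), and the change in the ABC index is bounded above by the same negative quantities, since adding $B_1$-children to $w$ only makes $d(w)$ smaller for a fixed number of $B_2$-branches, and the bounding expressions were shown to decrease in $d(w)$.

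The main obstacle I anticipate is making case (i) fully rigorous without simply re-deriving Lemma~\ref{lemma-B2-10}: one must argue that the $B_1$-branches attached to $w$ genuinely play no role in the inequalities, i.e. that the transformations of Lemma~\ref{lemma-B2-10} can be applied in the presence of $B_1$-children and that the upper bounds on the ABC-change only improve. Concretely, the transformations redistribute degree between $w$, five (or eight, or three-plus-two) of its $B_2$-children, and $w$'s parent $z$; the $B_1$-children are untouched, so the only term that changes is $-f(d(z),d(w))+f(d(z),d(w)-2)$ type contributions, and for a fixed count of $B_2$-branches the value $d(w)$ with $B_1$-children present is no larger than without, so by Proposition~\ref{appendix-pro-030} the bound can only decrease. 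I would state this as a short remark and then cite the numerical bounds of Lemma~\ref{lemma-B2-10}. A secondary subtlety is the boundary between ``$v$ is a root of a $B_3$-branch'' (case (ii), handled by Proposition~\ref{pro-Tk-B1-10}) and ``$v$ is a $B_1$- or $B_2$-branch'' (cases (i) and the base Lemma~\ref{lemma-B2-10}); one must check these cases are exhaustive, which follows from Propositions~\ref{pro-B2-10}, \ref{pro-B2-20} and Theorems~\ref{te-no5branches-10}, \ref{thm-DS}, \ref{thm-GFI-10} exactly as in the setup paragraph of Lemma~\ref{lemma-B2-30}.
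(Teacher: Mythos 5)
Your decomposition (all children of $w$ are $B_2$-branches; some child is a $B_1$-branch; some child has degree at least $4$) is sensible, and the first subcase is correctly dispatched by Lemma~\ref{lemma-B2-10}. However, you have the remaining two subcases exactly backwards with respect to Proposition~\ref{pro-Tk-B1-10}, and this leaves a genuine gap. A proper $T_k$-branch is by definition rooted at a \emph{terminal} vertex, i.e.\ a vertex adjacent to a pendant path of length two or three. In your case (ii) the vertex $w$ has no $B_1$-child, so no child of $w$ has degree $2$ and $w$ is not a terminal vertex; the subtree at $w$ is therefore not a $T_k$-branch at all, and Proposition~\ref{pro-Tk-B1-10} does not apply. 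This is not a technicality: the proof of that proposition performs a transformation that detaches a $B_1$-branch hanging at the root (it explicitly assumes $k_1>0$), so it cannot run when there is no $B_1$-branch there. Case (ii) --- $w$ with $7\le n_2\le 11$ $B_2$-children plus $n_3\ge 0$ children of degree at least $4$ --- is precisely where the paper does its real work: it introduces two new transformations, $\mathcal{T}_1$ (collapsing $w$ and re-attaching its branches to the parent $z$) and $\mathcal{T}_2$ (decreasing $d(w)$ by two), bounds the resulting changes by explicit functions $f_1(d(z),d(w),n_2)$ and $f_2(d(z),d(w),n_2)$, and applies $\mathcal{T}_1$ or $\mathcal{T}_2$ according to whether $d(w)$ lies below or above a threshold, separately for $n_2=7$, $n_2\in\{8,9,10\}$ and $n_2=11$. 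None of this is recoverable from the lemmas you cite.

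Ironically, Proposition~\ref{pro-Tk-B1-10} is exactly the right tool for your case (i), and that is how the paper opens its proof: if $w$ has a $B_1$-child, then $w$ \emph{is} a terminal vertex with a child of degree at least $3$, so the subtree at $w$ is a proper $T_k$-branch containing more than six $B_2$-branches, which that proposition forbids. Your proposed alternative for case (i) --- rerunning the transformations of Lemma~\ref{lemma-B2-10} ``essentially verbatim'' --- also rests on a backwards monotonicity claim: you assert that for a fixed number of $B_2$-branches the presence of $B_1$-children makes $d(w)$ no larger, but $d(w)=n_1+n_2+1$, so extra $B_1$-children make $d(w)$ strictly larger, and the case analysis of Lemma~\ref{lemma-B2-10} is calibrated to the identity $n_2=d(w)-1$ (its transformations also relocate all children of $w$, so the $B_1$-branches are not simply ``untouched''). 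Swapping the roles of Proposition~\ref{pro-Tk-B1-10} and a fresh transformation argument between your cases (i) and (ii) would bring your outline in line with the paper's proof; as written, case (ii) is unproved.
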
 
\begin{proof}

If  $w$  is a parent of a $B_1$-branch, then by Proposition~\ref{pro-Tk-B1-10}, $G$
contains at most $6$ $B_1$-branches, and the lemma holds.
Thus, we assume that $w$ does not have $B_1$-branches as children.
Let $n_2$ be the number of $B_2$-branches that are children of $w$ and
let $n_3$ the number of children vertices of $w$ with degree at least $4$.
By Lemma~\ref{lemma-B2-30},  $n_2 \leq 11$.
Assume that $w$ has more than six $B_2$-branches.
We consider three possible transformations  $\mathcal{T}_{1}$ and $\mathcal{T}_{2}$,
illustrated in Figure~\ref{fig-B2-20}.
\begin{figure}[h]
\begin{center}
\includegraphics[scale=0.750]{./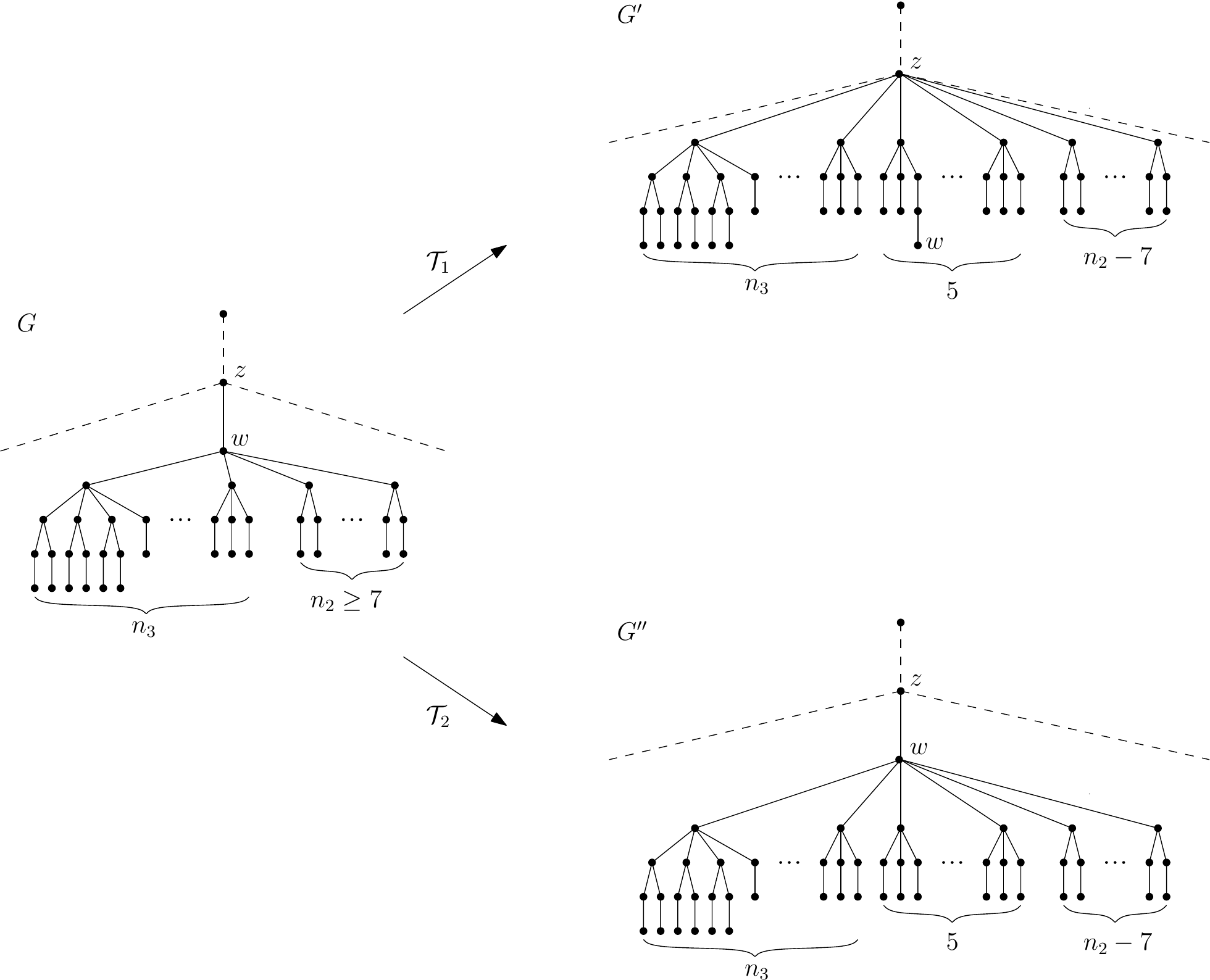}
\caption{Transforamations $\mathcal{T}_{1}$ and $\mathcal{T}_{2}$  from  the proof of Lemma~\ref{lemma-B2-20}.}
\label{fig-B2-20}
\end{center}
\end{figure}
After applying $\mathcal{T}_{1}$ the
degree of the vertex $z$ increases by $d(w)-4$, the degrees of five children vertices 
of $w$  increase from $3$ to $4$, the degree of the vertex $w$ decreases to $1$, 
while two children vertices of $w$ decrease their degrees from $3$ to $2$ and $1$, respectively.
The rest of the vertices do not change their degrees.
The change of the ABC index after applying $\mathcal{T}_{1}$ is at most
\beq \label{lemma-change-B2-20-10}
&&-f(d(z),d(w)) -f(3,d(w))+f(2,1)) -f(3,d(w))+f(4,2)  \nonumber \\
&& +5(-f(3,d(w))+f(4,d(z)+d(w)-4)) + (n_2-7)(-f(3,d(w))+f(3,d(z)+d(w)-4)) \nonumber \\
&&+\sum_{i=1}^{n_3}( -f(x_i,d(w))+f(x_i,d(z)+d(w)-4)) \nonumber \\
&&+(d(z)-2)(-f(4,d(w))+f(4,d(z)+d(w)-4)).
\eeq 
By Proposition~\ref{appendix-pro-030-2}  $-f(x_i,d(w))+f(x_i,d(z)+d(w)-4)$ decreases in $x_i$,
i.e., it is maximal for $x_i=4$. Together with  $d(w)=n_3+n_2+1$ we obtain that
\beq \label{lemma-change-B2-20-10-upper}
f_1(d(z), d(w), n_2)&=&-f(d(z),d(w))+f(2,1)) -f(3,d(w))+f(2,1)) -f(3,d(w))+f(4,2)  \nonumber \\
&& +5(-f(3,d(w))+f(4,d(z)+d(w)-4))  \nonumber \\
&& +(n_2-7)(-f(3,d(w))+f(3,d(z)+d(w)-4)) \nonumber \\
&&+(d(w)-n_2-1)( -f(4,d(w))+f(4,d(z)+d(w)-4)) \nonumber \\
&&+(d(z)-2)(-f(4,d(w))+f(4,d(z)+d(w)-4)) \nonumber 
\eeq 
is an upper bound on $(\ref{lemma-change-B2-20-10})$.
Let consider the expression
$g(d(z),d(w))=-f(d(z),d(w))+f(2,1))+(d(z)-2)(-f(4,d(w))+f(4,d(z)+d(w)-4))$
comprised of the components of $f_1(d(z), d(w), n_2)$.
The first derivative of $g(d(z),d(w))$ with respect to $d(z)$ is
\beq \label{lemma-change-B2-20-10-5-upper}
\frac{\partial g(d(z),d(w))}{\partial d(z)}&=&
\frac{1}{2} \left(-\sqrt{\frac{2+d(w)}{d(w)}}+\frac{-2+d(w)}{d(w) d(z)^2 \sqrt{\frac{-2+d(w)+d(z)}{d(w) d(z)}}} \right.  \nonumber \\ 
&& \left. -\frac{-2+d(z)}{(-4+d(w)+d(z))^2 \sqrt{\frac{-2+d(w)+d(z)}{-4+d(w)+d(z)}}}+\sqrt{\frac{-2+d(w)+d(z)}{-4+d(w)+d(z)}}\right).  \nonumber 
\eeq 
For $d(z) \geq d(w) \geq 9$, we have that 
$$
\sqrt{\frac{2+d(w)}{d(w)}}  > \sqrt{\frac{-2+d(w)+d(z)}{-4+d(w)+d(z)}}.
$$
Next we show that
\beq \label{lemma-change-B2-20-10-6-upper}
\frac{-2+d(w)}{d(w) d(z)^2 \sqrt{\frac{-2+d(w)+d(z)}{d(w) d(z)}}} < \frac{-2+d(z)}{(-4+d(w)+d(z))^2 \sqrt{\frac{-2+d(w)+d(z)}{-4+d(w)+d(z)}}}.
\eeq
Indeed, from (\ref{lemma-change-B2-20-10-6-upper}), it follows that
\beq \label{lemma-change-B2-20-10-7-upper}
(-2+d(w))^2(-4+d(w)+d(z))^3 < (-2+d(z))^2d(w) d(z)^3, \qquad \text{or} \nonumber
\eeq
\beq \label{lemma-change-B2-20-10-8-upper}
(-4+d(w)+d(z))^3 < d(w) d(z)^3, \qquad \text{or} \nonumber
\eeq
\beq \label{lemma-change-B2-20-10-8-upper}
(2d(z))^3 < d(w) d(z)^3, \nonumber
\eeq
which is satisfied since $d(w) \geq 9$.
Thus, $\partial g(d(z),d(w)) / \partial d(z) < 0$.
Because $f(x,y)$ decreases in $x$, it follows that 
$5(-f(3,d(w))+f(4,d(z)+d(w)-4)) ,  (n_2-7)(-f(3,d(w))+f(3,d(z)+d(w)-4))$, and $(d(w)-n_2-1)( -f(4,d(w))+f(4,d(z)+d(w)-4))$ decrease in $d(z)$.
Thus, $f_1(d(z), d(w), n_2)$ is maximal when $d(z)$ is minimal,i.e., $d(z)=d(w)$.

After applying $\mathcal{T}_{2}$ the degrees of five children vertices 
of $w$  increase from $3$ to $4$, the degree of the vertex $w$ decreases by $2$, 
while two children vertices of $w$ decrease their degrees from $3$ to $2$ and $1$, respectively.
The rest of the vertices do not change their degrees.
The change of the ABC index after applying $\mathcal{T}_{1}$ is at most
\beq \label{lemma-change-B2-20-20}
 &&-f(d(z),d(w))+f(d(z),d(w)-2) -f(3,d(w))+f(2,1)) -f(3,d(w))+f(4,2)  \nonumber \\
&& +5(-f(3,d(w))+f(4,d(w)-2)) + (n_2-7)(-f(3,d(w))+f(3,d(w)-2)) \nonumber \\
&&+\sum_{i=1}^{n_3}( -f(x_i,d(w))+f(x_i,d(w)-2)). 
\eeq 
By Proposition~\ref{appendix-pro-030-2}  $-f(x_i,d(w))+f(x_i,d(z)+d(w)-4)$ increase in $x_i$.
Because $w$ has children of degree $3$, by Theorems~\ref{thm-LG-10} and \ref{thm-DS}, it follows that $x_i$ may have
children of degree $3$ or $2$. Together with Lemma~\ref{lemma-B1-10-2}, we conclude that the maximal possible value of  $x_i$ is $8$.
Together with  $d(w)=n_3+n_2+1$ we obtain that
\beq \label{lemma-change-B2-20-20-upper}
 f_2(d(z), d(w), n_2)&=& -f(d(z),d(w))+f(d(z),d(w)-2)  \nonumber \\
&& -f(3,d(w))+f(2,1)) -f(3,d(w))+f(4,2)  \nonumber \\
&& +5(-f(3,d(w))+f(4,d(w)-2))  \nonumber \\
&& + (n_2-7)(-f(3,d(w))+f(3,d(w)-2)) \nonumber \\
&&+(d(w)-n_2-1)(-f(8,d(w))+f(8,d(w)-2)) \nonumber
\eeq 
is an upper bound on $(\ref{lemma-change-B2-20-20})$.
The expression $-f(d(z),d(w))$, and therefore also $f_2(d(z), d(w), n_2)$,  increases in $d(z)$, 
and  it is maximal $d(z) \to \infty$.


\noindent
We distinguish three cases with respect to $n_2$.

\smallskip
\noindent
{\bf Case $1$. $n_2=7.$}

\noindent
It holds that 
$f_1(d(z), d(w), 7) < f_1(d(w), d(w), 7) <0$ for $d(w) \leq 228$ and
$f_2(d(z), d(w), 7) < \lim_{d(z) \to \infty}f_2(d(z), d(w), 7)< 0$ for $d(w) > 16$.
So, we apply $\mathcal{T}_{1}$ for $d(w) \leq 16$ and $\mathcal{T}_{2}$ for $d(w)  > 16$,
and obtain trees with smaller ABC index than that of $G$, which is a contradiction to the
claim of the lemma.

\smallskip
\noindent
{\bf Case $2$. $n_2=8,9,10$.}

\noindent
It holds that 
$f_1(d(z), d(w), 8) <  f_1(d(w), d(w,) 8)  <0$ for $d(w) \leq 225$,  $f_1(d(z), d(w), 9) <  f_1(d(w), d(z,) 9)  <0$ for $d(w) \leq 221$ and
$f_1(d(z), d(w), 10) <  f_1(d(w), d(z,) 10)  <0$ for $d(w) \leq 218$.
Also, we have 
$f_2(d(z), d(w), 8) < \lim_{d(z) \to \infty}f_2(d(z), d(w), 8)< 0$ for $d(w) > 15$,
$f_2(d(z), d(w), 9) < \lim_{d(z) \to \infty}f_2(d(z), d(w), 9)< 0$ for $d(w) > 14$, and
$f_2(d(z), d(w), 10) < \lim_{d(z) \to \infty}f_2(d(z), d(w), 10)< 0$ for $d(w) > 12$.

\noindent
Here, we apply $\mathcal{T}_{1}$ for $d(w) \leq 15$  and  $\mathcal{T}_{2}$ for $d(w) > 15$,
and obtain trees with smaller ABC index than that of $G$, which is a contradiction to the
claim of the lemma.

\smallskip
\noindent
{\bf Case $3$. $n_2=11$}.

\noindent
Here for  for $d(w) > 8$, it holds that $f_2(d(z), d(w), 11) < \lim_{d(z) \to \infty}f_2(d(z), d(w), 11)< 0$, and thus
obtain a tree with smaller ABC index than that of $G$, which is a contradiction to the
claim of the lemma. This completes the proof.
\end{proof}
 
 \noindent
Next, we present the main result of this section.

\begin{te} \label{te-no2branches-10a}
A minimal-ABC tree does not contain more than eleven $B_2$-branches.
\end{te}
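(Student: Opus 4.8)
The plan is to argue by contradiction: assume $G$ is a minimal-ABC tree carrying at least twelve $B_2$-branches. The theorem would be immediate from Lemma~\ref{lemma-B2-30} if all $B_2$-branches had a common parent, so the real content is to exclude a ``spread-out'' configuration. By Lemma~\ref{lemma-B2-20} a vertex of $G$ other than the root is the parent of at most six $B_2$-branches, and by Lemma~\ref{lemma-B2-30} the root is the parent of at most ten; hence no single vertex is the common parent of all of them, and the twelve-or-more $B_2$-branches are distributed among at least two parent vertices. Among all vertices that are parents of at least one $B_2$-branch, I would fix two distinct ones, $w_1$ and $w_2$, with $d(w_1)\ge d(w_2)$, and with $w_1$ chosen to carry the largest possible number of $B_2$-branches.

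I would then treat the generic regime in which $d(w_1)$ stays below a small threshold $\tau$ (of the order of $13$). Here I apply the relocation transformation $\mathcal{T}$ that detaches one $B_2$-branch from $w_2$ and reattaches its (degree-$3$) root as a new child of $w_1$: this raises $d(w_1)$ by one, lowers $d(w_2)$ by one, and leaves all remaining degrees unchanged. Writing the change of the ABC index as the sum of the contributions of the edges incident with $w_1$ (to its other children and to its parent), of the edges incident with $w_2$ (to its remaining children and to its parent), of the deleted edge $w_2$--root and of the new edge $w_1$--root, I would invoke Propositions~\ref{appendix-pro-030} and~\ref{appendix-pro-030-2} to replace the unknown degrees of those children and parents by their extreme admissible values (degree $4$ for branch-children, the limit $d(\cdot)\to\infty$ for the parent vertices, and so on). This reduces matters to checking that an explicit single-variable upper bound in $d(w_1)$ is negative for $d(w_1)<\tau$, contradicting the minimality of $G$ in that regime.

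It remains to handle $d(w_1)\ge\tau$. In this case $w_1$ is forced to carry many children: either at least seven of them are $B_2$-branch roots (because every other candidate parent carries at most six by Lemma~\ref{lemma-B2-20} and $w_1$ was chosen maximal, or $w_1$ is the root with a large $B_2$-count), or else $w_1$ has at least a fixed number of $B_3$-branch children (recall $B_{\ge 5}$-branches are excluded by Theorem~\ref{te-no5branches-10}, and $B_4$- and $B_2$-branches cannot share the parent $w_1$ by Lemma~\ref{lemma-15}(b), while a $B_1$-child together with many $B_2$-children is excluded by Proposition~\ref{pro-Tk-B1-10}). In either situation one of the degree-reducing transformations already analysed in the proofs of Lemma~\ref{lemma-B2-30}, Lemma~\ref{lemma-B2-20} and Proposition~\ref{pro-Tk-B1-10} — the move that promotes five $B_2$-branches to $B_3$-branches while consuming two further $B_2$-branches and lowering $d(w_1)$ — applies at $w_1$, and the explicit estimates there show it strictly decreases the ABC index once $d(w_1)\ge\tau$. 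The finitely many residual configurations not covered by either regime involve trees of bounded order, and since all minimal-ABC trees of order up to $350$ are known from \cite{d-ectmabci-2013, lccglc-fcstmaibtds-14} and none of them has twelve $B_2$-branches, these are ruled out as well. I expect the principal difficulty to be exactly the bookkeeping needed to make the two regimes overlap: choosing $\tau$, the pair $(w_1,w_2)$, and the precise variant of the transformation so that the ``relocation lowers ABC'' bound and the ``degree reduction lowers ABC'' bound jointly cover every admissible degree of $w_1$ and every way of splitting at least twelve $B_2$-branches over the parents — most delicately the near-extremal split of ten $B_2$-branches at the root plus two at another vertex.
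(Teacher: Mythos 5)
Your overall strategy (reduce to per-vertex bounds, then exclude spread-out distributions of $B_2$-branches by a local transformation) matches the spirit of the paper, but the transformation you rely on in the main regime does not work, and this is a genuine gap rather than a bookkeeping issue. The single-branch relocation --- detaching one $B_2$-branch from $w_2$ and reattaching its root to $w_1$ --- changes the degree sequence, so Theorem~\ref{thm-DS} gives you nothing, and a direct estimate shows the change of the ABC index can be \emph{positive}. Concretely, take $w_2$ of degree $4$ with three $B_2$-children and a parent of large degree $D$, and $w_1$ of degree $10$ with six $B_2$-children, three $B_3$-children and a parent of large degree. The change is
$\bigl(-f(4,3)+f(11,3)\bigr)+2\bigl(-f(4,3)+f(3,3)\bigr)+\bigl(-f(4,D)+f(3,D)\bigr)+6\bigl(-f(10,3)+f(11,3)\bigr)+3\bigl(-f(10,4)+f(11,4)\bigr)+\bigl(-f(10,D')+f(11,D')\bigr)\approx -0.042+0.042+0.077-0.015-0.013-0.015\approx +0.035>0$:
the loss at the low-degree vertex $w_2$ (where $f$ varies steeply) dominates the gain at the high-degree vertex $w_1$ (where $f$ is nearly flat). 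This is exactly why the single-central-vertex conjecture failed, and why the paper's transformations $\mathcal{T}_1$--$\mathcal{T}_6$ are compound moves: they dismantle the smaller parent entirely (reducing its degree to $2$ or $4$), promote two of its $B_2$-children to degree $4$, and degrade one child to a pendant path, harvesting enough additional negative terms to force a net decrease.

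The second regime does not rescue this, because your ``promote five $B_2$-branches, consume two'' move needs seven $B_2$-children at a single vertex, while Lemma~\ref{lemma-B2-20} already caps every non-root parent at six and Lemma~\ref{lemma-B2-30} caps the root at ten. The genuinely hard configurations are therefore precisely those where every parent respects its cap --- e.g.\ six $B_2$-branches at each of two non-root vertices, or ten at the root plus several elsewhere --- and neither of your two regimes applies to them. These are the cases the paper spends most of its effort on (Cases~$A$ and~$B$ with transformations $\mathcal{T}_5$, $\mathcal{T}_6$, the relabelling of $w_{k-1}$ and $w_k$, and a final appeal to the computed minimal-ABC trees of order up to $300$ for a few residual small configurations). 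You correctly identify the $10+2$ split as delicate, but the proposal contains no transformation that actually handles it or the $6+6$ split, so the argument as written does not close.
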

\begin{proof} 
First, consider the case  when the $B_2$-branches have $k \geq 3$ different parent vertices, denoted
by $w_1, w,_2, \dots, w_{k-1}, w_k$, such that, $d(w_1) \geq d(w_2)\geq \dots \geq d(w_{k-1}) \geq d(w_k)$.
By Theorem~\ref{thm-DS} only $w_1$ and $w_k$ may have children that are not roots of  $B_2$-branches:
$w_1$ may have in addition children vertices of degree $\geq 3$, while $w_k$ may have in addition only
children vertices of degree $2$. Moreover, by Theorem~\ref{te-no5branches-10} and Lemma~\ref{lemma-15}, 
if $w_k$ has a children of degree $2$, then $w_1$ in addition to $B_2$-branches may have only  $B_3$-branches
as its children. Note that  $w_2, w_3, \dots, w_{k-1}, w_k$ cannot be root vertices and by Lemma~\ref{lemma-B2-20} they can be
parents of at most $6$ $B_2$-branches.

We apply a transformation on the $B_2$-branches that are children of the $w_1$ and $w_{k-1}$ vertices.
There are three distinct cases regarding the parent vertices of $w_1$ and $w_{k-1}$, denoted by $z_1$ and $z_{k-1}$:
$z_1 \neq z_{k-1}$, $z_1 = z_{k-1}$ and $z_{k-1}=w_1$.

\bigskip
\noindent
{\bf Case $1$.} $z_1 \neq z_{k-1}$.

\noindent
Notice, that  $z_1$ and  $z_{k-1}$ may belong to different levels of $G$,
but by Theorem~\ref{thm-DS}, it follows that their distance to the root vertex of $G$ may differ for at most $1$.
Let $w_1$ be a parent of $n_1$ $B_2$-branches, and $w_{k-1}$ be a parent of $n_{k-1}=d(w_{k-1})-1$ $B_2$-branches.
By Theorem~\ref{thm-GFI-10}, $z_{k-1}$ cannot have a child of degree $2$.
Also, by Proposition~\ref{pro-B2-10},  $w_{k-1}$ cannot have two $B_2$-branches as its children, thus it follows that
it  has at least $3$ $B_2$-branches as its children.
With respect to the number of $B_2$-branches that are attached to $w_{k-1}$, we distinguish two further subcases.

\smallskip
\noindent
{\bf Subcase $1.1.$} $d(w_{k-1})=4$.

\noindent
In this case we apply the transformation $\mathcal{T}_{1}$ illustrated in Figure~\ref{fig-te-numberB2-10}.
\begin{figure}[h]
\begin{center}
\includegraphics[scale=0.75]{./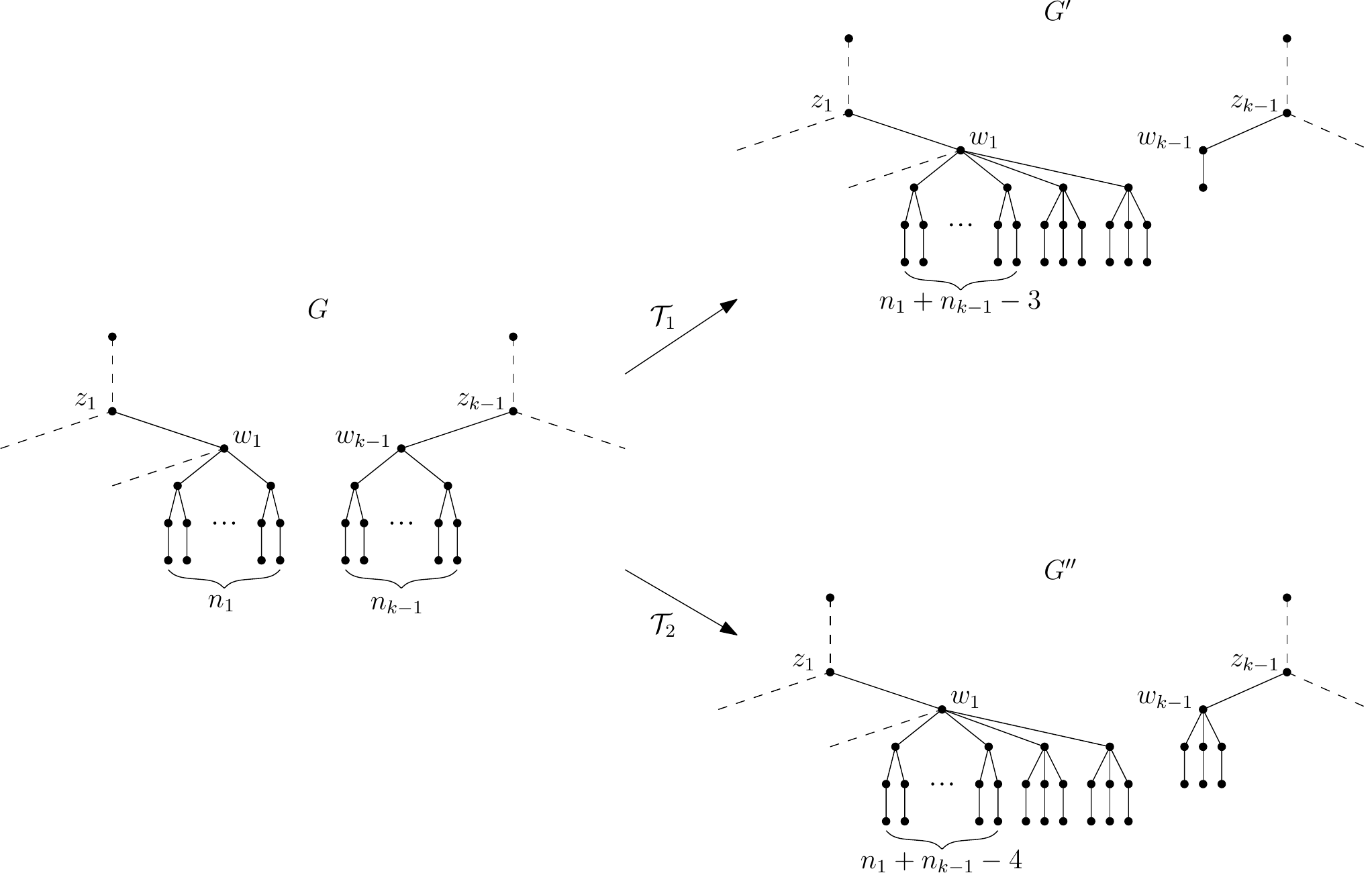}
\caption{An illustration of the transformations $\mathcal{T}_{1}$ and $\mathcal{T}_{2}$ from the proof of Theorem~\ref{te-no2branches-10a} (Case $1$).
$\mathcal{T}_{1}$ is applied where $n_{k-1}=3$, while for $n_{k-1}=4, 5$ and $6$, we apply  $\mathcal{T}_{2}$.}
\label{fig-te-numberB2-10}
\end{center}
\end{figure}
After applying $\mathcal{T}_{1}$ 
the degree of the vertex $w_1$ increases by $2$, 
the degree of the vertex $w_{k-1}$ decreases to $2$, 
two children vertices of the vertex $w_{k-1}$ increase their degrees from $3$ to $4$, 
while one child  of $w_{k-1}$ decreases its degree to $1$.
The rest of the vertices do not change their degrees.
The change of the ABC index after applying $\mathcal{T}_{1}$ is bounded from above by
\beq \label{thm-noB2-10}
 &&-f(d(z_{k-1}),4)+f(d(z_{k-1}),2)) -f(4,3)+f(2,1)) \nonumber \\
 &&+2( -f(4,3)+f(d(w_{1})+2,4))  \nonumber \\
&&+n_1( -f(d(w_1),3)+f(d(w_{1})+2,3))  \nonumber \\
&&+\sum_{i=1}^{d(w_1)-n_1-1}(-f(d(w_1), x_i)+f(d(w_{1})+2, x_i)) \nonumber \\
&&-f(d(z_1),d(w_1))+f(d(z_1), d(w_{1})+2).
\eeq 
By Proposition~\ref{appendix-pro-030-2}  $-f(d(w_1), x_i)+f(d(w_{1})+2, x_i)$ decrease in $x_i$,
so it is maximal for $x_i=4$.
The expression $-f(d(z_{k-1}),4)+f(d(z_{k-1}),2))$ increases in $d(z_{k-1})$, so we obtain 
an upper bound on (\ref{thm-noB2-10}), if we set $d(z_{k-1})=d(z_1)$.
The expression $$g_1(d(z_{1}), d(w_{1}))=-f(d(z_{1}),4)+f(d(z_1),2)) -f(d(z_1),d(w_1))+f(d(z_1), d(w_{1})+2)$$ increases in
$d(z_{1})$, because its first derivative with respect to $d(z_{1})$,
\beq \label{thm-noB2-10b}
 \frac{\partial g_1(d(z_{1}), d(w_{1}))}{\partial d(z_1)} &=&\frac{1}{4} \left(\frac{2}{d(z_1)^2 \sqrt{\frac{2+d(z_1)}{d(z_1)}}}+
\frac{2 (-2+d(w_1))}{d(w_1) d(z_1)^2 \sqrt{\frac{-2+d(w_1)+d(z_1)}{d(w_1) d(z_1)}}} \right. \nonumber \\
&& \left.  -\frac{2 d(w_1) \sqrt{\frac{d(w_1)+d(z_1)}{(2+d(w_1)) d(z_1)}}}{d(w_1) d(z_1)+d(z_1)^2}\right) \nonumber \\
\eeq 
is positive for $d(z_{1}) \geq d(w_1) \geq 4$. Thus, 
\beq \label{thm-noB2-10c}
 \lim_{d(z_{1}) \to \infty} g(d(z_{1}), d(w_{1}), n_1) &=& \lim_{d(z_{1}) \to \infty} -f(d(z_{1}),4)+f(d(z_{1}),2)) -f(4,3)+f(2,1)) \nonumber \\
 &&+2( -f(4,3)+f(d(w_{1})+2,4))   \nonumber \\
&&+n_1( -f(d(w_1),3)+f(d(w_{1})+2,3))  \nonumber \\
&&+(d(w_1)-n_1-1)(-f(d(w_1), 4)+f(d(w_{1})+2, 4)) \nonumber \\
&&-f(d(z_1),d(w_1))+f(d(z_1), d(w_{1})+2) \nonumber
\eeq 
is an upper bound on (\ref{thm-noB2-10}).
Considering the maximum of $\lim_{d(z_{1}) \to \infty} g(d(z_{1}), d(w_{1}), n_1)$ for $n_{1}=1,2,3,4,5,6$,
we show the $\lim_{d(z_{1}) \to \infty} g(d(z_{1}), d(w_{1}), n_1)$ is always negative.
Namely, 
\beq \label{thm-noB2-10d}
 &&\lim_{\substack{d(z_{1}) \to \infty } } g(d(z_{1}), d(w_{1}), 1) \leq   \lim_{\substack{d(z_{1}) \to \infty \\ d(w_{1}) \to \infty} } g(d(z_{1}), d(w_{1}), 1) =  -0.0222781, \nonumber \\
 &&\lim_{\substack{d(z_{1}) \to \infty } } g(d(z_{1}), d(w_{1}), 2) \leq   \lim_{\substack{d(z_{1}) \to \infty \\ d(w_{1}) \to \infty} } g(d(z_{1}), d(w_{1}), 2) =  -0.0222781, \nonumber \\
&& \lim_{\substack{d(z_{1}) \to \infty } } g(d(z_{1}), d(w_{1}), 3) \leq   \lim_{\substack{d(z_{1}) \to \infty \\ d(w_{1}) \to \infty} } g(d(z_{1}), d(w_{1}), 3) =  -0.0222781, \nonumber \\
&& \lim_{\substack{d(z_{1}) \to \infty } } g(d(z_{1}), d(w_{1}), 4) \leq   \lim_{\substack{d(z_{1}) \to \infty }} g(d(z_{1}), 5, 4) =  -0.0186023, \nonumber \\
&& \lim_{\substack{d(z_{1}) \to \infty } } g(d(z_{1}), d(w_{1}), 5) \leq   \lim_{\substack{d(z_{1}) \to \infty }} g(d(z_{1}), 6, 5) =  -0.0151247,  \quad \text{and} \nonumber \\
&& \lim_{\substack{d(z_{1}) \to \infty } } g(d(z_{1}), d(w_{1}), 6) \leq   \lim_{\substack{d(z_{1}) \to \infty }} g(d(z_{1}), 7, 6) =  -0.0131643. \nonumber
 \eeq
From here follows that also the change of the ABC index  (\ref{thm-noB2-10}) is negative.

\smallskip
\noindent
{\bf Subcase $1.2.$} $d(w_{k-1})=5, 6$ or $7$.

\noindent
In this case, for $n_{1}=1,2,3,4,5,6$, we apply the transformation $\mathcal{T}_{2}$ 
illustrated also in Figure~\ref{fig-te-numberB2-10}.
After applying $\mathcal{T}_{2}$ 
the degree of the vertex $w_1$ increases by $d(w_{k-1})-3=n_{k_1}-2$, 
the degree of the vertex $w_{k-1}$ decreases to $4$, 
two children vertices of the vertex $w_{k-1}$ increase their degrees from $3$ to $4$, 
while two children vertices of $w_{k-1}$ decrease their degrees to $2$ and $1$, respectively.
The rest of the vertices do not change their degrees.
The change of the ABC index after applying $\mathcal{T}_{2}$ is bounded from above by
\beq \label{thm-noB2-20}
 &&-f(d(z_{k-1}),d(w_{k-1}))+f(d(z_{k-1}),4)) -f(d(w_{k-1}),3)+f(4,2))\nonumber \\
 && -f(d(w_{k-1}),3)+f(2,1)) +2( -f(d(w_{k-1}),3)+f(d(w_1)+n_{k-1}-2,4))   \nonumber \\
&&+n_1( -f(d(w_1),3)+f(d(w_1)+n_{k-1}-2,3))  \nonumber \\
&&+\sum_{i=1}^{d(w_1)-n_1-1}(-f(d(w_1), x_i)+f(d(w_1)+n_{k-1}-2, x_i)) \nonumber \\
&&+(n_{k-1}-4)( -f(d(w_{k-1}),3)+f(d(w_1)+n_{k-1}-2,3)) \nonumber \\
&& -f(d(z_1),d(w_1))+f(d(z_1), d(w_{1})+n_{k-1}-2)).
\eeq 
By Proposition~\ref{appendix-pro-030-2}  $-f(d(w_1), x_i)+f(d(w_{1})+2, x_i)$ decrease in $x_i$,
so it is maximal for $x_i=4$.

Next, we consider  the case $d(w_{k-1})=5$ ($n_{k-1}=4$).
The expression $-f(d(z_{k-1}),5)+f(d(z_{k-1}),4))$ increases in $d(z_{k-1})$, so we obtain 
an upper bound on (\ref{thm-noB2-20}), if we set $d(z_{k-1})=d(z_1)$.
The expression $$g_1(d(z_{1}), d(w_{1}))=-f(d(z_{1}),5)+f(d(z_1),4)) -f(d(z_1),d(w_1))+f(d(z_1), d(w_{1})+2)$$ increases in
$d(z_{1})$, because its first derivative with respect to $d(z_{1})$,
\beq \label{thm-noB2-20b}
 \frac{\partial g_1(d(z_{1}), d(w_{1}))}{\partial d(z_1)} &=&\frac{1}{10} \left(-\frac{5}{d(z_1)^2 \sqrt{\frac{2+d(z_1)}{d(z_1)}}}
 +\frac{3 \sqrt{5}}{d(z_1)^2 \sqrt{\frac{3+d(z_1)}{d(z_1)}}} \right. \nonumber \\
 && \qquad \left. +\frac{5 (-2+d(w_1))}{d(w_1) d(z_1)^2 \sqrt{\frac{-2+d(w_1)+d(z_1)}{d(w_1) d(z_1)}}}
 -\frac{5 d(w_1) \sqrt{\frac{d(w_1)+d(z_1)}{(2+d(w_1)) d(z_1)}}}{d(w_1) d(z_1)+d(z_1)^2}\right) \nonumber
\eeq 
is positive for $d(z_{1}) \geq d(w_1) \geq 5$. Thus, 
\beq \label{thm-noB2-20c}
 \lim_{d(z_{1}) \to \infty} g(d(z_{1}), d(w_{1}), n_1) &=&
\lim_{d(z_{1}) \to \infty}   -f(d(z_1),d(w_1))+f(d(z_{k-1}),4)) -f(5,3)+f(4,2))\nonumber \\
&&-f(5),3)+f(2,1)) +2( -f(5,3)+f(d(w_1)+2,4))   \nonumber \\
&& +n_1( -f(d(w_1),3)+f(d(w_1)+2,3))  \nonumber \\
&& +(d(w_1)-n_1-1)(-f(d(w_1), 4)+f(d(w_1)+2, 4)) \nonumber \\
&& -f(d(z_1),d(w_1))+f(d(z_1), d(w_{1})+2)). \nonumber
\eeq 
is an upper bound on (\ref{thm-noB2-20}).
Considering the maximum of $\lim_{d(z_{1}) \to \infty} g(d(z_{1}), d(w_{1}), n_1)$ for $n_{1}=1,2,3,4,5,6$,
we show the $\lim_{d(z_{1}) \to \infty} g(d(z_{1}), d(w_{1}), n_1)$ is always negative.
Namely, 
\beq \label{thm-noB2-10d}
 &&\lim_{\substack{d(z_{1}) \to \infty } } g(d(z_{1}), d(w_{1}), 1) \leq   \lim_{\substack{d(z_{1}) \to \infty \\ d(w_{1}) \to \infty} } g(d(z_{1}), d(w_{1}), 1) =  -0.0628222, \nonumber \\
 &&\lim_{\substack{d(z_{1}) \to \infty } } g(d(z_{1}), d(w_{1}), 2) \leq   \lim_{\substack{d(z_{1}) \to \infty \\ d(w_{1}) \to \infty} } g(d(z_{1}), d(w_{1}), 2) =  -0.0628222, \nonumber \\
&& \lim_{\substack{d(z_{1}) \to \infty } } g(d(z_{1}), d(w_{1}), 3) \leq   \lim_{\substack{d(z_{1}) \to \infty \\ d(w_{1}) \to \infty} } g(d(z_{1}), d(w_{1}), 3) =  -0.0628222, \nonumber \\
&& \lim_{\substack{d(z_{1}) \to \infty } } g(d(z_{1}), d(w_{1}), 4) \leq   \lim_{\substack{d(z_{1}) \to \infty }} g(d(z_{1}), 5, 4) =  -0.0591464, \nonumber \\
&& \lim_{\substack{d(z_{1}) \to \infty } } g(d(z_{1}), d(w_{1}), 5) \leq   \lim_{\substack{d(z_{1}) \to \infty }} g(d(z_{1}), 6, 5) =  -0.0556687,  \quad \text{and} \nonumber \\
&& \lim_{\substack{d(z_{1}) \to \infty } } g(d(z_{1}), d(w_{1}), 6) \leq   \lim_{\substack{d(z_{1}) \to \infty }} g(d(z_{1}), 7, 6) =  -0.0537084. \nonumber
 \eeq
From here follows that also the change of the ABC index  (\ref{thm-noB2-20}) when  $d(w_{k-1})=5$ is negative.
Analogous proofs, one can obtain for $d(w_{k-1})=6,7$, so we omit them.

After the transformation $\mathcal{T}_{1}$ $d(w_{k-1})=2$, and after the transformation $\mathcal{T}_{2}$ $d(w_{k-1})=4$.
Let $z_{k}$ be the parent vertex of $w_{k}$. Notice that degree of $w_k$ is at least $4$.
Now, we interchange the labels of the vertices $w_{k-1}$ and $w_k$. 
Since, $d(z_{k-1}) \geq d(z_{k})$, after this relabeling the ABC index does not increase.
Thus, finally we have obtained a tree with smaller ABC index than $G$  and with $k-1$
different parent vertices of the $B_2$-branches.

\bigskip
\noindent
{\bf Case $2$.} $z_1 = z_{k-1}$.

\noindent
In the previous case, the upper bound of the change of the ABC index of $G$ was obtained for
$d(z_1) = d(z_{k-1})$ ($\to \infty$). Thus, applying here the same transformation from Case~$1$, we obtain the same upper bounds
on the change of the ABC index, which are all negative.

\bigskip
\noindent
{\bf Case $3$.} $z_{k-1}=w_1$.

\noindent
Since $z_{k-1}$ has at same time children of degree larger than $3$ and also children that are roots of $B_2$-branches, by the Theorem~\ref{thm-DS}, 
it follows that $z_{k-1}$ must be also a parent vertex of $w_{k}$.
Similarly, as in Case $1$, regarding the degree of $w_{k-1}$ we apply two transformations (illustrated in
Figure~\ref{fig-te-numberB2-30}):
when $w_{k-1}=4$, we apply the transformation $\mathcal{T}_{3}$, and when $w_{k-1}=5,6,7$,
we apply the transformation $\mathcal{T}_{4}$.

\smallskip
\noindent
{\bf Subcase $3.1.$} $d(w_{k-1})=4$.

\noindent
After applying $\mathcal{T}_{3}$ 
the degree of the vertex $w_1$ ($z_{k-1}$) increases by $d(w_{k-1})-2=n_{k_1}-1$, 
the degree of the vertex $w_{k-1}$ decreases to $2$, 
two children vertices of the vertex $w_{k-1}$ increase their degrees from $3$ to $4$, 
while one child  of $w$ decreases its degree to $1$.
The rest of the vertices do not change their degrees.
\noindent
\begin{figure}[h]
\begin{center}
\includegraphics[scale=0.75]{./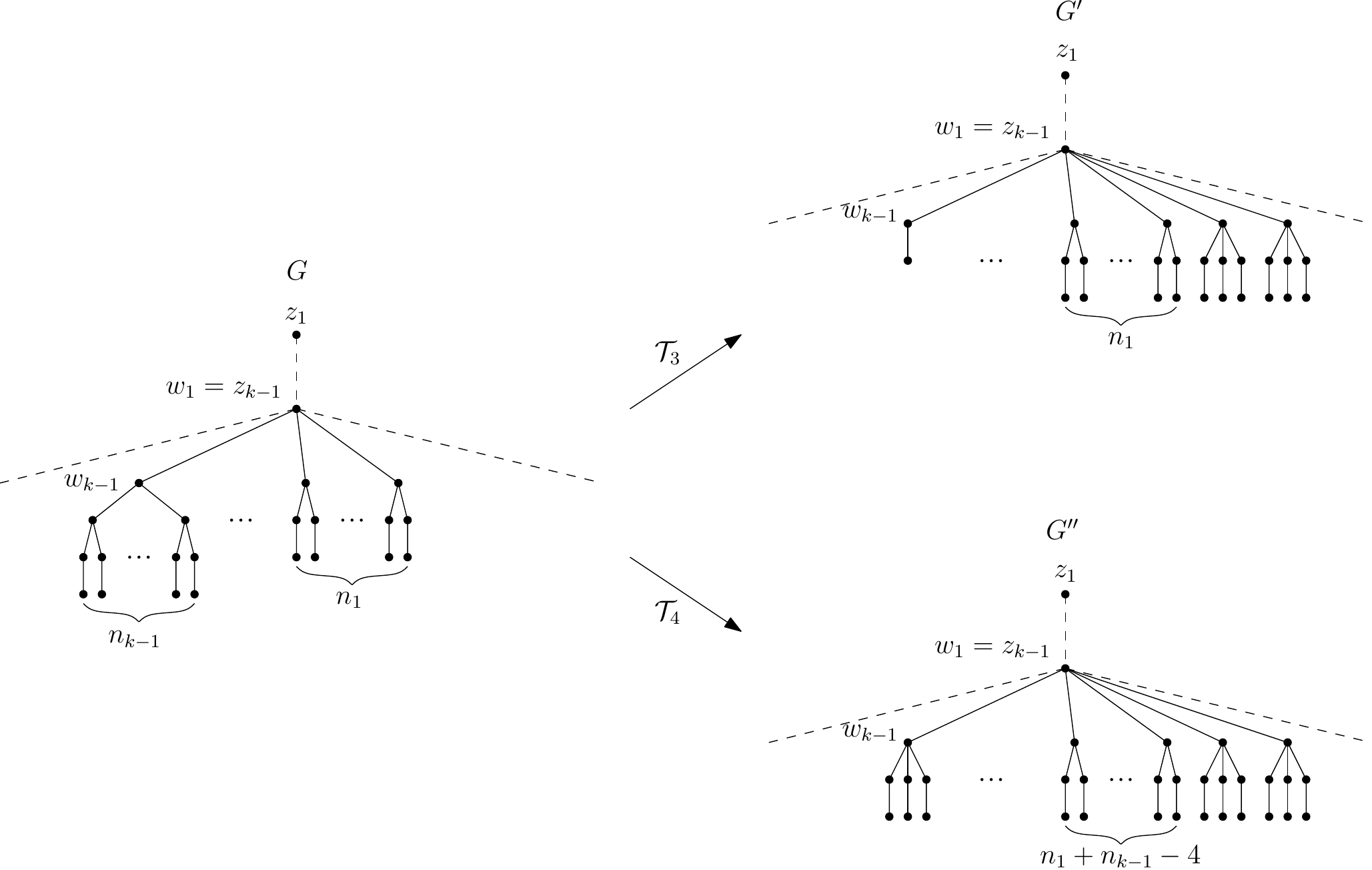}
\caption{An illustration of the transformations $\mathcal{T}_{3}$ and $\mathcal{T}_{4}$ from the proof of Theorem~\ref{te-no2branches-10a} .
}
\label{fig-te-numberB2-30}
\end{center}
\end{figure}
The change of the ABC index after applying $\mathcal{T}_{1}$ is bounded from above by
\beq \label{thm-noB2-100}
 &&-f(d(w_1),4)+f(d(w_1),2)) -f(4,3)+f(2,1)) \nonumber \\
 &&+2( -f(4,3)+f(d(w_{1})+2,4))   \nonumber \\
&&+n_1( -f(d(w_1),3)+f(d(w_{1})+2,3))  \nonumber \\
&&+\sum_{i=1}^{d(w_1)-n_1-2}(-f(d(w_1), x_i)+f(d(w_{1})+2, x_i)) \nonumber \\
&&-f(d(z_1),d(w_1))+f(d(z_1), d(w_{1})+2).
\eeq 
By Proposition~\ref{appendix-pro-030-2}  $-f(d(w_1), x_i)+f(d(w_{1})+2, x_i)$ decrease in $x_i$,
so it is maximal for $x_i=4$.
By the same proposition the expression $-f(d(z_1),d(w_1))+f(d(z_1), d(w_{1})+2)$ decreases in
$d(z_{1})$, and its upper bound is $-f(d(w_1),d(w_1))+f(d(w_1), d(w_{1})+2)$.
Thus, 
\beq \label{thm-noB2-100c}
 g(d(w_{1}), n_1) &=& -f(d(w_{1}),4)+f(d(w_{1}),2)) -f(4,3)+f(2,1)) \nonumber \\
 &&+2( -f(4,3)+f(d(w_{1})+2,4))   \nonumber \\
&&+n_1( -f(d(w_1),3)+f(d(w_{1})+2,3))  \nonumber \\
&&+(d(w_1)-n_1-2)(-f(d(w_1), 4)+f(d(w_{1})+2, 4)) \nonumber \\
&&-f(d(w_1),d(w_1))+f(d(w_1), d(w_{1})+2)
\eeq 
is an upper bound on (\ref{thm-noB2-100}).
Considering the maximum of $g(d(w_{1}), n_1)$ for $n_{1}=1,2,3,4,5,6$,
we show the $g(d(w_{1}), n_1)$ is always negative.
Namely, 
\beq \label{thm-noB2-100d}
 &&g(d(w_{1}), n_1) \leq   \lim_{ d(w_{1}) \to \infty } g(d(z_{1}), d(w_{1}), n_1) =  -0.0222781, \nonumber
 \eeq
 for  $n_{1}=1,2,3,4,5,6$.
From here follows that also the change of the ABC index  (\ref{thm-noB2-100}) is negative.

\smallskip
\noindent
{\bf Subcase $3.2.$} $d(w_{k-1})=5,6$, or $7$.

%
\noindent
After applying $\mathcal{T}_{4}$ 
the degree of the vertex $w_1$ increases by $d(w_{k-1})-3=n_{k_1}-2$, 
the degree of the vertex $w_{k-1}$ decreases to $4$, 
two children vertices of the vertex $w_{k-1}$ increase their degrees from $3$ to $4$, 
while two children vertices of $w$ decreases their degrees to $2$ and $1$, respectively.
The rest of the vertices do not change their degrees.
The change of the ABC index after applying $\mathcal{T}_{4}$ is bounded from above by
\beq \label{thm-noB2-200}
 &&-f(d(w_1),d(w_{k-1}))+f(d(w_1),4)) -f(d(w_{k-1}),3)+f(4,2))\nonumber \\
 && -f(d(w_{k-1}),3)+f(2,1)) +2( -f(d(w_{k-1}),3)+f(d(w_1)+n_{k-1}-2,4))   \nonumber \\
&&+n_1( -f(d(w_1),3)+f(d(w_1)+n_{k-1}-2,3))  \nonumber \\
&&+\sum_{i=1}^{d(w_1)-n_1-2}(-f(d(w_1), x_i)+f(d(w_1)+n_{k-1}-2, x_i)) \nonumber \\
&&+(n_{k-1}-4)( -f(d(w_{k-1}),3)+f(d(w_1)+n_{k-1}-2,3)) \nonumber \\
&& -f(d(z_1),d(w_1))+f(d(z_1), d(w_{1})+n_{k-1}-2)).
\eeq 
By Proposition~\ref{appendix-pro-030-2}  $-f(d(w_1), x_i)+f(d(w_{1})+2, x_i)$ decrease in $x_i$,
so it is maximal for $x_i=4$.
By the same proposition the expression $-f(d(z_1),d(w_1))+f(d(z_1), d(w_{1})+2)$ decreases in
$d(z_{1})$, and its upper bound is $-f(d(w_1),d(w_1))+f(d(w_1), d(w_{1})+2)$.

Next, we consider  the case $d(w_{k-1})=5$ ($n_{k-1}=4$).
The expression $-f(d(z_1),d(w_1))+f(d(z_1), d(w_{1})+n_{k-1}-2))$ decreases in $d(z_{k-1})$, so it is 
bounded from above by $-f(d(w_1),d(w_1))+f(d(w_1), d(w_{1})+n_{k-1}-2))$.
We obtain an upper bound on (\ref{thm-noB2-200}) by setting $d(z_{k-1})=d(z_1)$.
Thus, 
\beq \label{thm-noB2-200c}
g(d(w_{1}), n_1) &=& -f(d(w_1),5)+f(d(z_{k-1}),4)) -f(5,3)+f(4,2))\nonumber \\
&& -f(5,3)+f(2,1)) +2( -f(5,3)+f(d(w_1)+2,4))   \nonumber \\
&&+n_1( -f(d(w_1),3)+f(d(w_1)+2,3))  \nonumber \\
&&+(d(w_1)-n_1-2)(-f(d(w_1), 4)+f(d(w_1)+2, 4)) \nonumber \\
&& -f(d(w_1),d(w_1))+f(d(w_1), d(w_{1})+2)). \nonumber
\eeq 
is an upper bound on (\ref{thm-noB2-200}).
Considering the maximum of $\lim_{d(z_{1}) \to \infty} g(d(z_{1}), d(w_{1}), n_1)$ for $n_{1}=1,2,3,4,5,6$,
we show the $\lim_{d(z_{1}) \to \infty} g(d(z_{1}), d(w_{1}), n_1)$ is always negative.
Namely, 
\beq \label{thm-noB2-10d}
 && g(d(z_{1}), d(w_{1}), 1) \leq   g( 5, 1) =  -0.0515202, \nonumber \\
 && g(d(z_{1}), d(w_{1}), 2) \leq   g( 5, 2) =  -0.0421011, \nonumber \\
&& g(d(z_{1}), d(w_{1}), 3) \leq    g(5, 3) =  -0.0326819, \nonumber \\
&&  g(d(z_{1}), d(w_{1}), 4) \leq   g(6, 4) =  -0.0399417, \nonumber \\
&&  g(d(z_{1}), d(w_{1}), 5) \leq   g(7, 5) =  -0.0442738,  \quad \text{and} \nonumber \\
&&  g(d(z_{1}), d(w_{1}), 6) \leq   g(7, 6) =  -0.0470986. \nonumber
 \eeq
From here follows that also the change of the ABC index  (\ref{thm-noB2-200}) when  $d(w_{k-1})=5$ is negative.
Analogous proofs, one can obtain for $d(w_{k-1})=6,7$, so we omit them.

After applying transformations $\mathcal{T}_{3}$ and $\mathcal{T}_{4}$ $d(w_{k-1})=4$, 
we have obtained a tree with smaller ABC index than $G$  and with $k-1$
different parent vertices of the $B_2$-branches.

We can repeatedly apply the transformations from the above three cases until we end with
a tree whose all $B_2$-branches have at most two different parent vertices,  $w_1$ and $w_2$.
Observe that vertex $w_2$ and its children were not affected by the eventual prior modification
from Cases~$1$, $2$, and $3$. Also notice that $w_1$ after the above transformations may gain
only new $B_2$ and $B_3$-branches.
Next, if the tree has more than $11$ $B_2$-branches, we proceed with further transformation obtaining a tree with smaller ABC index, with
all of $B_2$-branches attached to only one vertex.
Here we distinguish two main cases.

\bigskip
\noindent
{\bf Case $A$.} $w_2$ does not have children of degree $2$.

\noindent
In this case we can apply one of the transformations from Cases~$1$, $2$, and $3$,
after which only $w_1$ will have $B_2$-branches as children. 
If there are more than  $11$ $B_2$-branches, we can apply the transformation from
Lemma~\ref{lemma-B2-30}, and obtain a tree with smaller ABC index and maximal $11$ $B_2$-branches.

\bigskip
\noindent
{\bf Case $B$.} $w_2$ has a children of degree $2$.

\noindent
Let $n_{22}$ and $n_{21}$ be the number of $B_2$ and $B_1$-branches, respectively, 
that are attached to $w_2$, and let $n_{13}$ and $n_{12}$ be the number of $B_3$ and $B_2$-branches, respectively, 
that are attached to $w_1$. 

\bigskip
\noindent
{\bf Subcase $B.1$.} $w_1$ is parent of $w_2$.

\noindent
We consider further two subcases regarding if $w_1$ is the root vertex or not.

\smallskip
\noindent
{\bf Subcase $B.1.1$.} $w_1$ is not a root vertex.

\noindent
In this case, by Lemma~\ref{lemma-B2-20}, $w_2$ cannot be a parent of more than $6$ $B_2$-branches.
If $w_1$ is a parent of more than $6$ $B_2$-branches, then we can apply the transformations from
Lemma~\ref{lemma-B2-20}, and obtain that also $w_1$ is not a parent of more than $6$ $B_2$-branches.

If $n_{22} \leq 6$ and $n_{12} \leq 5$, or $n_{22} \leq 5$ and $n_{12} \leq 6$, the theorem holds.
If $n_{22} = 6$ and $n_{12} = 6$, we apply the transformation $\mathcal{T}_{5}$ illustrated in  Figure~\ref{fig-te-numberB2-50}.
\begin{figure}[h]
\begin{center}
\includegraphics[scale=0.75]{./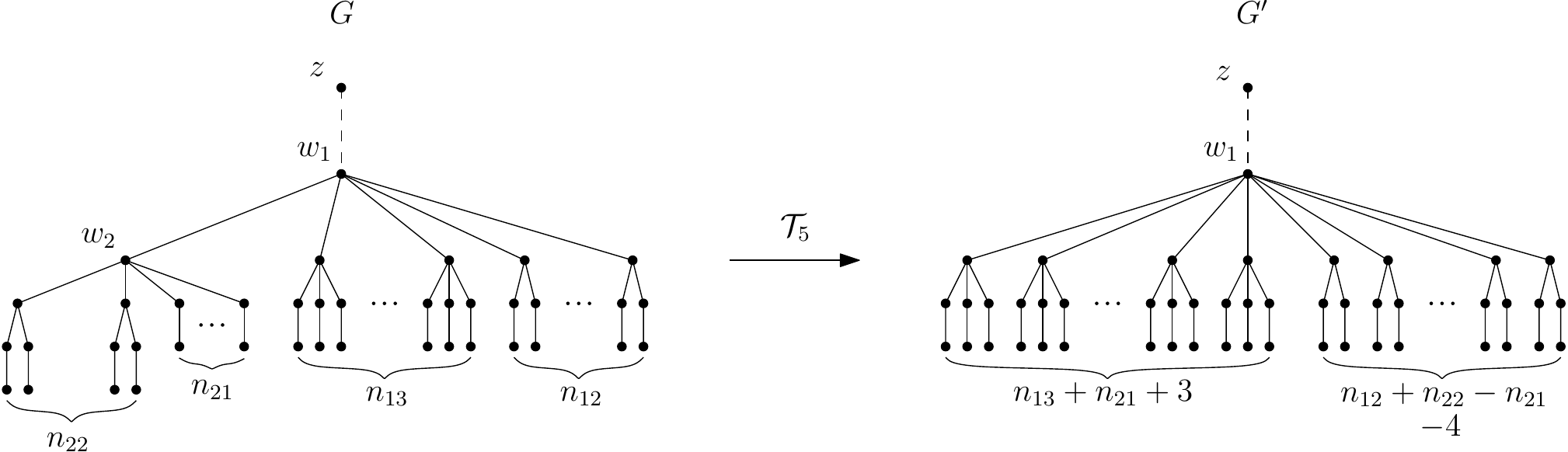}
\caption{An illustration of the transformation $\mathcal{T}_{5}$ from the proof of Theorem~\ref{te-no2branches-10a}, Subcase~$B.1.1$.}
\label{fig-te-numberB2-50}
\end{center}
\end{figure}
By Proposition~\ref{pro-B1-10-00}, it follows that $n_{21}=0$ or  $n_{21}=1$. Since the case $n_{21}=0$ 
is resolved in Case~$A$, we consider here that $n_{21}=1$.
In this case, after applying $\mathcal{T}_{5}$ 
the degree of the vertex $w_1$ increases by $4$, 
the degree of the vertex $w_2$ decreases to $1$,
one child vertex of $w_2$ decreases its degree from $3$ to $2$, and
four children vertices of $w_1$ increase there degrees from $3$ to $4$.
The rest of the vertices do not change their degrees.
The change of the ABC index after applying $\mathcal{T}_5$ is
\beq \label{thm-noB2-200ah}
 && -f(d(z),d(w_1))+f(d(z), d(w_1)+4))-f(d(w_1),d(w_2))+f(2,1)-f(d(w_2),3)+f(4,2)  \nonumber \\
&&+(n_{22}-1)( -f(d(w_2),3)+f(d(w_1)+4,3)) +n_{13}( -f(d(w_1),4)+f(d(w_1)+4,4))  \nonumber \\
&&+4( -f(d(w_1),3)+f(d(w_1)+4,4))  +(n_{12}-4)( -f(d(w_1),3)+f(d(w_1)+4,3)). \nonumber \\
\eeq 
By Proposition~\ref{appendix-pro-030-2}  $-f(d(z),d(w_1))+f(d(z), d(w_1)+4))$ decrease in $d(z)$,
so it is maximal for $d(z)=d(w_1)$.
Together with $d(w_1)=n_{13}+n_{12}+2=n_{13}+8$, we have that 
\beq \label{thm-noB2-210a}
g(n_{13})&=&  -f(n_{13}+8,n_{13}+8)+f(n_{13}+8, n_{13}+12))  \nonumber \\
&& -f(n_{13}+8,8)+f(2,1)-f(8,3)+f(4,2)  \nonumber \\
&&+5( -f(8,3)+f(n_{13}+12,3)) +n_{13}( -f(n_{13}+8,4)+f(n_{13}+12,4))  \nonumber \\
&&+4( -f(n_{13}+8,3)+f(n_{13}+12,4))  +2( -f(n_{13}+8,3)+f(n_{13}+12,3)), \nonumber
\eeq 
is an upper bound on (\ref{thm-noB2-200ah}).
The function $g(n_{13})$ is maximal when $n_{13} \to \infty$, and $\lim_{n_{13} \to \infty} g(n_{13})$ $=$ $-0.0362243$.

\smallskip
\noindent
{\bf Subcase $B.1.2$.} $w_1$ is a root vertex.

\noindent
Due to the fact that $w_1$ and $w_2$ are the only vertices that are parent to
$B_2$-branches and by Theorems~\ref{thm-DS},~\ref{te-no5branches-10} and Lemma~\ref{lemma-15}, it follows
that $w_1$ beside $B_2$-branches, may have only $B_3$-branches as children. 
Notice that if $n_{12}+n_{22} \leq 11$, the theorem in this case holds, so we assume that $n_{12}+n_{22} \geq 12$,
and we can apply the transformation $\mathcal{T}_5$ from  Figure~\ref{fig-te-numberB2-50}.
Since $n_{22} \leq 6$, we consider further the cases when $12 -n_{22}  \leq n_{12} \leq 11$.

\smallskip
\noindent
{\bf Subcase $B.1.2.1$.} $n_{12} =6$.

\noindent
Here we apply the same transformation as in Subcase $B.1.1$.
The only difference here is that there is $w_1$ does not have a parent vertex,
and therefore the expression $-f(d(z),d(w_1))+f(d(z), d(w_1)+4))$  is not
included in (\ref{thm-noB2-200ah}), and  $d(w_1)=n_{13}+n_{12}+1=n_{13}+7$. Thus, the change of the ABC index in this case
is bounded from above by
\beq \label{thm-noB2-210b}
g(n_{13})&=&  -f(n_{13}+7,8)+f(2,1)-f(8,3)+f(4,2)  \nonumber \\
&&+5( -f(8,3)+f(n_{13}+11,3)) +n_{13}( -f(n_{13}+7,4)+f(n_{13}+11,4))  \nonumber \\
&&+4( -f(n_{13}+7,3)+f(n_{13}+11,4))  +2( -f(n_{13}+7,3)+f(n_{13}+11,3)). \nonumber
\eeq 
The function $g(n_{13})$ is negative for $n_{13} \geq 0$, it is maximal when $n_{13} \to \infty$, and $\lim_{n_{13} \to \infty} g(n_{13})$ $=$ $-0.0362242$.
%
%

\smallskip
\noindent
{\bf Subcase $B.1.2.2$.} $7  \leq n_{12} \leq 11$.

\noindent
After applying $\mathcal{T}_5$, 
the degree of the vertex $w_1$ increases by $n_{22}-2$, 
the degree of the vertex $w_2$ decreases to $1$,
one child vertex of $w_2$ decreases its degree from $3$ to $2$, and
four children vertices of $w_1$ increase there degrees from $3$ to $4$.
The change of the ABC index is bounded from above by
\beq \label{thm-noB2-200a}
 && -f(d(w_1),d(w_2))+f(2,1)  -f(d(w_2),3)+f(4,2)  \nonumber \\
&& +(n_{22}-1)( -f(d(w_2),3)+f(d(w_1)+n_{22}-2,3))   \nonumber \\
&&+n_{13}( -f(d(w_1),4)+f(d(w_1)+n_{22}-2,4))   \nonumber \\
&& +n_{21}(-f(d(w_1),3)+f(d(w_1)+n_{22}-2,4))  \nonumber \\
&&+ 3(-f(d(w_1),3)+f(d(w_1)+n_{22}-2,4))  \nonumber \\
&&+(n_{12}+n_{22}-n_{21}-4)( -f(d(w_1),3)+f(d(w_1)+n_{22}-2,3)).
\eeq 
Considering that $d(w_1)=n_{13}+n_{12}+1$ and $d(w_2)=n_{22}+n_{21}+1$, we can write
(\ref{thm-noB2-200a}) as
\beq \label{thm-noB2-200ab}
 g(n_{13},n_{12}, n_{22}, n_{21}) && -f(d(w_1),d(w_2))+f(2,1)  -f(d(w_2),3)+f(4,2)  \nonumber \\
&& +(n_{22}-1)( -f(d(w_2),3)+f(d(w_1)+n_{22}-2,3))   \nonumber \\
&&+n_{13}( -f(d(w_1),4)+f(d(w_1)+n_{22}-2,4))   \nonumber \\
&& +n_{21}(-f(d(w_1),3)+f(d(w_1)+n_{22}-2,4))  \nonumber \\
&&+ 3(-f(d(w_1),3)+f(d(w_1)+n_{22}-2,4))  \nonumber \\
&&+(n_{12}+n_{22}-n_{21}-4)( -f(d(w_1),3)+f(d(w_1)+n_{22}-2,3)). \nonumber
\eeq 
By Prepositions~\ref{pro-B1-10-00} and~\ref{pro-B2-30} we have additional constrains on $n_{13},n_{12}, n_{22}$, and $n_{21}$.
Namely, it holds that $7 \leq n_{12} \leq 9$, $0 \leq n_{13} \leq 4$, $1 \leq n_{22} \leq 6$, and $1 \leq n_{21} \leq 4$.
Also, it holds that  $n_{12} + n_{13} \leq 11$ and $n_{22} + n_{21} \leq 7$.
For all feasible values of the parameters $n_{13},n_{12}, n_{22}$, and $n_{21}$, 
the function $g(n_{13},n_{12}, n_{22}, n_{21})$ obtain the maximal value of $-0.0640574=g(0,9,3,1)$.

Observe that with the above possible values of the parameters $n_{22}, n_{21}, n_{13}$ and $n_{13}$
the graph $G$ has strictly less than $134$ vertices.
In~\cite{d-ectmabci-2013} all minimal-ABC trees with up to $300$ were computed,
and no of them has the structure of $G$.

\bigskip
\noindent
{\bf Subcase $B.2$.} $w_1$ is not parent of $w_2$.

\noindent
%
If $n_{12}> 6$ or $n_{22}> 6$ we can apply transformations from Lemma~\ref{lemma-B2-20}, so that afterwards we obtain $n_{12} \leq 6$ and $n_{22} \leq 6$.
The theorem holds, if $n_{22} \leq 6$ and $n_{12} \leq 5$, or $n_{22} \leq 5$ and $n_{12} \leq 6$, the theorem holds.
By Proposition~\ref{pro-B1-10-00}, if $n_{22}=6$, $w_2$ may have at most one $B_1$-branch as a child. If $w_2$ does not have a $B_1$-branch as a child,
we apply the transformations from Case~$1$ of this proof, and the proof is completed.
If $w_2$ has one $B_1$-branch as a child, and $n_{22} = 6$ and $n_{12} = 6$, 
then we proceed with the transformation $\mathcal{T}_{6}$ illustrated in  Figure~\ref{fig-te-numberB2-60}.
\begin{figure}[h]
\begin{center}
\includegraphics[scale=0.75]{./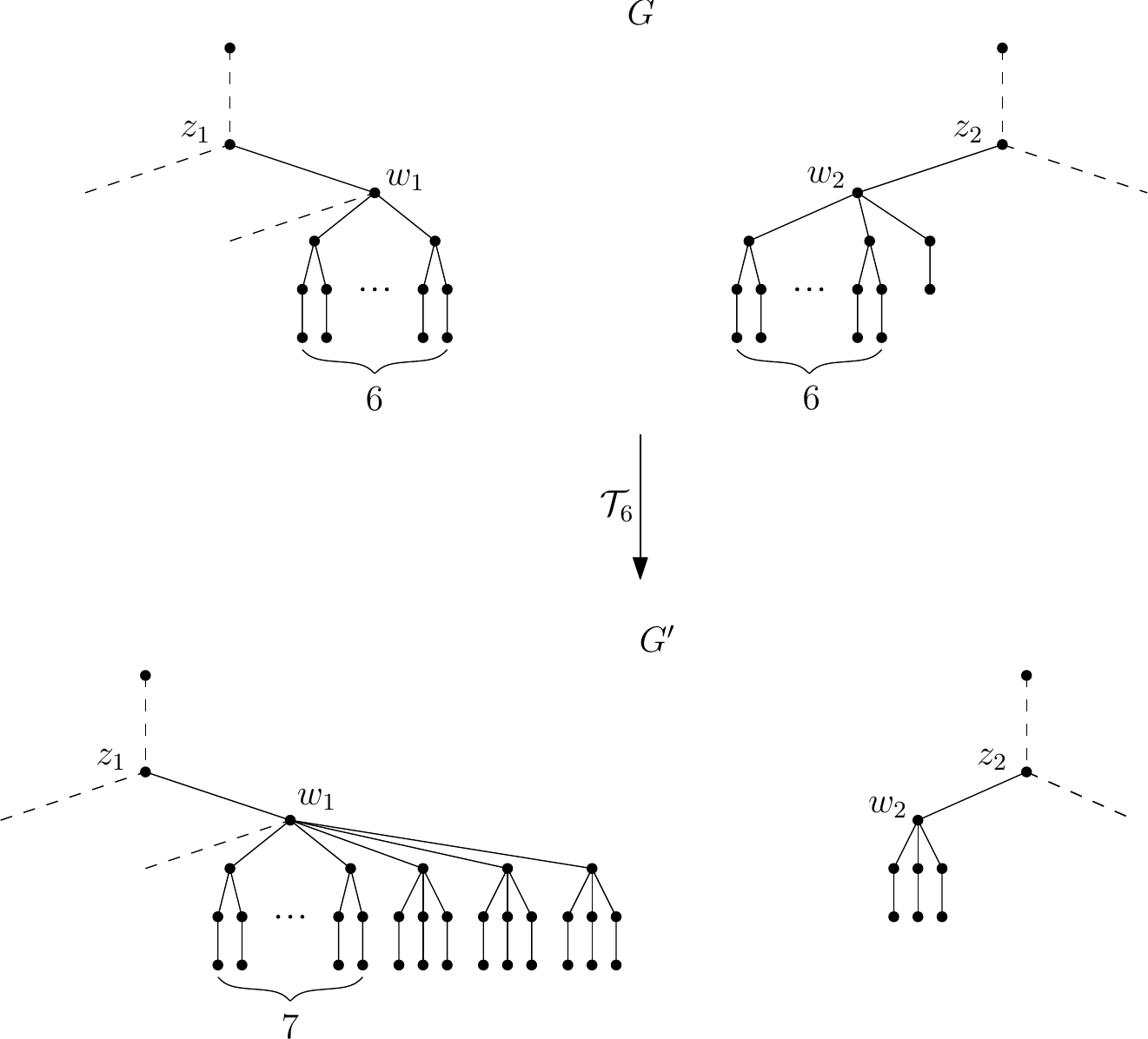}
\caption{An illustration of the transformations $\mathcal{T}_6$  from the proof of Theorem~\ref{te-no2branches-10a}, Subcase $B.2$.}
\label{fig-te-numberB2-60}
\end{center}
\end{figure}
After applying $\mathcal{T}_6$, 
the degree of the vertex $w_1$ increases by $4$, 
the degree of the vertex $w_2$ decreases from $8$ to $4$,
two children vertices of $w_2$ decrease their degrees from $3$ to $2$ and $1$, respectively.
The rest of the vertices do not change their degrees.
Thus, the change of the ABC index is smaller than
\beq \label{thm-noB2-300a}
&& -f(8,d(z_2))+f(4,d(z_2)) -f(8,3)+f(2,1) -f(8,3)+f(4,2)  \nonumber \\
&&+ 3(-f(8,3)+f(d(w_1)+4,4)) -f(d(w_1),3)+f(d(w_1)+4,4))  \nonumber \\
&&+ 6(-f(d(w_1),3)+f(d(w_1)+4,3)) .
\eeq 
By Proposition~\ref{appendix-pro-030-2}   $-f(8,d(z_2))+f(4,d(z_2))$ is maximal for $d(z_2) \to \infty$.
Thus,
\beq \label{thm-noB2-400a}
g(d(w_1))&=&\lim_{d(z_2) \to \infty} (-f(8,d(z_2))+f(4,d(z_2))) -f(8,3)+f(2,1) -f(8,3)+f(4,2)  \nonumber \\
&&+ 3(-f(8,3)+f(d(w_1)+4,4)) -f(d(w_1),3)+f(d(w_1)+4,4))  \nonumber \\
&&+ 6(-f(d(w_1),3)+f(d(w_1)+4,3)), \nonumber
\eeq 
is an upper bound on (\ref{thm-noB2-300a}).
The function $g(d(w_1))$ obtains its maximum of $-0.0154895$ for $d(w_1)=11$.
Observe that $g(d(w_1))$ is independent on $d(z_1)$, so the proof holds for any value of $d(z_1)$,
including the case $z_1 = z_2$.
\end{proof}

\section[Conclusion]{Conclusion}\label{sec:Conclusion}

The main contributions of this work are the upper bounds on the 
number of $B_1$ and $B_2$-branches in the minimal-ABC trees
presented in Theorems~\ref{theorem-B1-10} and~\ref{te-no2branches-10a}, respectively.
The theorems state that a minimal-ABC tree may have at most four  $B_1$-branches and at most
eleven $B_2$-branches.
Morover, it was shown that these two bounds are attained in special cases that cannot occur simultaneously.
Based on the experimental results~\cite{d-ectmabci-2013} and 
obtained (counter)examples of minmal-ABC trees \cite{gfahsz-abcic-2013, d-ectmabci-2013, adgh-dctmabci-14},
it is very likely that here presented upper bounds on $B_1$ and $B_2$-branches 
are not sharp, although they are quite close to the conjectured sharp bounds below.

\begin{conjecture}
 A minimal-ABC tree can contain at most three $B_1$-branches.
\end{conjecture}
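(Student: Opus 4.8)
The plan is to sharpen Theorem~\ref{theorem-B1-10} by eliminating its one unresolved case. By that theorem the bound ``at most three'' already holds when $G$ is a $T_k$-tree, so suppose $G$ is a minimal-ABC tree that is \emph{not} a $T_k$-tree and contains exactly four $B_1$-branches. Since $B_1$-branches occur only inside proper $T_k$-branches and, by Proposition~\ref{pro-terminal-branches-10}, $G$ has at most one such branch, all four $B_1$-branches are direct children of one vertex $u$, the root of the unique proper $T_k$-branch $T$; let $w$ be the parent of $u$ in $G$ (it exists precisely because $G$ is not a $T_k$-tree). By Theorems~\ref{thm-DS}, \ref{te-no5branches-10} and Lemma~\ref{lemma-15}$(a)$, the remaining children of $u$ are $B_1$-, $B_2$- or $B_3$-branches, and Propositions~\ref{pro-B1-10-02} and~\ref{pro-B1-10-00} bound $d(u)$ from above by a small constant and rule out several of the possible mixes; in particular the pure $B_1$/$B_2$ subcase survives only for very small $d(u)$.

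For the surviving configurations I would argue as in the proofs of Lemma~\ref{lemma-B1-10} and Theorem~\ref{theorem-B1-10}, Case~$1$: apply a local transformation $\mathcal{T}$ that detaches one or two of the $B_1$-legs from $u$ and reattaches them either to a $B_3$-child of $u$ (the most favourable target) or, failing that, redistributes them among $u$'s children in the manner of Figure~\ref{fig-B1-2}, so that $d(u)$ decreases by one or two. The resulting change of the ABC index is a sum of terms $-f(d(u),\cdot)+f(d(u)-\delta,\cdot)$ together with terms at the affected low-degree vertices; bounding each summand with Propositions~\ref{appendix-pro-030} and~\ref{appendix-pro-030-2}, sending $d(w)\to\infty$ and raising the degrees of the other children of $u$ to their extreme value $4$, reduces the estimate to an explicit univariate function of $d(u)$, which one shows to be monotone on the relevant finite range and then checks to be negative at the endpoints. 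This contradicts the minimality of $G$. A handful of genuinely small trees left over at the end of this case analysis can be dispatched by the exhaustive computations of~\cite{d-ectmabci-2013, lccglc-fcstmaibtds-14}.

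The obstacle --- and the reason the statement is stated here only as a conjecture --- is the extreme case in which $u$ carries four $B_1$-branches and little else (for instance four $B_1$- and one $B_2$-branch, forcing $d(u)$ as low as $6$): the gain of any one-step transformation near $u$ is then tiny and may even be positive. To handle it one must use more of the tree above $w$: either design a two-stage transformation that first enlarges a suitable ancestor of $u$ or a $B_3$-child and only then moves the $B_1$-legs, at the cost of a sizeable case analysis over the triples $(d(u),d(w),\text{child mix of }u)$, or else derive a direct contradiction from Theorems~\ref{thm-DS} and~\ref{thm-GFAS} by showing that such a tight local pattern is incompatible with the greedy-tree structure imposed on the rest of $G$. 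In contrast to the passage from five to four $B_1$-branches, the passage from four to three is not witnessed by a single obvious local move, and it is this bookkeeping of transformations and sub-configurations that the authors leave unfinished.
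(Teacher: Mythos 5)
There is a genuine gap here, and you have in fact named it yourself. The statement you were asked to prove is stated in the paper only as a conjecture: the paper proves the weaker bound of at most four $B_1$-branches (Theorem~\ref{theorem-B1-10}, via Lemma~\ref{lemma-B1-10}, Lemma~\ref{lemma-B1-10-2} and Propositions~\ref{pro-B1-10-02} and~\ref{pro-B1-10-00}) and explicitly leaves the passage from four to three open. Your first paragraph is a correct reconstruction of that published reduction --- localizing all $B_1$-branches at a single terminal vertex $u$ via Proposition~\ref{pro-terminal-branches-10}, bounding $d(u)$ and the admissible child mixes, and estimating the effect of a transformation of the type in Figure~\ref{fig-B1-2} by Propositions~\ref{appendix-pro-030} and~\ref{appendix-pro-030-2}. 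But your second and third paragraphs then concede that the decisive configuration (four $B_1$-branches with $d(u)$ small, e.g.\ four $B_1$- and one $B_2$-branch) is not eliminated by any of these one-step moves, and you only gesture at a ``two-stage transformation'' or an appeal to Theorem~\ref{thm-GFAS} without carrying either out. That unresolved case \emph{is} the conjecture; what you have written is an accurate diagnosis of why it is hard, not a proof.

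One further concrete flaw in the proposed closing step: you suggest dispatching the leftover small configurations by the exhaustive computations of~\cite{d-ectmabci-2013, lccglc-fcstmaibtds-14}. Those searches enumerate minimal-ABC \emph{trees} up to a fixed order ($300$, resp.\ $350$), so they can only exclude a configuration when it constitutes the entire tree (this is how the paper uses them in Proposition~\ref{pro-B2-30} and in Subcase~$B.1.2.2$ of Theorem~\ref{te-no2branches-10a}, where the whole tree is shown to have fewer than $66$, resp.\ $134$, vertices). In your setting the problematic $T_k$-branch is a proper subtree whose parent $w$ may have arbitrarily large degree inside an arbitrarily large ambient tree, so no finite computation of whole minimal-ABC trees can rule it out. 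Any genuine proof of the three-branch bound would have to supply a transformation whose ABC-decrease is certified uniformly in $d(w)$ and in the structure of $G$ above $w$, and that is exactly what is missing.
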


\begin{conjecture}
 A minimal-ABC tree can contain at most nine $B_2$-branches.
\end{conjecture}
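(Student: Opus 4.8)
The plan is to follow the overall architecture of the proof of Theorem~\ref{te-no2branches-10a}, but to sharpen every quantitative threshold so that the admissible count drops from eleven to nine. Recall that the bound of eleven is assembled from two ingredients: a single-vertex bound (Lemma~\ref{lemma-B2-30}, at most eleven $B_2$-branches at an arbitrary parent and at most ten at the root) together with a consolidation argument that pushes all $B_2$-branches onto at most two parent vertices $w_1$ and $w_2$ and then, in the decisive subcases, onto a single vertex. To reach nine it suffices to prove a strengthened single-vertex statement --- that no vertex (root or not) can be the parent of more than nine $B_2$-branches --- and to re-examine the two-parent consolidation so that totals of ten and eleven are also excluded, not merely twelve.

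First I would re-tune the transformation $\mathcal{T}_{11}$ of Lemma~\ref{lemma-B2-30}. There the change in the ABC index is driven by promoting five children from degree $3$ to degree $4$ while lowering $d(w)$ by two; the threshold $d(w)=13$ (hence $n_2=12$, i.e.\ at most eleven branches) is exactly the smallest degree making expression~(\ref{change-B2-66}) negative. I would instead promote more children, or promote selected children to degree $5$ as in the transformation $\mathcal{T}_{221}$, and lower $d(w)$ by a larger amount, so that the analogous closed form becomes negative already at $d(w)=11$, yielding $n_2\le 9$. The monotonicity tools are unchanged: Propositions~\ref{appendix-pro-030} and~\ref{appendix-pro-030-2} still let me replace every ancestor degree by its extreme value and reduce each case to a one-variable function of $d(w)$ whose sign is settled by a derivative computation as in~(\ref{change-B2-70}) and~(\ref{change-B2-80}).

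Next I would redo the consolidation. After the three cases that merge the $B_2$-branches onto $w_1$ and $w_2$ (exactly as in Theorem~\ref{te-no2branches-10a}), the refined single-vertex bound gives $n_{12},n_{22}\le 6$ and should force $n_{12}+n_{22}\le 9$ in all but finitely many configurations. The residual combinations --- for instance $n_{22}=6$ together with $n_{12}\in\{4,5\}$ when $w_1$ is the root and carries additionally several $B_3$-branches --- would be excluded by new transformations of the type $\mathcal{T}_5$ and $\mathcal{T}_6$, sharpened so that their ABC change is negative already at total ten. Any configuration that survives the analytic bounds has bounded order, as in Proposition~\ref{pro-B2-30} and the closing subcases of Theorem~\ref{te-no2branches-10a}; these I would dispatch by appealing to the exhaustive computation of all minimal-ABC trees up to $350$ vertices in~\cite{lccglc-fcstmaibtds-14}, none of which carries ten or more $B_2$-branches.

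The hard part will be the case in which the accumulating vertex $w_1$ is the root and is padded by an \emph{unbounded} number of $B_3$-branches, so that $d(w_1)\to\infty$ and the finite computer search does not apply. Here the sharpened single-vertex transformation must keep its ABC change strictly negative at $n_2=10$ uniformly in $d(w_1)$, and near this threshold the relevant one-variable functions hug zero, so controlling their sign requires a careful derivative estimate rather than a mere evaluation at the smallest degree. Designing a transformation whose decrement survives this limit --- rather than degrading to the break-even value that produced the weaker bound eleven --- is the principal obstacle; if no such purely local transformation exists, the conjecture may genuinely require a global rearrangement argument that simultaneously touches several branches, which is why the authors leave it open.
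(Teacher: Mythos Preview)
The statement you are addressing is a \emph{conjecture} (the second conjecture in Section~\ref{sec:Conclusion}), not a theorem; the paper does not prove it. The authors establish only the weaker bound of eleven (Theorem~\ref{te-no2branches-10a}) and state the bound of nine as an open problem supported by computational evidence from~\cite{d-ectmabci-2013}. There is therefore no proof in the paper to compare your proposal against.

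Your outline is a reasonable attack plan---sharpen the single-vertex bound of Lemma~\ref{lemma-B2-30} and then re-run the consolidation argument of Theorem~\ref{te-no2branches-10a}---and you correctly isolate the principal obstacle: when the accumulating vertex $w_1$ is the root and carries arbitrarily many $B_3$-branches, $d(w_1)\to\infty$ and the local transformations $\mathcal{T}_{11}$, $\mathcal{T}_5$, $\mathcal{T}_6$ lose just enough margin that the break-even degree is $13$ (or $11$ at the root) rather than $11$ (or $10$), and no finite computer check can close that gap. But what you have written is a research programme, not a proof: you do not exhibit any concrete transformation whose ABC-decrement is strictly negative at $n_2=10$ uniformly in $d(w_1)$, and you explicitly concede that such a purely local transformation may not exist. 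Until that step---or the ``global rearrangement'' you allude to---is actually carried out, the conjecture remains open, which is precisely how the paper leaves it.
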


However, Theorem~\ref{theorem-B1-10} also states that  if a minmal-ABC tree is a $T_k$-branch itself, then 
it can contain at most three $B_1$-branches. This is the best possible bound since the minimal-ABC trees with  $14$ and $19$ vertices 
contain three $B_1$-branches \cite{gfi-ntmabci-12, d-ectmabci-2013}.

The results presented here, together with Theorems~\ref{te-no5branches-10} and~\ref{thm-20}
from \cite{d-sptmabci-2014}, show that beside a very small number of $B_1$, $B_2$ and $B_4$-branches,
the minimal-ABC trees are comprised of $B_3$-branches and additional number of internal vertices.
This goes in line with Conjecture~$3.2.$  \cite{adgh-dctmabci-14}, which states that  
enough large minimal-ABC trees are comprised exclusively of  $B_3$-branches and internal vertices.


%
%

%

\end{document}